%
%
%



\documentclass[submission, Phys]{SciPost}

\usepackage{amsmath,amsthm}
\usepackage{bm}

\usepackage{tikz}
\usepackage{tikz-qtree}

\binoppenalty=10000
\relpenalty=10000

\hypersetup{
    colorlinks,
    linkcolor={red!50!black},
    citecolor={blue!50!black},
    urlcolor={blue!80!black}
}

\usepackage[bitstream-charter]{mathdesign}
\urlstyle{sf}

\DeclareSymbolFont{usualmathcal}{OMS}{cmsy}{m}{n}
\DeclareSymbolFontAlphabet{\mathcal}{usualmathcal}

\usepackage[scr=boondoxo]{mathalfa}

\newtheorem{thrm}{Theorem}[section]
\newtheorem{prop}[thrm]{Proposition}

\newtheorem{lemma}[thrm]{Lemma}

\newtheorem{defn}[thrm]{Definition}

\theoremstyle{remark}
\newtheorem{rmk}[thrm]{Remark}

\newcommand{\nc}{\newcommand}

\nc{\al}{\alpha}
\nc{\be}{\beta}
\nc{\eps}{\epsilon}
\nc{\veps}{\varepsilon}
\nc{\ga}{\gamma}
\nc{\Ga}{\Gamma}
\nc{\ka}{\kappa}
\nc{\vka}{\varkappa}
\nc{\la}{\lambda}
\nc{\La}{\Lambda}
\nc{\del}{\delta}
\nc{\om}{\omega}
\nc{\si}{\sigma}
\nc{\vsi}{\varsigma}
\nc{\Ups}{\upsilon}
\nc{\vphi}{\varphi}

\nc{\ud}{\underline}
\nc{\tl}{\tilde}

\nc{\mfg}{\mathfrak{g}}
\nc{\mfh}{\mathfrak{h}}
\nc{\mfso}{\mathfrak{so}}
\nc{\mfsp}{\mathfrak{sp}}
\nc{\mfgl}{\mathfrak{gl}}

\nc{\mfL}{\mathfrak{L}}
\nc{\mcL}{\mathcal{L}}

\nc{\mf}{\mathfrak}
\nc{\mc}{\mathcal}
\nc{\ms}{\mathsf}
\nc{\bb}{\mathbb}
\nc{\mr}{\mathscr}

\nc{\ot}{\otimes}
\nc{\op}{\oplus}
\nc{\ol}{\overline}
\nc{\un}{\underline}
\nc{\wh}{\widehat}
\nc{\wt}{\widetilde}

\nc{\lan}{\langle}
\nc{\ran}{\rangle}

\nc{\lp}[1]{\ell^+_{#1}}
\nc{\lm}[1]{\ell^-_{#1}}
\nc{\lpm}[1]{\ell^\pm_{#1}}
\nc{\lmp}[1]{\ell^\mp_{#1}}

\nc{\qu}{\quad}
\nc{\qq}{\qquad}

\nc{\tx}[1]{\qu\text{#1}\qu}

\nc{\R}{\mathbb{R}}
\nc{\Q}{\mathbb{Q}}
\nc{\C}{\mathbb{C}}
\nc{\N}{\mathbb{N}}
\nc{\Z}{\mathbb{Z}}

\nc{\End}{\mathrm{End}}
\nc{\Hom}{\mathrm{Hom}}

\DeclareMathOperator{\sign}{sign}
\DeclareMathOperator{\hot}{\hat\otimes}
\DeclareMathOperator{\tr}{tr}


\nc{\equ}[1]{\begin{equation}#1\end{equation}}
\nc{\eqa}[1]{\begin{equation}\begin{alignedat}{50}#1\end{alignedat}\end{equation}}
\nc{\eqn}[1]{\begin{equation*}\begin{alignedat}{50}#1\end{alignedat}\end{equation*}}
\nc{\eqg}[1]{\begin{equation}\begin{gathered}#1\end{gathered}\end{equation}}

\nc{\ali}[1]{\begin{alignat}{50}#1\end{alignat}}
\nc{\als}[1]{\begin{subequations}\begin{alignat}{50}#1\end{alignat}\end{subequations}}
\nc{\aln}[1]{\begin{alignat*}{50}#1\end{alignat*}}

\nc{\gat}[1]{\begin{gather}#1\end{gather}}
\nc{\gas}[1]{\begin{subequations}\begin{gather}#1\end{gather}\end{subequations}}
\nc{\gan}[1]{\begin{gather*}#1\end{gather*}}

\nc\el{\nonumber\\}
\nc\nn{\nonumber}

\nc{\Res}[1]{\underset{\;{#1}\;}{\rm Res}}

\nc{\ta}{\dot a}
\nc{\tb}{\dot b}
\nc{\tc}{\dot c}

\nc{\tta}{\ddot a}

\nc{\uu}{{u\hspace{-.2em}u }}
\nc{\buu}{{\bar u\hspace{-.2em}\bar u}}

\usepackage{geometry}
 \geometry{ a4paper, top=12mm,bottom=12mm,left=30mm,right=30mm,head=12mm,includeheadfoot}



\begin{document}

\begin{center}{\Large \textbf{
Algebraic Bethe Ansatz for spinor R-matrices
}}\end{center}

\begin{center}
Vidas Regelskis\textsuperscript{12$\star$}
\end{center}

\begin{center}
{\bf 1} Department of Physics, Astronomy and Mathematics, University of Hertfordshire,\\ Hatfield AL10 9AB, UK, and
\\
{\bf 2} Institute of Theoretical Physics and Astronomy, Vilnius University,\\ Saul\.etekio av.~3, Vilnius 10257, Lithuania
\\
${}^\star$ {\small \sf vidas.regelskis@gmail.com}
\end{center}

\begin{center}
\today
\end{center}


\section*{Abstract}
{\bf
We present a supermatrix realisation of $q$-deformed spinor-spinor and spinor-vector $R$-matrices. These $R$-matrices are then used to construct transfer matrices for $U_{q^2}(\mf{so}_{2n+1})$- and $U_{q}(\mf{so}_{2n+2})$-symmetric closed spin chains. Their eigenvectors and eigenvalues are computed. 
}

\vspace{10pt}
\noindent\rule{\textwidth}{1pt}
\setcounter{tocdepth}{1}
\tableofcontents\thispagestyle{fancy}
\setcounter{tocdepth}{2}
\noindent\rule{\textwidth}{1pt}
\vspace{10pt}

\numberwithin{equation}{section}


\section{Introduction}
\label{sec:intro}

In \cite{Rsh91}, Reshetikhin proposed a method of diagonalizing spin chain transfer matrices that obey quadratic relations defined by the $\mfso_{2n+1}$- and $\mfso_{2n}$-invariant spinor-spinor $R$-matrices. The key observation was that these matrices exhibit a nested six-vertex type structure thus allowing one to apply principles of the XXX Bethe ansatz at each level of the nesting. In the $\mfso_{2n+1}$-invariant case the nesting truncates at the $\mfso_{3}$-invariant spinor-spinor $R$-matrix which is equivalent to the well known Yang's $R$-matrix of the XXX spin chain. In the $\mfso_{2n}$-invariant case the nesting truncates at the $\mfso_{4}$-invariant spinor-spinor $R$-matrix which factorises into a tensor product of two Yang's $R$-matrices. 
It is important to note that the Lie algebra $\mfso_{2n}$ has two spinor representations specified by the chirality property. As a consequence, there are four $\mfso_{2n}$-invariant spinor-spinor $R$-matrices indexed by chirality of the corresponding spinor representations thus adding extra difficulties to the nesting procedure.

This diagonalization procedure was recently addressed in a new perspective in \cite{KK20} by Karakhanyan and Kirschner. An important novelty in their work was that the spinor-spinor $R$-matrices were written in terms of the Euler Beta function rather than in terms of recurrent relations presented in \cite{Rsh91} (see also \cite{CDI13}). The authors provided explicit examples of spinor-spinor $R$-matrices of low rank and commented on the corresponding cases of the algebraic Bethe ansatz. Similar spectral problems were also addressed by Reshetikhin in \cite{Rsh85}, De Vega and Karowski in \cite{DVK87}, Babujian, Foerster and Karowski in \cite{BFK12,BFK16}, Ferrando, Frassek and Kazakov in \cite{FFK20}, Liashyk and Pakuliak in \cite{LP20}, and Gerrard together with the author in \cite{GR20}.

In the present paper we address the long-standing problem of diagonalizing transfer matrices that obey quadratic relations defined by the $q$-deformed $\mfso_{2n+1}$- and $\mfso_{2n}$-invariant spinor-spinor $R$-matrices. We propose a new construction of spinor-spinor and spinor-vector $R$-matrices in terms of supermatrices (this replaces gamma matrices used in \cite{Rsh91} and \cite{KK20}) and provide explicit recurrence relations. These $R$-matrices are then used to construct spinor-type transfer matrices for $U_{q^2}(\mfso_{2n+1})$- and $U_q(\mfso_{2n})$-symmetric spin chains with twisted diagonal periodic boundary conditions. The deformation parameter in the former case is set to $q^2$ to avoid having $\sqrt{q}$ in the spinor-spinor $R$-matrix and the corresponding exchange relations. The square root of the deformation parameter arises because the root system of $\mfso_{2n+1}$ has a short simple root. We then employ algebraic Bethe ansatz techniques similar to those in \cite{Rsh91} to construct Bethe vectors and derive the corresponding Bethe ansatz equations. Our main results are stated in Theorems \ref{T:B}, \ref{T:D2} and \ref{T:D}.

The paper is organised as follows. Section \ref{sec:R-mat} is devoted to the spinor $R$-matrices and various associated identities. Sections \ref{sec:B} and \ref{sec:D} contain the main results of the paper, diagonalization of the spinor-type transfer matrices. 
In Appendix \ref{app:A}, we provide the semi-classical $q \rightarrow 1$ limit of the main results of this paper.


\section{Spinor $R$-matrices} \label{sec:R-mat}


\subsection{Matrices and supermatrices} 

Consider vector space $\C^N$ with $N\ge 3$. We will denote the standard basis vectors of $\C^N$ by $e_i$ and the standard matrix units of $\End(\C^N)$ by $e_{ij}$ where indices $i,j$ are allowed to run from $-n$ to $n$ with $n = N \div 2$, and $0$ will only be included when $N$ is odd. We will use $\ot$ to denote the usual tensor product over $\C$. 

Next, consider vector superspace $\C^{1|1}$ with basis vectors $e^{(1)}_{-1}$ and $e^{(1)}_{+1}$. We will denote the standard matrix superunits of $\End(\C^{1|1})$ by $e^{(1)}_{ij}$ where $i,j=\pm1$. We define a $\Z_2$-grading on $\C^{1|1}$ by $\deg(e^{(1)}_i) = (1+i)/2$, and on $\End(\C^{1|1})$ by $\deg(e^{(1)}_{ij}) = (1-ij)/2$. We also define a mapping $\ga$ on $\End(\C^{1|1})$ via $\ga(e^{(1)}_{ij}) = ij\,e^{(1)}_{ij}$.

For any $n\ge 2$ we set $\C^{n|n} := (\C^{1|1})^{\hat\ot\,n}$ where $\hot$ denotes a graded tensor product over $\C$, that is
\equ{
(1 \hot e^{(1)}_{j})(e^{(1)}_{i} \hot 1) = (-1)^{\deg(e^{(1)}_{j}) \deg(e^{(1)}_{i})} \, e^{(1)}_{i} \hot  e^{(1)}_{j} . \label{factor0}
}
We will write matrix superunits of $\End(\C^{n|n})$ as
\[
e^{(n)}_{\bm i\bm j} := e^{(1)}_{i_1j_1} \hot \cdots \hot e^{(1)}_{i_n j_n} \qu\text{with}\qu \bm i,\bm j \in (\pm1,\ldots,\pm 1) .
\]
The degree of $e^{(n)}_{\bm i\bm j}$ is $\deg(e^{(n)}_{\bm i\bm j}) = (1-\theta_{\bm i \bm j})/2$ and $\ga(e^{(n)}_{\bm i\bm j}) = \theta_{\bm i \bm j}\,e^{(n)}_{\bm i\bm j}$ where $\theta_{\bm i \bm j} = \theta_{\bm i} \theta_{\bm j}$ with $\theta_{\bm i} = i_1 i_2 \cdots i_n$. 
We~will write supermatrices in $\End(\C^{n|n})$ as 
\[
A^{(n)} = \sum_{\bm i, \bm j} a_{\bm i \bm j} \, e^{(n)}_{\bm i \bm j} := \sum_{i_1,j_1,\ldots,i_n,j_n = \pm1} a_{i_1,j_1,\ldots,i_n,j_n}\, e^{(1)}_{i_1 j_1} \hot \cdots \hot e^{(1)}_{i_n j_n}
\]
where $a_{i_1,j_1,\ldots,i_n,j_n} \in \C$ are the matrix entries of $A^{(n)}$. 
In will be often convenient to write supermatrices in a nested form
\equ{
A^{(n)} = \sum_{i,j=\pm1} \big[A^{(n)}\big]_{ij} \hot e^{(1)}_{ij}
\label{factor}
}
where $\big[A^{(n)}\big]_{ij}\in \End(\C^{n-1|n-1})$ are sub-supermatrices of $A^{(n)}$ given by
\[
\big[A^{(n)}\big]_{ij} = \sum_{i_1,j_1,\ldots,i_{n-1},j_{n-1} = \pm1 } a_{i_1,j_1,\ldots, i_{n-1},j_{n-1}, i,j}\, e^{(1)}_{i_1j_1} \hot \cdots \hot e^{(1)}_{i_{n-1}j_{n-1}} . 
\]
We will sometimes adopt the notation 
\aln{
A^{(n-1)} &:= [A^{(n)}]_{-1,-1} , \qq &
B^{(n-1)} &:= [A^{(n)}]_{-1,+1} , \\
C^{(n-1)} &:= [A^{(n)}]_{+1,-1} , \qq &
D^{(n-1)} &:= [A^{(n)}]_{+1,+1} ,
}
which will be used to denote the A, B, C, and D operators of the algebraic Bethe ansatz.

For any non-zero scalar $q$ we define a graded $q$-transposition $w$ on $\End(\C^{n|n})$ via 
\equ{
(e^{(n)}_{\bm i \bm j})^w := \theta_{\bm i \bm j}\, q^{\vartheta_{\bm i} - \vartheta_{\bm j}} \, \ol{e^{(n)}_{-\bm j, -\bm i}} \label{w1}
}
where $\vartheta_{\bm i} = \sum_{p=1}^n (p-\tfrac12)\,i_p$ and the overline means that the order of multiplying tensorands is reversed resulting in an overall sign; for instance,
\[
\ol{e^{(2)}_{\bm i \bm j}} = \ol{e^{(1)}_{i_1j_1} \hot e^{(1)}_{i_2j_2}} = (1 \hot e^{(1)}_{i_2j_2}) (e^{(1)}_{i_1j_1} \hot 1) = (-1)^{\deg(e^{(1)}_{i_1j_1}) \deg(e^{(1)}_{i_2j_2} )} e^{(2)}_{\bm i \bm j} .
\]
The inverse of $w$ will be denoted by $\bar w$. 

We define a linear map $\chi^{(n)} : \End(\C^{n|n}) \to (\C^{n|n})^* \ot (\C^{n|n})^*$ via 
\equ{
\chi^{(n)}(e^{(n)}_{\bm i \bm j}) = c_{\bm i\bm j}\,\theta_{-\bm i}\,q^{-\vartheta_{\bm i}}\, e^{(n)*}_{-\bm i} \ot e^{(n)*}_{\bm j} \label{chi-map}
}
where $e^{(n)*}_{-\bm i}$ and $e^{(n)*}_{\bm j}$ are elementary supervectors in the dual superspaces and $c_{\bm i \bm j}$ is a grading factor defined recurrently via $c_{i_1\ldots i_n j_1 \ldots j_n} = (-i_n)^n ( (-1)^{n-1} j_1 \cdots j_{n-1} )^{\del_{i_n,-j_n}} c_{i_1\ldots i_{n-1} j_1 \ldots j_{n-1}}$ and $c_{i_1 j_1} = (-i_1)^1$.
Then, given any $X,Y,Z \in \End(\C^{n|n})$, we have that
\equ{
\chi^{(n)}(X^w YZ) = \chi^{(n)}(Y)\,(\ga(X) \ot Z) . \label{XYZ}
}

Let $V^{+(n-1)}$ and $V^{-(n-1)}$ denote the even- and odd-graded subspaces of $\C^{n|n}$, respectively. When $n=2$, the even-graded subspace $V^{+(1)} \subset \C^{2|2}$ is spanned by vectors 
\[
e^{(+)}_{-1} := e^{(1)}_{-1} \hot e^{(1)}_{-1} , \qq
e^{(+)}_{+1} := e^{(1)}_{+1} \hot e^{(1)}_{+1} ,
\]
and the odd-graded subspace $V^{-(1)} \subset \C^{2|2}$ is spanned by vectors
\[
e^{(-)}_{-1} := e^{(1)}_{+1} \hot e^{(1)}_{-1}  , \qq
e^{(-)}_{+1} := e^{(1)}_{-1} \hot e^{(1)}_{+1} .
\]
When $n\ge3$, the even-graded subspace $V^{+(n-1)} \subset \C^{n|n} \cong \C^{2|2} {}\hot{} (\C^{1|1})^{\hot (n-2)} $ is spanned by vectors 
\[
e^{(\pm)}_{i_1} {} \hot {} e^{(1)}_{i_2} \hot \cdots \hot e^{(1)}_{i_{n-1}}
\]
with $i_1,\ldots,i_{n-1}= +1,-1$ such that $i_2 \cdots i_{n-1} = \pm (-1)^{n}$. Likewise, the odd-graded subspace $V^{-(n-1)} \subset \C^{n|n}$ is spanned by vectors of the same form except that $i_2 \cdots i_{n-1} = \mp (-1)^{n}$. Here  $\pm$ and $\mp$ are linked with the plus-minus in $e^{(\pm)}_{i_1}$ stated in the formula above.

Define even- and odd-graded operators $e^{(\pm)}_{ij} \in \End(V^{\pm(1)})$ and $f^{(\pm)}_{ij} \in \Hom(V^{\pm(1)},V^{\mp(1)})$ acting on vectors $e^{(\pm)}_i$ by
\aln{
e^{(\pm)}_{ij} e^{(\pm)}_k &= \del_{jk} e^{(\pm)}_{i} , \qq &
e^{(\pm)}_{ij} e^{(\mp)}_k &= 0 , \\
f^{(\pm)}_{ij} e^{(\pm)}_k &= \del_{jk} e^{(\mp)}_{i} , \qq &
f^{(\pm)}_{ij} e^{(\mp)}_k &= 0 .
}
These operators allow us to write $A^{\pm(1)} \in \End(V^{\pm(1)})$ and $B^{\pm(1)} \in \Hom(V^{\pm(1)},V^{\mp(1)})$ as
\[
A^{\pm(1)} = \sum_{i,j=-1,+1} a_{ij}\, e^{(\pm)}_{ij} , \qq
B^{\pm(1)} = \sum_{i,j=-1,+1} b_{ij}\, f^{(\pm)}_{ij} .
\]
We will write matrix operators $A^{\pm(n)} \in \End(V^{\pm(n)})$ and $B^{\pm(n)} \in \Hom(V^{\pm(n)},V^{\mp(n)})$ when $n\ge 2$ as
\[
A^{\pm(n)} = \sum_{i,j=+1,-1} [A^{\pm(n)}]_{ij} {}\hot {} e^{(1)}_{ij} , \qq
B^{\pm(n)} = \sum_{i,j=+1,-1} [B^{\pm(n)}]_{ij} {}\hot {} e^{(1)}_{ij} 
\]
where
\aln{
& [A^{\pm(n)}]_{-1,-1}\in \End(V^{\pm(n-1)}) , \qq [A^{\pm(n)}]_{-1,+1}\in \Hom(V^{\mp(n-1)}, V^{\pm(n-1)}) , \\
& [A^{\pm(n)}]_{+1,-1}\in \Hom(V^{\pm(n-1)},V^{\mp(n-1)}) , \qq [A^{\pm(n)}]_{+1,+1}\in \End(V^{\mp(n-1)}) ,
}
and
\aln{
& [B^{\pm(n)}]_{-1,-1}\in \Hom(V^{\pm(n-1)},V^{\mp(n-1)}) , \qq [B^{\pm(n)}]_{-1,+1}\in \End(V^{\mp(n-1)}) , \\
& [B^{\pm(n)}]_{+1,-1}\in \End(V^{\pm(n-1)}) , \qq [B^{\pm(n)}]_{+1,+1}\in \Hom(V^{\mp(n-1)}, V^{\pm(n-1)}) .
}

We define a graded $q$-transposition $u\!\!u$ on $\End(V^{\pm(n)})$ and $\Hom(V^{\pm(n)},V^{\mp(n)})$ via
\equ{
(a^{(\pm)}_{ij} \hot e^{(n-1)}_{\bm k \bm l})^\uu = (a^{(\pm)}_{ij})^\uu {} \hot {} (e^{(n-1)}_{\bm k \bm l})^\uu , \label{w2}
}
where $a\in\{e,f\}$ and
\eqg{
(e^{(\pm)}_{ij})^\uu = ij\,q^{\frac12(i-j)} e^{(\pm)}_{-j,-i} , \qq
(f^{(\pm)}_{ij})^\uu = ij\,q^{\frac12(i-j)} f^{(\mp)}_{-j,-i} , \\ 
(e^{(n-1)}_{\bm k \bm l})^\uu = \theta_{\bm k \bm l}\, q^{\xi_{\bm k} - \xi_{\bm l}} \, \ol{e^{(n-1)}_{-\bm l,- \bm k}} \label{w3}
}
with $\xi_{\bm k} = \sum_{p=1}^{n-1} \tfrac12(p+1)\,k_p$. The inverse of $u\!\!u$ will be denoted by $\bar u\!\!\bar u$.

We define a linear map $\chi^{\pm(n)} : \Hom(V^{\pm(n)},V^{\mp(n)}) \to (V^{\pm(n)})^* \ot (V^{\mp(n)})^*$ via 
\equ{
\chi^{\pm(n)}(a^{(\pm)}_{ij} \hot e^{(n-1)}_{\bm k \bm l}) = -i\,q^{-\frac12 i}\,c^\pm_{\bm k\bm l}\,\theta_{-\bm k}\,q^{-\xi_{\bm k}}\, e^{(n)*}_{-\bm k} {}\hot{} a^{(\pm)*}_{-i} \ot e^{(n)*}_{\bm j}  {}\hot{} b^{(\pm)*}_{l} \label{chi^pm-map}
}
where $a\in\{e,f\}$ and $b^{(\pm)} = e^{(\pm)}$ or $f^{(\mp)}$ if $a=e$ or $f$, respectively, and $c^\pm_{\bm k \bm l}$ is defined recurrently via $c^\pm_{k_1\ldots k_{n-1} l_1 \ldots l_{n-1}} = \mp(-k_{n-1})^{n} ( - k_1 \cdots k_{n-2} l_1 \cdots l_{n-2})^{\del_{l_{n-1},\mp1}}\, c^-_{k_1\ldots k_{n-2} l_1 \ldots l_{n-2}}$ with the base case $c^\pm_{k_1 l_1} = \mp (-k_1 l_1)^{\del_{l_1,\mp1}}$. 
Then, given any $Y^{\pm} \in \Hom(V^{\pm(n)},V^{\mp(n)})$ and $X^\pm, Z^\pm \in \End(V^{\pm(n)})$, we have that
\equ{
\chi^{\pm(n)}((X^{[\mp]})^\uu\, Y^\pm Z^\pm) = \chi^{\pm(n)}(Y^\pm)\,(X^{[\mp]} \ot Z^\pm) \label{XYZ-pm}
}
where $[\mp]$ is $\mp/\pm$ if $n$ is odd/even.

Lastly, for any matrix $X$ with entries $x_{ij}$ in an associative algebra $\mc{A}$ we write
\equ{
X_s = \sum_{-n\le i,j\le n} I^{\ot s-1} \ot e_{ij} \ot I^{\ot m-s} \ot x_{ij} \in \End(\C^{N})^{\ot m} \ot \mc{A} \label{ten-op}
}
where $I$ denotes the identity matrix and $m\in \N_{\ge2}$ will always be clear from the context.
Products of matrix operators will be ordered using the following rules:
\equ{ \label{orderprod} 
\prod_{s=1}^m X_s = X_1 \, X_2 \cdots X_m  \qq \text{and} \qq \prod_{s=m}^1 X_s = X_m \, X_{m-1} \cdots X_1.  
}
The standard multi-index (``multi-legged'') generalisation of this notation will be used for both matrices and supermatrices.


\subsection{Vector-vector $R$-matrix}

Choose $q\in\R^\times$, not a root of unity, and set $\ka=N/2-1$. Introduce a matrix-valued rational function, called the vector-vector $R$-matrix, by 
\equ{
R(u,v) := R_q + \frac{q-q^{-1}}{v/u-1}\,P - \frac{q-q^{-1}}{q^{2\ka}v/u-1}\,Q_q   \label{Ru}
}
where $R_q$, $P$ and $Q_q$ are matrix operators on $\C^N\ot\C^N$ defined by
\eqg{
R_q := \sum_{-n\le i,j\le n}q^{\del_{ij}-\del_{i,-j}}e_{ii}\ot e_{jj} + (q-q^{-1}) \sum_{-n\le i<j\le n} (e_{ij}\ot e_{ji} -  q^{\nu_i-\nu_j} e_{ij}\ot e_{-i,-j}) ,
\\
P := \sum_{-n\le i,j\le n} e_{ij} \ot e_{ji} , \qq Q_q := \sum_{-n\le i,j\le n} q^{\nu_i-\nu_j}  e_{ij} \ot e_{-i,-j} ,  \label{RPQ}
}
and the $N$-tuple $\nu$ is given by
\equ{ \label{nu}
(\nu_{-n},\ldots,\nu_{n}) := 
\begin{cases}
(-n+\frac{1}{2}, -n+\frac{3}{2}, \ldots , -\frac{1}{2}, 0, \frac{1}{2}, \ldots, n-\frac{3}{2}, n-\frac{1}{2}) &\text{if }\;N=2n+1, \\
(-n+1, -n+2, \ldots , -1, 0, 0, 1, \ldots, n-2, n-1) & \text{if }\;N=2n. 
\end{cases}
}
The matrix $R(u,v)$, obtained by Jimbo in \cite{Ji86}, is a solution of the quantum Yang-Baxter equation on $(\C^N)^{\ot 3}$ with spectral parameters,
\equ{ \label{YBE}
R_{12}(u,v)\,R_{13}(u,w)\,R_{23}(v,w) = R_{23}(v,w) \,R_{13}(u,w)\,R_{12}(u,v) 
}
where we have employed the multi-index extension of the notation \eqref{ten-op}.


\subsection{Quantum loop algebra $U^{ex}_q(\mfL\mfso_N)$}

The vector-vector $R$-matrix can be used to define an extended quantum loop algebra of $\mfso_N$ in the following way (see \cite{JLM20,GRW21}).
Introduce elements $\ell^\pm_{ij}[r]$ with $-n\le i,j\le n$ and $r\in\Z_{\geq0}$, and combine them into formal series $\ell^\pm_{ij}(u)= \sum_{r\ge0} \ell^\pm_{ij}[r]\, u^{\pm r}$, and collect into generating matrices
\equ{
L^\pm(u) = \sum_{-n\le i,j\le n} e_{ij} \ot \ell^\pm_{ij}(u). \label{L(u)}
}
The elements $\lpm{ii}[0]$ are invertible, and so are the $L^\pm(u)$.
We will say that elements $\ell^\pm_{ij}[r]$ have degree $r$.

\begin{defn} \label{D:Uex:aff}
The extended quantum loop algebra $U_q^{ex}(\mfL\mfso_N)$ is the unital associative algebra with generators $\ell^\pm_{ij}[r]$ with $-n\le i,j\le n$ and $r\in\Z_{\geq0}$, subject to the following relations:%
\footnote{Our $U_q^{ex}(\mfL\mfso_N)$ corresponds to $U(\ol{R})/\lan q^c=1\ran$ in \cite{JLM20} and to $U_q^{ex}(\mfL\mfso_N)/\lan \ell^\pm_{ii}[0]\,\ell^\mp_{ii}[0] = 1\ran$ in \cite{GRW21}.}
\equ{
\ell^\pm_{ii}[0]\,\ell^\mp_{ii}[0] = 1 \quad\text{and}\quad \ell^-_{ij}[0] = \ell^+_{ji}[0] = 0 \tx{for} i<j \label{RLL0}
}
and
\eqa{ \label{RLL1}
R(u,v)\,L^\pm_1(u)\,L^\pm_2(v) &= L^\pm_2(v)\,L^\pm_1(u)\,R(u,v), \\
R(u,v)\,L^+_1(u)\,L^-_2(v) &= L^-_2(v)\,L^+_1(u)\,R(u,v) .
}
The Hopf algebra structure is given by 
\equ{ \label{Hopf}
\Delta: \ell^\pm_{ij}(u)\mapsto \sum_{k} \ell^\pm_{ik}(u)\ot \ell^\pm_{kj}(u), \qq S: L^\pm(u)\mapsto L^\pm(u)^{-1},\qq \eps: L^\pm(u)\mapsto I.
}
\end{defn}

The degree zero elements $\ell^\pm_{ij}[0]$ generate the subalgebra $U_q(\mfso_N) \subset U^{ex}_q(\mfL\mfso_N)$.
In this work we focus on the spinor representation of $U_q(\mfso_N)$ which will be used to construct spinor-spinor and spinor-vector $R$-matrices.
We will make use of the $q$-Clifford algebra realisation of $U_q(\mfso_N)$, see \cite{Ha90}.

\begin{defn} \label{D:Aq}
The $q$-Clifford algebra $\mathscr{C}^n_q$ is the unital associative algebra with generators $a_i$, $a_i^\dag$, $\om_i$, $\om_i^{-1}$ with $1\le i \le n$ satisfying 
\ali{
\om_i \om_j = \om_j \om_i, &\qq \om_i \om_i^{-1} = \om_i^{-1} \om_i = 1, \label{Cq:1} 
\\
\om_i a_j \om_i^{-1} = q^{\del_{ij}} a_j, &\qq \om_i a^\dag_j \om_i^{-1} = q^{-\del_{ij}} a^\dag_j, \label{Cq:2} 
\\
a_i a_j + a_j a_i = 0 , &\qq a^\dag_i a^\dag_j + a^\dag_j a^\dag_i = 0, \label{Cq:3} 
\\
a_i a^\dag_j + q^{\del_{ij}}\,a^\dag_j a_i = \del_{ij}\om^{-1}_i, &\qq a_i a^\dag_j + q^{-\del_{ij}}\,a^\dag_j a_i = \del_{ij}\om_i . \label{Cq:4}
}
\end{defn}

Note that the relations \eqref{Cq:4}, when $i=j$, are equivalent to
\equ{
a^\dag_i a_i = - \frac{\om_i - \om^{-1}_i}{q-q^{-1}}, \qq a_i a^\dag_i = \frac{q\,\om_i - q^{-1}\om^{-1}_i}{q-q^{-1}}. \label{Cq:5}
}

The algebra $\mathscr{C}^n_q$ has a natural representation on the exterior algebra $\Lambda$ with generators $x_i$ with $1\le i \le n$. For integers $\bm m = (m_1,\dots,m_n)\in\Z^n$, we define an element $x(\bm m)$ of $\Lambda$ as follows:
\[
x(\bm m) = \begin{cases} x_1^{m_1} \land x_2^{m_2} \land \dots \land x_n^{m_n} & \text{if }\; \bm m\in\{0,1\}^n, \\ 0 & \text{otherwise}. \end{cases}  
\]
The set $\{x(\bm m): \bm m\in\{0,1\}^n\}$ is a basis of the vector space $\Lambda \cong \C^{n|n}$. Introduce elements $\bm e_i\in\Z^n_+$ defined by $\bm e_1=(1,0,\dots,0)$, \dots, $\bm e_n=(0,\dots,0,1)$. The action of the algebra $\mathscr{C}^n_q$ on $\Lambda$ is given by
\eqg{
a_i(x(\bm m))=(-1)^{m_{1} + \dots + m_{i-1}} x(\bm m - \bm e_i) , \\
a^\dag_i(x(\bm m))=(-1)^{m_{1} + \dots + m_{i-1}} x(\bm m + \bm e_i) , \\
\om_i(x(\bm m)) = q^{-m_i} x(\bm m) \label{a-action}
}
for any $\bm m = (m_1,\dots,m_n)\in\{0,1\}^n$. This turns $\Lambda$ into an irreducible $\mathscr{C}^n_q$-module. 

Set $\deg(a_i)=\deg(a^\dag_i)=1$ and $\deg(\om_i)=0$, and extend this grading linearly on arbitrary monomials in $\mathscr{C}^n_q$. This defines a grading on $\mathscr{C}^n_q$. Denote by $\mathscr{C}^{n,+}_q$ the even-graded subalgebra of $\mathscr{C}^n_q$. Then the space $\Lambda$ splits into invariant subspaces, $\La^{+} = \{ x(\bm m) : m_1 + \ldots + m_{n} \in 2\Z \}$ and $\La^{-} = \{ x(\bm m) : m_1 + \ldots + m_{n} + 1\in 2\Z \}$, with respect to the action of $\mathscr{C}^{n,+}_q$.

\begin{prop}[\cite{GRW21}] \label{P:Uq->Aq}
There exists an algebra homomorphism $\pi : U_q(\mfso_N) \to \mathscr{C}^n_q$ defined by the following formulae:
\aln{
\ell^\pm_{00} &\mapsto 1, \qu\;\; \ell^\pm_{i,i} \mapsto q^{\pm1/2} \om^{\pm1}_i , \qu\;\; \ell^\pm_{-i,-i} \mapsto q^{\mp1/2} \om^{\mp1}_i \qq &&(i>0),  \\
\ell^-_{ij} &\mapsto (-1)^{i+j} q^{i-j-1/2}(q-q^{-1})\, a^\dag_i \,\om_{i-1} \cdots \om_{j+1}\, a_j\, \om^{-1}_j \qq &&(i>j),  \\
\ell^+_{ij} &\mapsto -(-1)^{i+j}q^{i-j+3/2}(q-q^{-1})\, \om_i \, a^\dag_i \,\om^{-1}_{i+1} \cdots \om^{-1}_{j-1}\, a_j \qq &&(i<j), 
}
except $\ell^\pm_{ij}=0$ if $i=-j\ne 0$, and we have assumed that
\gan{
\om_0 = q^{-1/2} , \qq a_0 = (-1-q)^{-1/2}, \qq a^\dag_0 = -q^{1/2}(-1-q)^{-1/2}, \\
\om_{-i} = q^{-1}\om^{-1}_i, \qq a_{-i} = q^{-1} a^\dag_i, \qq a^\dag_{-i} = q\,a_i \qq (i>0).
}
\end{prop}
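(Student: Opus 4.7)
The strategy is to use the identification of $U_q(\mfso_N)$ with the subalgebra of $U^{ex}_q(\mfL\mfso_N)$ generated by the degree-zero entries of $L^\pm$, so that the relations to verify reduce to: (i) the triangularity and unitarity constraints \eqref{RLL0}, and (ii) the constant-in-$u,v$ part of the RLL relations \eqref{RLL1}, which takes the Faddeev--Reshetikhin--Takhtajan form
\[
R_q \, L^\pm_1 L^\pm_2 = L^\pm_2 L^\pm_1 \, R_q, \qq R_q \, L^+_1 L^-_2 = L^-_2 L^+_1 \, R_q,
\]
where now $L^\pm$ carry entries $\pi(\ell^\pm_{ij}[0]) \in \mathscr{C}^n_q$ and $R_q$ is the constant part of \eqref{Ru}. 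The plan is to verify (i) and (ii) by direct substitution of the proposed formulas and reduction via the $q$-Clifford relations \eqref{Cq:1}--\eqref{Cq:5}.

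Items under (i) are immediate: on the diagonal one has $\pi(\ell^\pm_{ii})\,\pi(\ell^\mp_{ii}) = q^{\pm 1/2}\om_i^{\pm 1} \cdot q^{\mp 1/2}\om_i^{\mp 1} = 1$ for $i>0$, with the cases $i<0$ and $i=0$ handled analogously using $\om_{-i} = q^{-1}\om_i^{-1}$ and $\om_0 = q^{-1/2}$; the triangular entries not written in the formulae are zero by convention, matching $\ell^-_{ij} = 0$ for $i<j$ and $\ell^+_{ij} = 0$ for $i>j$, and the vanishing $\ell^\pm_{i,-i} = 0$ for $i\ne 0$ is built in. For (ii), one expands both sides of each RLL identity in the basis $e_{ij}\ot e_{kl}$ of $\End(\C^N)^{\ot 2}$ and matches coefficients, obtaining a finite list of quadratic commutation relations of the schematic form
\[
\pi(\ell^\epsilon_{ij})\,\pi(\ell^{\epsilon'}_{kl}) = \text{(re-ordered terms)} + (q\text{-correction}).
\]
Each of these reduces, via the explicit formulae for $\pi$, to a quadratic identity in $a_i,a_i^\dag,\om_i^{\pm 1}$ that is checked by case analysis on the relative positions of $(i,j)$ and $(k,l)$ (diagonal--diagonal, diagonal--off-diagonal, same-sign off-diagonal pairs, opposite-sign off-diagonal pairs). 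The $Q_q$ contribution to $R_q$, proportional to $e_{ij}\ot e_{-i,-j}$, encodes the $\mfso$-orthogonality and produces precisely the cross-relations that enforce $\pi(\ell^\pm_{i,-i}) = 0$ together with the consistency of the $\nu_i$-twisted pairing.

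A more conceptual route, in the spirit of \cite{JLM20,GRW21}, is to use the Gauss decomposition of $L^\pm[0]$ to isolate Chevalley-type generators, identify their images under $\pi$ with Hayashi's \cite{Ha90} $q$-Clifford realization of $U_q(\mfso_N)$ in Drinfeld--Jimbo form, and then invoke the quantum Serre relations; this bypasses most of the FRT computation. The main obstacle, shared by either approach, is the odd case $N=2n+1$: the prescribed normalizations $\om_0 = q^{-1/2}$, $a_0 = (-1-q)^{-1/2}$, $a_0^\dag = -q^{1/2}(-1-q)^{-1/2}$, together with the extensions $a_{-i} = q^{-1}a_i^\dag$ and $a_{-i}^\dag = q\,a_i$, are fine-tuned so that the relations involving the short-root index $0$ close consistently with those at non-zero indices (this is also the algebraic origin of the $\sqrt{q}$ factor flagged in the introduction). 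Verifying consistency in the quadratic relations that mix $\ell^\pm_{0,\pm k}$ with $\ell^\pm_{\pm k,\pm l}$ is the most delicate step; the remaining verifications are lengthy but mechanical.
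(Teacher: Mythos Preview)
The paper does not prove this proposition: it is attributed to \cite{GRW21} (in preparation) and stated without argument, so there is nothing in the paper to compare your proof against. Your proposal therefore supplies what the paper deliberately omits.

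As a proof strategy, what you outline is correct. The identification of $U_q(\mfso_N)$ with the degree-zero subalgebra of $U^{ex}_q(\mfL\mfso_N)$ reduces the check to the constant FRT relations $R_q\,L^\pm_1 L^\pm_2 = L^\pm_2 L^\pm_1\,R_q$ and $R_q\,L^+_1 L^-_2 = L^-_2 L^+_1\,R_q$ together with \eqref{RLL0}, and these can indeed be verified by case analysis in $\mathscr{C}^n_q$. Your second route---Gauss-decomposing $L^\pm[0]$ to extract Chevalley-type generators and matching with Hayashi's realization \cite{Ha90}---is the one closest in spirit to \cite{JLM20,GRW21} and is almost certainly what the cited reference does; it has the advantage that the Serre relations are already known to hold in Hayashi's construction, so one only needs to check that the Gauss factors land on the right $q$-Clifford expressions. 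You are also right that the index-$0$ normalizations in the odd case are the delicate point: they are tuned so that the formulae for $\ell^\pm_{ij}$ with $i,j$ of opposite sign (which pass through index $0$) agree with the direct prescriptions, and this is exactly where the $q^{1/2}$ enters.
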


The mapping $\pi$ is the spinor representation of $U_q(\mfso_N)$. In particular, the mapping $\pi$ turns $\Lambda$ into an irreducible $U_q(\mfso_{2n+1})$-module with a highest vector $x(\bm 0)$ of weight 
\equ{
\la^\pm = (q^{\mp1/2},\dots,q^{\mp1/2},1,q^{\pm1/2},\dots,q^{\pm1/2})  \label{Aq:B-rep}
}
and $\La^+$ (resp.\ $\La^-$) into an irreducible  $U_q(\mfso_{2n})$-module with a highest vector $x(\bm0)$ (resp.~$x(\bm e_1)$) of weight
\gat{
\la^\pm = (q^{\pm1/2},\dots,q^{\pm1/2},q^{\mp1/2},\dots,q^{\mp1/2}) , \\
\text{resp.}\qu\la^\pm = (q^{\pm1/2},\dots,q^{\pm1/2},q^{\mp1/2},q^{\pm1/2},q^{\mp1/2},\dots,q^{\mp1/2}) .
}

The spinor representation of $U_q(\mfso_N)$ can be extended to a highest weight representation of the algebra $U^{ex}_q(\mfL\mfso_N)$ by the rule 
\equ{
\pi_\rho \;:\; L^\pm(u) \mapsto \frac{\pi( q^{\pm1/2} u^{\pm1} L^\mp - q^{\mp1/2}\rho^{\pm1} L^\pm )}{u^{\pm1}-\rho^{\pm1}}  \label{pi-rho}
}
for any $\rho \in\C^\times$, see \cite{GRW21}.


\subsection{Supermatrix representations of $\mathscr{C}^n_q$ and $\mathscr{C}^{n,+}_q$} 

We identify the space $\Lambda$ with $\C^{n|n}$ via the mapping 
\[
x(\bm m) \mapsto e^{(1)}_{2m_1-1} \hot \cdots \hot e^{(1)}_{2m_n-1} .
\]
For instance, when $n=2$, $\La$ is identified with $\C^{2|2}$ via
\aln{
x(0,0) & \mapsto e^{(1)}_{-1} \hot e^{(1)}_{-1} , \qq 
x(0,1) \mapsto e^{(1)}_{-1} \hot e^{(1)}_{+1} , \\ 
x(1,1) & \mapsto e^{(1)}_{+1} \hot e^{(1)}_{+1} , \qq
x(1,0) \mapsto e^{(1)}_{+1} \hot e^{(1)}_{-1} . 
}

Let $\big(e^{(1)}_{ab}\big)_i$ denote the action of $e^{(1)}_{ab}$ on the $i$-th factor in the $n$-fold graded tensor product. Then it can be deduced from \eqref{a-action} that the mapping
\equ{
\si \;\;:\;\; a_i \mapsto \big(e^{(1)}_{-1,+1}\big)_i, \qu
a^\dag_i \mapsto \big(e^{(1)}_{+1,-1}\big)_i , \qu
\om_i \mapsto \big(e^{(1)}_{-1,-1} + q^{-1} e^{(1)}_{+1,+1}\big)_i  \label{sigma}
}
defines a representation of $\mathscr{C}^n_q$ on $\C^{n|n}$.

When $n=2$, we identify $\La^+$ with the even-graded subspace $V^{+(1)} \subset \C^{2|2}$ via
\[
x(0,0) \mapsto e^{(+)}_{-1} , \qq
x(1,1) \mapsto e^{(+)}_{+1}  ,
\]
and $\La^+$ with the odd-graded subspace $V^{-(1)} \subset \C^{2|2}$ via
\[
x(1,0) \mapsto e^{(-)}_{-1} , \qq
x(0,1) \mapsto e^{(-)}_{+1} .
\]

\pagebreak

When $n>2$, we identify $\La^+$ (resp.\ $\La^-$) with the even- (resp.\ odd-) graded subspace $V^{\pm(n-1)}\subset \C^{n|n} \cong \C^{2|2} \hot (\C^{1|1})^{\hot n-2}$ via
\[
x(\bm m) \mapsto \begin{cases}e^{(+)}_{2m_1-1} \hot e^{(1)}_{2m_3-1} \hot \cdots \hot e^{(1)}_{2m_{n}-1} & \text{if $m_1=m_2$,} \\[0.5em]
e^{(-)}_{2m_2-1} \hot e^{(1)}_{2m_3-1} \hot \cdots \hot e^{(1)}_{2m_{n}-1} & \text{if $m_1\ne m_2$.}
\end{cases}
\]
For instance, when $n=3$, $\La^+$ is identified with $V^{+(2)}$ via
\aln{
x(0,0,0) & \mapsto e^{(+)}_{-1} \hot e^{(1)}_{-1} , \qq 
x(1,0,1) \mapsto e^{(-)}_{-1} \hot e^{(1)}_{+1} , \\
x(1,1,0) & \mapsto e^{(+)}_{+1} \hot e^{(1)}_{-1} , \qq
x(0,1,1) \mapsto e^{(-)}_{+1} \hot e^{(1)}_{+1} ,
}
and $\La^-$ is identified with $V^{-(2)}$ via
\aln{
x(0,0,1) & \mapsto e^{(+)}_{-1} \hot e^{(1)}_{+1} , \qq
x(1,0,0) \mapsto e^{(-)}_{-1} \hot e^{(1)}_{-1} , \\
x(1,1,1) & \mapsto e^{(+)}_{+1} \hot e^{(1)}_{+1} , \qq
x(0,1,0) \mapsto e^{(-)}_{+1} \hot e^{(1)}_{-1} .
}
It follows from \eqref{a-action} that the mapping $\si^+ : \mathscr{C}^{n,+}_q \to \End(V^{\pm(n-1)})$ given by
\gan{
a_1 a_2 \mapsto -\big(e^{(+)}_{-1,+1}\big)_1, \qq
a^\dag_1 a^\dag_2 \mapsto \big(e^{(+)}_{+1,-1}\big)_1 , \qq
a_1 a^\dag_2 \mapsto -\big(e^{(-)}_{+1,-1}\big)_1 , \qq
a^\dag_1 a_2 \mapsto \big(e^{(-)}_{-1,+1}\big)_1 , 
\\
a_i a_j \mapsto \big(e^{(1)}_{-1,+1}\big)_{i-1} \big(e^{(1)}_{-1,+1}\big)_{j-1}, \qq
a_i a^\dag_j \mapsto \big(e^{(1)}_{-1,+1}\big)_{i-1} \big(e^{(1)}_{+1,-1}\big)_{j-1} , \\
a^\dag_i a^\dag_j \mapsto  \big(e^{(1)}_{+1,-1}\big)_{i-1}  \big(e^{(1)}_{+1,-1}\big)_{j-1}  
}
and
\aln{
a_1 a_j \mapsto \big(f^{(-)}_{-1,-1} + f^{(+)}_{+1,+1} \big)_1 \big(e^{(1)}_{-1,+1}\big)_{j-1}, \qq
a_1 a^\dag_j \mapsto \big(f^{(-)}_{-1,-1} + f^{(+)}_{+1,+1}\big)_1 \big(e^{(1)}_{+1,-1}\big)_{j-1}, \\
a_2 a_j \mapsto \big(f^{(-)}_{-1,+1} - f^{(+)}_{-1,+1} \big)_1 \big(e^{(1)}_{-1,+1}\big)_{j-1}, \qq
a_2 a^\dag_j \mapsto \big(f^{(-)}_{-1,+1} - f^{(+)}_{-1,+1}\big)_1 \big(e^{(1)}_{+1,-1}\big)_{j-1}, \\
a^\dag_1 a_j \mapsto \big(f^{(+)}_{-1,-1} + f^{(-)}_{+1,+1}\big)_1 \big(e^{(1)}_{-1,+1}\big)_{j-1}, \qq
a^\dag_1 a^\dag_j \mapsto \big(f^{(+)}_{-1,-1} + f^{(-)}_{+1,+1}\big)_1 \big(e^{(1)}_{+1,-1}\big)_{j-1}, \\
a^\dag_2 a_j \mapsto \big(f^{(+)}_{+1,-1} - f^{(-)}_{+1,-1}\big)_1 \big(e^{(1)}_{-1,+1}\big)_{j-1} , \qq
a^\dag_2 a^\dag_j \mapsto \big(f^{(+)}_{+1,-1} - f^{(-)}_{+1,-1}\big)_1 \big(e^{(1)}_{+1,-1}\big)_{j-1}, 
}
and
\gan{
\om_1 \mapsto \big(e^{(+)}_{-1,-1} + q^{-1} e^{(+)}_{+1,+1} + q^{-1} e^{(-)}_{-1,-1} + e^{(-)}_{+1,+1} \big)_1 , \\
\om_2 \mapsto \big(e^{(+)}_{-1,-1} + q^{-1} e^{(+)}_{+1,+1} + e^{(-)}_{-1,-1} + q^{-1} e^{(-)}_{+1,+1} \big)_1 , \\
\om_i \mapsto \big(e^{(1)}_{-1,-1} + q^{-1} e^{(1)}_{+1,+1}\big)_{i-1}
}
for $3\le i,j\le n$, defines a representation of $\mathscr{C}^{n,+}_q$ on $V^{\pm(n-1)}$.


\subsection{Spinor-vector $R$-matrices} 

In the remaining parts of this paper we set the deformation parameter of $\mfso_{2n+1}$ to $q^2$, that~is, we will consider the algebra $U^{ex}_{q^2}(\mfL\mfso_{2n+1})$. This is to avoid having $\sqrt{q}$ in the spinor-spinor $R$-matrices (see Section \ref{sec:RSS}) and the exchange relations (see Section \ref{sec:EX}).

We define the spinor-vector $R$-matrix of $U^{ex}_{q^2}(\mfL\mfso_{2n+1})$ via the mapping $\pi_\rho$ composed with the representation $\si$ and a suitable transposition:
\equ{
R^{(n)}(u,\rho) := \sum_{i,j}\Big(\si \circ\pi_\rho (\ell^+_{-i,-j}(u)) \Big) \ot e_{ij} = \sum_{i,j} \Big( \si \circ\pi_\rho (\ell^-_{-i,-j}(u))\Big) \ot e_{ij} . \label{R-SV-B}
}
Our goal is to find a recurrence formula for $R^{(n)}(u,\rho)$. Introduce a rational function
\equ{
f_q(v,u) := \frac{qv-q^{-1}u}{v-u} . \label{fq}
}
The Lemma below follows by directly evaluating \eqref{R-SV-B}. 

\begin{lemma} \label{L:R-SV-B1}
The spinor-vector $R$-matrix of $U^{ex}_{q^2}(\mfL\mfso_3)$ is an element of $\End(\C^{1|1} \ot \C^3)$ given by
\ali{
R^{(1)}(u,\rho ) &= e^{(1)}_{-1,-1} \ot \Big( e_{-1,-1} + f_q(u,\rho)\, e_{00} + f_{q^2}(u,\rho)\, e_{11} \Big) \el
& \qu + \sqrt{-1}\sqrt{q+q^{-1}} \; \frac{q-q^{-1}}{u-\rho} \Big( \sqrt{q}\,u \, e^{(1)}_{+1,-1} \ot (e_{-1,0} - e_{01}) -\frac{\rho}{\sqrt{q}} \, e^{(1)}_{-1,+1} \ot (e_{0,-1} - e_{10}) \Big) \el
& \qu + e^{(1)}_{+1,+1} \ot \Big( f_{q^2}(u,\rho)\, e_{-1,-1} + f_q(u,\rho)\, e_{00} + e_{11} \Big) . \label{RBV1}
}
\end{lemma}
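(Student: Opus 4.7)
The proof is by direct computation from the definition~\eqref{R-SV-B} at $n=1$. The plan is to evaluate $\si\circ\pi_\rho(\ell^+_{-i,-j}(u))$ for each $i,j\in\{-1,0,1\}$ and assemble the nine results against $e_{ij}$. Throughout, the deformation parameter is $q^2$, so every $q$ appearing in~\eqref{pi-rho}, Proposition~\ref{P:Uq->Aq} and~\eqref{sigma} must be rescaled, which is the main source of clutter.

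First, I would apply~\eqref{pi-rho} with $q\to q^2$ to reduce each entry to $(q u\,\pi(\ell^-_{-i,-j}[0]) - q^{-1}\rho\,\pi(\ell^+_{-i,-j}[0]))/(u-\rho)$. Next, Proposition~\ref{P:Uq->Aq} (also with $q\to q^2$) expresses the remaining entries in terms of the $\mathscr{C}^1_{q^2}$-generators $a_1,a^\dag_1,\om_1$ together with the scalar conventions $\om_0=q^{-1}$, $a_0=(-1-q^2)^{-1/2}$, $a^\dag_0=-q(-1-q^2)^{-1/2}$ and the negative-index identifications $a^\dag_{-1}=q^2 a_1$, $a_{-1}=q^{-2}a^\dag_1$, $\om_{-1}=q^{-2}\om_1^{-1}$. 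The diagonal elements $\ell^\pm_{ii}[0]$ contribute only $1$ or $q^{\pm 1}\om_1^{\pm 1}$, the antidiagonal elements $\ell^\pm_{\pm 1,\mp 1}[0]$ vanish by~\eqref{RLL0}, and each $\ell^\pm_{ij}[0]$ with exactly one zero index carries a single factor $(-1-q^2)^{-1/2}$. Applying $\si$ from~\eqref{sigma} (with $q\to q^2$) then sends $\om_1^{\pm 1}\mapsto e^{(1)}_{-1,-1}+q^{\mp 2}e^{(1)}_{+1,+1}$, $a_1\mapsto e^{(1)}_{-1,+1}$ and $a^\dag_1\mapsto e^{(1)}_{+1,-1}$, producing the four supermatrix units that appear on the auxiliary side of~\eqref{RBV1}.

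The remaining work is routine simplification. The three diagonal coefficients combine with $(u-\rho)^{-1}$ into $1$, $f_q(u,\rho)$ and $f_{q^2}(u,\rho)$ by the definition~\eqref{fq}, while the four off-diagonal coefficients consolidate via the identity $(-1-q^2)^{-1/2}=-\sqrt{-1}/(\sqrt{q}\sqrt{q+q^{-1}})$ into the common prefactor $\sqrt{-1}\sqrt{q+q^{-1}}(q-q^{-1})/(u-\rho)$, attached to either $\sqrt{q}\,u\,e^{(1)}_{+1,-1}$ or $(\rho/\sqrt{q})\,e^{(1)}_{-1,+1}$ depending on whether $a^\dag_1$ or $a_1$ appears. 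The main obstacle is purely notational: tracking the $q\to q^2$ rescaling consistently across three interacting formulas, and managing the signs and square roots hidden in $a_0$, $a^\dag_0$ and the negative-index identifications; once this bookkeeping is handled, the identification with~\eqref{RBV1} is immediate.
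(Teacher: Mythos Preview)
Your proposal is correct and follows the same route as the paper, which simply states that the lemma follows by directly evaluating~\eqref{R-SV-B}. You have supplied the details of that evaluation---applying~\eqref{pi-rho}, Proposition~\ref{P:Uq->Aq} and~\eqref{sigma} with the $q\to q^2$ rescaling and then simplifying---which is exactly the intended computation.
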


The Proposition below follows by an induction argument and lengthy but direct computations from \eqref{R-SV-B}. The base of induction is given by Lemma \ref{L:R-SV-B1}.

\begin{prop} \label{P:R-SV-Bn}
The spinor-vector $R$-matrix of $U^{ex}_{q^2}(\mfL\mfso_{2n+1})$ for $n\ge2$ is an element of the space $\End(\C^{n|n} \ot \C^{2n+1})$ given by the following recurrence formula:
\ali{
R^{(n)}(u,\rho) & =  A^{(n-1)}(u,\rho)\hot e^{(1)}_{-1,-1} + B^{(n-1)}(u,\rho)\hot e^{(1)}_{-1,+1} \el
& \qu + C^{(n-1)}(u,\rho)\hot e^{(1)}_{+1,-1} + D^{(n-1)}(u,\rho)\hot e^{(1)}_{+1,+1} \label{RBVn}
}
where
\aln{
A^{(n-1)}(u,\rho) &= R^{(n-1)}(u,\rho) +  I^{(n-1)} \ot \Big( e_{-n,-n} + f_{q^2}(u,\rho)\,e_{n,n} \Big) ,
\\
B^{(n-1)}(u,\rho) &= q^{-\ka}\rho\,\frac{q^2-q^{-2}}{u-\rho} \sum_{\bm i \bm j} \sum_{k=0}^{n-1} \del^{k,1}_{i_1,j_1} \cdots \del^{k,n-1}_{i_{n-1},j_{n-1}} \, (-1)^{k+n+1} \, q^{i_k(k-3/2)} \, c_k \el & \qu \times e^{(1)}_{i_1,j_1} \hot \cdots \hot e^{(1)}_{i_{n-1},j_{n-1}}  \ot \Big(q^{-\sum_{l=k+1}^{n-1} i_l} e_{n,i_k k} - q^{\sum_{l=k+1}^{n-1} i_l} e_{-i_k k,-n}\Big) ,
\\[0.25em]
C^{(n-1)}(u,\rho) &= q^{\ka}u\,\frac{q^2-q^{-2}}{u-\rho} \sum_{\bm i \bm j} \sum_{k=0}^{n-1} \del^{k,1}_{i_1,j_1} \cdots \del^{k,n-1}_{i_{n-1},j_{n-1}} \, (-1)^{k+n+1} \, q^{i_k(k-3/2)} \, c_k \el & \qu \times e^{(1)}_{i_1,j_1} \hot \cdots \hot e^{(1)}_{i_{n-1},j_{n-1}}  \ot \Big(q^{-\sum_{l=k+1}^{n-1} i_l} e_{-n,i_k k} - q^{\sum_{l=k+1}^{n-1} i_l} e_{-i_k k,n}\Big) ,
\\[.5em]
D^{(n-1)}(u,\rho) &= R^{(n-1)}(u,\rho) +  I^{(n-1)} \ot \Big( f_{q^2}(u,\rho)\,e_{-n,-n} + e_{n,n} \Big) ,
}
with $\del^{kl}_{ij} = (1-\del_{kl})\,\del_{ij}+\del_{kl}\del_{i,-j}$, $i_0=1$, $c_0 = \frac{\sqrt{-1}\,q^{3/2}}{\sqrt{q+q^{-1}}}$ and $c_k=1$ when $k\ge1$. Here the $\End(\C^{2n+1})$-valued leg of $R^{(n)}(u,\rho)$ is understood to be in the right-most space, that is,
\[
I^{(n-1)} \ot \Big( f_{q^2}(u,\rho)\,e_{-n,-n} + e_{n,n} \Big) \hot e^{(1)}_{+1,+1} \equiv I^{(n-1)} \hot e^{(1)}_{+1,+1} \ot \Big( f_{q^2}(u,\rho)\,e_{-n,-n} + e_{n,n} \Big) .
\] 
\end{prop}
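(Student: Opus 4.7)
The plan is to proceed by induction on $n$, with the base case $n=1$ given by Lemma \ref{L:R-SV-B1}. For the inductive step I would evaluate the right-hand side of \eqref{R-SV-B} explicitly by composing the three ingredients: the mapping $\pi_\rho$ from \eqref{pi-rho} that expresses $\ell^\pm(u)$ as a combination of $L^+$ and $L^-$ images of $\pi$, the $q$-Clifford realisation $\pi$ of Proposition \ref{P:Uq->Aq}, and the supermatrix representation $\si$ from \eqref{sigma}. The crucial structural observation is that $\si$ places the action of $a_i, a^\dag_i, \om_i$ on the $i$-th factor of $(\C^{1|1})^{\hat\ot\,n}$. Consequently, once we decompose along the $n$-th factor via \eqref{factor}, the four nested blocks $[R^{(n)}]_{\pm1,\pm1}$ are controlled entirely by how the Clifford generators with subscript $n$ contribute to each term: either as the diagonal $\si(\om_n)$ or via the off-diagonal $\si(a_n), \si(a^\dag_n)$.

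Concretely I would partition the sum over $i,j$ in \eqref{R-SV-B} into three cases. When $|i|,|j|\le n-1$ the only Clifford generators appearing have index at most $n-1$, so the action on the $n$-th tensor factor is trivial and these terms collapse to $R^{(n-1)}(u,\rho)$, which by the induction hypothesis is known; they contribute identically to both the $(-1,-1)$ and $(+1,+1)$ blocks, explaining the common $R^{(n-1)}$ summand in $A^{(n-1)}$ and $D^{(n-1)}$. When $i=j=\pm n$, the images $\pi_\rho(\ell^\pm_{\mp n,\mp n})$ involve only $\om_n^{\pm1}$ and produce the diagonal $e_{-n,-n}$ and $e_{n,n}$ pieces with coefficients $1$ and $f_{q^2}(u,\rho)$; these roles swap between $A^{(n-1)}$ and $D^{(n-1)}$ according to the two eigenvalues of $\om_n$ on $e^{(1)}_{\pm1}$.

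The remaining cases, where exactly one of the pair $(\pm i,\pm j)$ equals $\pm n$, give the off-diagonal blocks $B^{(n-1)}$ and $C^{(n-1)}$. Here Proposition \ref{P:Uq->Aq} produces exactly one factor of $a_n$ or $a^\dag_n$, yielding the off-diagonal action $(e^{(1)}_{\mp1,\pm1})_n$ on the $n$-th tensor factor. The summation index $k$ in the stated formulas indexes the tensor factor on which the second Clifford operator acts, while the chain $\om_{n-1}\cdots\om_{k+1}$ of intermediate weight operators, being diagonal, produces the $q^{\pm\sum_{l=k+1}^{n-1} i_l}$ factors from its eigenvalues on the basis vectors $e^{(1)}_{i_l}$. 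The special constant $c_0=\sqrt{-1}\,q^{3/2}/\sqrt{q+q^{-1}}$ emerges from the degenerate values $\om_0=q^{-1/2}$, $a_0=(-1-q)^{-1/2}$, $a^\dag_0=-q^{1/2}(-1-q)^{-1/2}$ prescribed in Proposition \ref{P:Uq->Aq}, reflecting the short simple root of $\mfso_{2n+1}$. The principal obstacle will be the bookkeeping: one must track the signs $(-1)^{i+j}$ from $\pi$, the $q$-powers contributed separately by $\pi$, by $\pi_\rho$, and by the diagonal factors $\si(\om_k)$, the asymmetric prefactors $q^{-\ka}\rho$ in $B^{(n-1)}$ versus $q^\ka u$ in $C^{(n-1)}$ coming from the $L^-$ and $L^+$ branches in \eqref{pi-rho}, and the graded signs from reordering tensor factors via \eqref{factor0}. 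None of this is conceptually difficult, but verifying that all these contributions combine into the compact formula stated requires patient case-by-case verification, in particular separating $k=0$ from $k\ge1$.
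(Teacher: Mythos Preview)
Your proposal is correct and follows essentially the same approach as the paper: the paper states only that the proposition ``follows by an induction argument and lengthy but direct computations from \eqref{R-SV-B}'' with base case Lemma~\ref{L:R-SV-B1}, and your outline fleshes out precisely that strategy---decompose along the $n$-th graded tensor factor using \eqref{factor}, partition the sum in \eqref{R-SV-B} according to whether $|i|,|j|\le n-1$, $|i|=|j|=n$, or exactly one of them equals $\pm n$, and track the $q$-Clifford images through $\pi$, $\pi_\rho$, and $\si$. Your identification of the bookkeeping hazards (graded signs from \eqref{factor0}, the $\om$-chain eigenvalues producing $q^{\pm\sum i_l}$, the $k=0$ anomaly from the conventional values of $a_0,a_0^\dag,\om_0$, and the $L^\pm$ asymmetry in \eqref{pi-rho} giving the $q^{\mp\ka}$ prefactors) is accurate and more explicit than the paper itself.
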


The Lemma below follows directly from properties the $L$-operators $L^\pm(u)$ and \eqref{R-SV-B}.

\begin{lemma}
The spinor-vector $R$-matrix of $U^{ex}_{q^2}(\mfL\mfso_{2n+1})$ satisfies the equation
\[
R^{(n)}_{12}(u,\rho) \,R^{(n)}_{13}(v,\rho)\,R_{q^2,23}(v,u) = R_{q^2,23}(v,u)\, R^{(n)}_{13}(c,\rho) \,R^{(n)}_{12}(u,\rho) 
\]
where $R_{q^2}(v,u)$ is obtained from \eqref{Ru} upon substituting $q \to q^2$. 
\end{lemma}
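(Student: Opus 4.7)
My approach is to derive the stated identity by applying the spinor-loop homomorphism $\sigma \circ \pi_\rho$, defined as the composition of Proposition \ref{P:Uq->Aq} with the representation $\si$ of \eqref{sigma} and the loop-algebra extension \eqref{pi-rho}, to the defining RLL exchange relations of Definition \ref{D:Uex:aff}. The key structural observation is that, by \eqref{R-SV-B}, the spinor-vector $R$-matrix $R^{(n)}(u,\rho)$ coincides with the image of the matrix $L^+(u)$ under $\sigma \circ \pi_\rho$, up to an index-reflection $\tau: e_{ij} \mapsto e_{-i,-j}$ acting on the vector leg.

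I would proceed as follows. First, take the $L^+L^+$ relation from \eqref{RLL1} (with $q$ replaced by $q^2$ throughout, since the algebra at hand is $U^{ex}_{q^2}(\mfL\mfso_{2n+1})$) and label its two matrix legs as $2$ and $3$:
\[
R_{q^2,23}(u,v)\, L^+_2(u)\, L^+_3(v) = L^+_3(v)\, L^+_2(u)\, R_{q^2,23}(u,v).
\]
Next, apply $\sigma \circ \pi_\rho$ to the algebra-valued entries, placing the image space $\C^{n|n}$ in position $1$, thereby converting each $L^+_a(u)$ into the operator $M_{1a}(u) := \sum_{i,j} \sigma\pi_\rho(\ell^+_{ij}(u))_1 \ot (e_{ij})_a$ on $\C^{n|n} \ot \C^{2n+1}$. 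Since \eqref{R-SV-B} asserts $R^{(n)}_{1a}(u,\rho) = \tau_a M_{1a}(u) \tau_a$, I would then substitute $M_{1a}(u) = \tau_a R^{(n)}_{1a}(u,\rho) \tau_a$ and conjugate the whole identity by $\tau_2 \tau_3$.

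In the conjugated identity, the $R^{(n)}_{1a}$ factors emerge free of the $\tau$'s (conjugation by $\tau_b$ acts trivially on leg $a \neq b$), while the vector $R$-matrix $R_{q^2,23}(u,v)$ is replaced by $\tau_2 \tau_3 R_{q^2,23}(u,v) \tau_2 \tau_3$. The final step is the identification $\tau_2 \tau_3 R_{q^2,23}(u,v) \tau_2 \tau_3 = R_{q^2,23}(v,u)$, which is a symmetry of the vector $R$-matrix traceable to $\nu_{-i} = -\nu_i$ and the transformation properties of $R_q, P, Q_{q^2}$ in \eqref{RPQ} under index reflection (notably, $\tau_1\tau_2 P \tau_1\tau_2 = P$ while the $\nu$-dependent exponents in $Q_{q^2}$ and in the off-diagonal part of $R_{q^2}$ change sign, and the rational coefficients reorganize under $v/u \leftrightarrow u/v$).

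The main obstacle is this last step: verifying term-by-term that the index reflection on both vector legs of $R_{q^2}(u,v)$ precisely effects the spectral-argument swap. The delicate parts are the transformation of the off-diagonal piece of $R_{q^2}$ (which swaps the summation ranges $i<j \leftrightarrow i>j$) and of $Q_{q^2}$ (where the exponent $2(\nu_i - \nu_j)$ flips sign), and checking that the rational coefficients in front of $P$ and $Q_{q^2}$ in \eqref{Ru} reorganize consistently when $v/u \leftrightarrow u/v$. An alternative route, if this symmetry proves intractable head-on, is to apply $\sigma \circ \pi_\rho$ to the mixed $L^+L^-$ relation of \eqref{RLL1} instead, exploiting the second equality in \eqref{R-SV-B} to bypass the argument-swap issue.
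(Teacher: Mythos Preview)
Your overall strategy --- apply $\sigma\circ\pi_\rho$ to the RLL relation and then conjugate by the index reflection $\tau$ on the two vector legs --- is exactly what the paper intends when it says the lemma ``follows directly from properties of the $L$-operators and \eqref{R-SV-B}''. The gap is in your final step: the identity
\[
\tau_2\tau_3\,R_{q^2,23}(u,v)\,\tau_2\tau_3 \;=\; R_{q^2,23}(v,u)
\]
is \emph{false}. Already the coefficient of $P$ is $\dfrac{q^2-q^{-2}}{v/u-1}$ on the left (since $P$ is $\tau\ot\tau$-invariant) but $\dfrac{q^2-q^{-2}}{u/v-1}$ on the right; no ``reorganisation of rational coefficients'' repairs this, and the constant part $R_{q^2}$ likewise fails to match.

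What is actually true --- and what one checks term-by-term on $R_{q^2}$, $P$, $Q_{q^2}$ in \eqref{RPQ} using $\nu_{-i}=-\nu_i$ --- is that simultaneous index reflection coincides with swapping the two vector legs:
\[
\tau_2\tau_3\,R_{q^2,23}(u,v)\,\tau_2\tau_3 \;=\; P_{23}\,R_{q^2,23}(u,v)\,P_{23} \;=\; R_{q^2,32}(u,v)\,.
\]
With this corrected identity your conjugated RLL reads
\[
R_{q^2,32}(u,v)\,R^{(n)}_{12}(u,\rho)\,R^{(n)}_{13}(v,\rho)=R^{(n)}_{13}(v,\rho)\,R^{(n)}_{12}(u,\rho)\,R_{q^2,32}(u,v)\,,
\]
and simply relabelling the legs $2\leftrightarrow 3$ and the parameters $u\leftrightarrow v$ (the relation holds for all $u,v$) yields the stated lemma verbatim. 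Alternatively one can invoke unitarity, $R_{q^2,32}(u,v)^{-1}\propto R_{q^2,23}(v,u)$, to the same effect. Either way, your ``main obstacle'' dissolves once the correct transformation law of the vector $R$-matrix under $\tau\ot\tau$ is identified.
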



We define spinor-vector $R$-matrices of $U^{ex}_q(\mfL\mfso_{2n+2})$ via the mapping $\pi_\rho$ composed with the representation $\si^+$ and a suitable transposition,
\ali{
R^{\pm(n)}(u,\rho) &:= \sum_{i,j}\Big(\si^+ \circ\pi_\rho (\ell^+_{-i,-j}(u)) \Big)\Big|_{V^{\pm(n)}} \ot e_{ij} = \sum_{i,j} \Big( \si^+ \circ\pi_\rho (\ell^-_{-i,-j}(u))\Big)\Big|_{V^{\pm(n)}} \ot e_{ij}  \label{R-SV-D}
}
where $|_{V^{\pm(n)}}$ denotes restriction to the corresponding $\mathscr{C}^{n+1,+}_q$-invariant subspace.
The Lemma below follows by directly evaluating \eqref{R-SV-D}. 

\begin{lemma} \label{L:R-SV-D2}
The spinor-vector $R$-matrices of $U^{ex}_{q}(\mfL\mfso_4)$ are elements of $\End(V^{\pm(1)} \ot \C^4)$~given~by
\aln{
R^{+(1)}(u,\rho) & = e^{(+)}_{-1,-1} \ot \Big( e_{-2,-2} + e_{-1,-1} + f_q(u,\rho)\, (e_{11} + e_{22}) \Big) \el
& \qu + \frac{q-q^{-1}}{u-\rho} \Big( q^{1/2}\,u \, e^{(+)}_{+1,-1} \ot (e_{-2,1} - e_{-1,2}) + q^{-1/2}\rho\,e^{(+)}_{-1,+1} \ot (e_{1,-2} - e_{2,-1}) \Big) \el
& \qu + e^{(+)}_{+1,+1} \ot \Big( f_q(u,\rho)\,( e_{-2,-2} + e_{-1,-1}) + e_{11} + e_{22} \Big) , 
\\[0.5em]
R^{-(1)}(u,\rho) &= e^{(-)}_{-1,-1} \ot \Big( e_{-2,-2} + e_{11} + f_q(u,\rho)\, (e_{-1,-1} + e_{22}) \Big) \el
& \qu - \frac{q-q^{-1}}{u-\rho} \Big( q\,u \, e^{(-)}_{+1,-1} \ot (e_{-2,-1} - e_{12}) + q^{-1}\rho\,e^{(-)}_{-1,+1} \ot (e_{-1,-2} - e_{21}) \Big) \el
& \qu + e^{(-)}_{+1,+1} \ot \Big( f_q(u,\rho)\,( e_{-2,-2} + e_{11}) + e_{-1,-1} + e_{22} \Big) .
}
\end{lemma}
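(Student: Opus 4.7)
The plan is to verify both formulas by direct evaluation of the defining equation \eqref{R-SV-D}, using the explicit spinor representation $\pi$ from Proposition \ref{P:Uq->Aq} together with the supermatrix embedding $\si^+$. For $N=4$ we have $n=2$, so the Clifford algebra is $\mathscr{C}^2_q$ and we must restrict the operators to the two chirality subspaces $V^{\pm(1)} \subset \C^{2|2}$ identified with the sub-modules $\La^\pm$ of $\La$.

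First I would substitute \eqref{pi-rho} to write
\[
\pi_\rho(\ell^+_{-i,-j}(u)) = \frac{q^{1/2} u\, \pi(\ell^-_{-i,-j}[0]) - q^{-1/2}\rho\,\pi(\ell^+_{-i,-j}[0])}{u-\rho},
\]
evaluate $\pi$ on each degree-zero generator using the formulas of Proposition \ref{P:Uq->Aq}, and then classify the resulting $16$ terms (for $i,j\in\{-2,-1,1,2\}$) into three groups. The diagonal terms $i=j$ give words in $\om_1^{\pm1},\om_2^{\pm1}$, whose images under $\si^+$ act diagonally on the basis vectors $e^{(\pm)}_{\mp1}$ and produce the rational function $f_q(u,\rho)$ precisely at entries where the $\om$-eigenvalue is $q^{-1}$, contributing the $e^{(\pm)}_{-1,-1}\ot(\,\cdots\,)$ and $e^{(\pm)}_{+1,+1}\ot(\,\cdots\,)$ summands. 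The off-diagonal terms split into $(i,j)$ with $i\ne \pm j$, which involve $a_k, a_k^\dag$ or $a_k a^\dag_l$, and contribute the $e^{(\pm)}_{+1,-1}$ and $e^{(\pm)}_{-1,+1}$ blocks; the terms with $i=-j$ vanish because $\pi(\ell^\pm_{i,-i})=0$ (except at $i=0$ which is absent here).

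Second I would compute $\si^+$ on each of the required monomials $a_1 a_2,\ a_1^\dag a_2^\dag,\ a_1 a_2^\dag,\ a_1^\dag a_2,\ \om_1^{\pm1}\om_2^{\pm1}$ from the explicit formulas preceding the subsection. Here the crucial point is that $\si^+$ acts on $V^{+(1)}$ and $V^{-(1)}$ by different sign conventions (note the opposite signs attached to $a_1^\dag a_2$ versus $a_2^\dag a_1$), and this is precisely what distinguishes the two cases in the Lemma. Collecting the overall numerical prefactors from Proposition \ref{P:Uq->Aq}, namely $(-1)^{i+j}q^{\pm(i-j)\pm 3/2}(q-q^{-1})$ together with the $(q^{1/2}u\,{-}\,q^{-1/2}\rho)/(u-\rho)$ weight from $\pi_\rho$, yields the factor $q^{1/2}u$, $q^{-1/2}\rho$ in the $R^{+(1)}$ off-diagonal blocks and $qu$, $q^{-1}\rho$ in the $R^{-(1)}$ blocks — the $q$-shifts coming from whether the surviving $\om$-factors produce $1$ or $q^{\pm 1/2}$ eigenvalues on $V^\pm$.

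The only genuinely delicate point is sign bookkeeping across the three layers ($\pi$, $\si^+$, and the chirality restriction). As a consistency check I would verify that the alternative expression in \eqref{R-SV-D} using $\pi_\rho(\ell^-_{-i,-j}(u))$ produces the same answer, which amounts to checking that $q^{1/2}u\,\pi(L^-)-q^{-1/2}\rho\,\pi(L^+)$ and $q^{-1/2}u^{-1}\pi(L^+)-q^{1/2}\rho^{-1}\pi(L^-)$, multiplied by the corresponding denominators, coincide after restriction to $V^{\pm(1)}$. Once all signs and weights are tabulated the two claimed expressions follow by inspection.
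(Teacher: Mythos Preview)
Your proposal is correct and follows exactly the approach the paper indicates: the paper states only that the Lemma ``follows by directly evaluating \eqref{R-SV-D}'', and your plan is precisely that direct evaluation, spelled out in more computational detail. Your organization into diagonal, off-diagonal, and $i=-j$ cases, together with the consistency check via $\pi_\rho(\ell^-_{-i,-j}(u))$, is a sensible way to structure the calculation.
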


The Proposition below follows by an induction argument and lengthy but direct computations. The base of induction is given by Lemma \ref{L:R-SV-D2}.

\begin{prop} \label{P:R-SV-Dn}
The spinor-vector $R$-matrices of $U^{ex}_{q}(\mfL\mfso_{2n+2})$ for $n\ge2$ are elements of the space $\End(V^{\pm(n)} \ot \C^{2n+2})$ given by following recurrence formulas:
\aln{
R^{\pm(n)}(u,\rho) & = A^{\pm(n-1)}(u,\rho)\hot e^{(1)}_{-1,-1} + B^{\mp(n-1)}(u,\rho)\hot e^{(1)}_{-1,+1} \el
& \qu + C^{\pm(n-1)}(u,\rho)\hot e^{(1)}_{+1,-1} + D^{\mp(n-1)}(u,\rho)\hot e^{(1)}_{+1,+1}
}
where 
\aln{
A^{\pm(n-1)}(u,\rho) &= R^{\pm(n-1)}(u,\rho) +  I^{\pm(n-1)} \ot \Big( e_{-n-1,-n-1} + f_{q}(u,\rho)\,e_{n+1,n+1} \Big) ,
\\[.25em]
B^{\mp(n-1)}(u,\rho) &= \veps\,q^{-\frac14(2\ka+1)}\rho\,\frac{q-q^{-1}}{u-\rho} \Bigg( \sum_{\bm i} q^{\pm\frac14\veps\, i_1 \cdots i_{n-1}} \, b_{i_1,i_1} \hot e^{(1)}_{i_2,i_2} \hot \cdots \hot e^{(1)}_{i_{n-1},i_{n-1}} \\ & \hspace{2.5cm} \ot \Big(q^{-\frac12\sum_{l=1}^{n-1} i_l} e_{n+1,\,\mp\veps\,i_1 \cdots i_{n-1}} - q^{\frac12\sum_{l=1}^{n-1} i_l} e_{\pm\veps\,i_1 \cdots i_{n-1},-n-1}\Big) \\
& \qu + \sum_{\bm i \bm j} \sum_{k=1}^{n-1} \del^{k,1}_{i_1,j_1} \cdots \del^{k,n-1}_{i_{n-1},j_{n-1}} \,(i_1 j_1)^{\frac12(1\mp1)}(\veps\,\theta_{\bm i})^{\del_{k1}} (-1)^{k}\, q^{\frac14 i_k(2k-1)} \\[-0.5em] & \hspace{2.5cm} \times b_{i_1,j_1} \hot e^{(1)}_{i_2,j_2} \hot \cdots \hot e^{(1)}_{i_{n-1},j_{n-1}} \\ & \hspace{2.5cm} \ot \Big(q^{-\frac12\sum_{l=k+1}^{n-1} i_l} e_{n+1,i_k (k+1)} - q^{\frac12\sum_{l=k+1}^{n-1} i_l} e_{-i_k (k+1),-n-1}\Big) \Bigg) ,
}
\aln{
C^{\pm(n-1)}(u,\rho) &= \veps\,q^{\frac14(2\ka+1)} u\,\frac{q-q^{-1}}{u-\rho} \Bigg( \sum_{\bm i} q^{\mp\frac14\veps\,i_1 \cdots i_{n-1}} \, c_{i_1,i_1} \hot e^{(1)}_{i_2,i_2} \hot \cdots \hot e^{(1)}_{i_{n-1},i_{n-1}} \\ & \hspace{2.5cm} \ot \Big(q^{-\frac12\sum_{l=1}^{n-1} i_l} e_{-n-1,\,\pm\veps\,i_1 \cdots i_{n-1}} - q^{\frac12\sum_{l=1}^{n-1} i_l} e_{\mp\veps\,i_1 \cdots i_{n-1},n+1}\Big) \el 
& \qu + \sum_{\bm i \bm j} \sum_{k=1}^{n-1} \del^{k,1}_{i_1,j_1} \cdots \del^{k,n-1}_{i_{n-1},j_{n-1}} \, (i_1 j_1)^{\frac12(1\pm1)}(\veps\,\theta_{\bm i})^{\del_{k1}} (-1)^{k}\, \, q^{\frac14 i_k(2k-1)} \\[-0.5em] & \hspace{2.5cm} \times  c_{i_1,j_1} \hot e^{(1)}_{i_2,j_2} \hot \cdots \hot e^{(1)}_{i_{n-1},j_{n-1}} \\ & \hspace{2.5cm} \ot \Big(q^{-\frac12\sum_{l=k+1}^{n-1} i_l} e_{-n-1,i_k (k+1)} - q^{\frac12\sum_{l=k+1}^{n-1} i_l} e_{-i_k (k+1),n+1}\Big) \Bigg) ,
\\[0.25em]
D^{\mp(n-1)}(u,\rho) &= R^{\mp(n-1)}(u,\rho) +  I^{\mp(n-1)} \ot \Big( f_{q}(u,\rho)\,e_{-n-1,-n-1} + e_{n+1,n+1} \Big) 
}
with $\del^{kl}_{ij} = (1-\del_{kl})\,\del_{ij}+\del_{kl}\del_{i,-j}$ and $\veps=(-1)^{n-1}$, and the type of operators $b$ and $c$ is determined by requiring $B^{\mp(n-1)}(u,\rho) \in \Hom(V^{\mp(n-1)},V^{\pm(n-1)})$ and $C^{\pm(n-1)}(u,\rho) \in \Hom(V^{\pm(n-1)},V^{\mp(n-1)})$. For instance, when $n=2$, 
\aln{
B^{\mp(1)} &= q^{-\frac54}\rho\, \frac{q-q^{-1}}{u-\rho} \Big( {\pm} q^{\pm\frac14} f^{(\mp)}_{-1,-1} \ot \big( q^{\frac12} e_{3,\,\mp1} - q^{-\frac12} e_{\pm1,-3} \big) \\
& \hspace{2.85cm} \pm q^{-\frac14} f^{(\mp)}_{-1,+1} \ot \big( e_{3,-2} - e_{2,-3} \big) \mp q^{\frac14}f^{(\mp)}_{+1,-1} \ot \big(e_{32} - e_{-2,-3}\big) \\
& \hspace{2.85cm} - q^{\mp\frac14}f^{(\mp)}_{+1,+1} \ot \big( q^{-\frac12} e_{3,\,\pm1} - q^{\frac12} e_{\mp1,-3} \big) \Big) , 
\\[0.25em]
C^{\pm(1)} &= q^{\frac54} u\, \frac{q-q^{-1}}{u-\rho} \Big( {-} q^{\mp\frac14} f^{(\pm)}_{-1,-1} \ot ( q^{\frac12}e_{-3,\,\pm1} - q^{-\frac12}e_{\mp1,3}) \\
& \hspace{2.85cm}\mp q^{-\frac14} f^{(\pm)}_{-1,+1} \ot \big(e_{-3,-2} - e_{2,3}\big) \pm q^{\frac14}\,f^{(\pm)}_{+1,-1} \ot \big(e_{-3,2} - e_{-2,3}\big) \\
& \hspace{2.85cm}  - q^{\pm\frac14} f^{(\pm)}_{+1,+1} \ot \big( q^{-\frac12} e_{-3,\,\mp1} - q^{\frac12} e_{\pm1,3} \big) \Big) .
}
Here the $\End(\C^{2n+2})$-valued leg of $R^{\pm(n)}(u,\rho)$ is understood to be in the right-most space.
\end{prop}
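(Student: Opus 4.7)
The proof goes by induction on $n$, with the base case $n=1$ supplied by Lemma \ref{L:R-SV-D2}. For the inductive step I would start from the definition \eqref{R-SV-D}, substitute the spinor representation formula of Proposition \ref{P:Uq->Aq} together with the loop-algebra rule \eqref{pi-rho}, and write each entry of $R^{\pm(n)}(u,\rho)$ as an element of $\sigma^+(\mathscr{C}^{n+1,+}_q)$ restricted to $V^{\pm(n)}$. The key organising observation is that the rank-$(n{+}1)$ Clifford algebra $\mathscr{C}^{n+1,+}_q$ contains the rank-$n$ subalgebra $\mathscr{C}^{n,+}_q$ acting only on the first $n-1$ factors of $V^{\pm(n-1)}\hot \C^{1|1}$, while the extra-mode generators $a_{n+1}$, $a_{n+1}^\dag$, $\omega_{n+1}^{\pm1}$ act nontrivially on the final $\C^{1|1}$ factor. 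The plan is therefore to decompose $R^{\pm(n)}(u,\rho)$ according to whether these extra-mode generators appear, and to recognise the $\mathscr{C}^{n,+}_q$-only contributions as the lower-rank $R$-matrix provided by the induction hypothesis.

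Concretely, I would split the sum defining $R^{\pm(n)}(u,\rho)$ into four blocks matching the four matrix units $e^{(1)}_{\pm 1,\pm 1}$ on the extra $\C^{1|1}$ factor. The diagonal blocks come from those entries $\ell^\pm_{-i,-j}$ in which the index $n+1$ does not appear at all, or appears only through the diagonal generator $\omega_{n+1}^{\pm1}$. In the first case, by the induction hypothesis the $\mathscr{C}^{n,+}_q$-part reproduces $R^{\pm(n-1)}(u,\rho)$ exactly; in the second, the diagonal entries $\ell^\pm_{\pm(n+1),\pm(n+1)}$ under \eqref{pi-rho} supply the extra $e_{\pm(n+1),\pm(n+1)}$ summands with the announced $f_q(u,\rho)$ coefficient. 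The off-diagonal blocks $B^{\mp(n-1)}$ and $C^{\pm(n-1)}$ collect precisely those $\ell^\pm_{-i,-j}$ whose spinor image contains a single $a_{n+1}^\dag$ or $a_{n+1}$; these parity-changing generators land in $\Hom(V^{\pm(n-1)},V^{\mp(n-1)})$ after stripping the last factor, producing the operators $b_{i_1,j_1}$ and $c_{i_1,j_1}$ appearing in the formulas.

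The remaining work is to match signs, $q$-powers, and the $\del^{kl}_{ij}$ dichotomy inside $B^{\mp(n-1)}$ and $C^{\pm(n-1)}$. These arise from three separate sources: (i) the Clifford commutation relations \eqref{Cq:2}--\eqref{Cq:4}, which are needed to move a newly introduced $a_{n+1}^{(\dag)}$ past the string $\omega_{n}\cdots\omega_{j+1}$ in the spinor image of $\ell^-_{n+1,j}$ and $\ell^+_{j,n+1}$; (ii) the chirality-dependent structure of $\sigma^+$ on the modes $i=1,2$, which is responsible for the $\veps = (-1)^{n-1}$ prefactors and the $(\veps\,\theta_{\bm i})^{\del_{k,1}}$ signs discriminating the $k=1$ row from the rest; and (iii) the graded tensor reordering \eqref{factor0} used to isolate the $e^{(1)}_{\pm 1,\pm 1}$ factor corresponding to mode $n+1$, which reshuffles an overall sign determined by the $\Z_2$-grading of the $\mathscr{C}^{n,+}_q$-part.

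The main obstacle is purely combinatorial: the bookkeeping of signs and exponents is delicate because the recursion mixes three separate sign conventions, namely the chirality labels $\pm$ of $V^{\pm}$, the parity $\theta_{\bm i}$ of $\bm i$, and the $\pm$ signs emerging from the graded tensor product, and the formulas must be verified separately for $n$ odd and $n$ even (as reflected in $\veps$). Once the base case has been checked and the four-block decomposition is set up, the induction reduces to a term-by-term comparison with no further conceptual input; this is the sense in which the computation is lengthy but direct.
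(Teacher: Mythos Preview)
Your proposal is correct and matches the paper's own approach: the paper states only that the proposition ``follows by an induction argument and lengthy but direct computations'' with base case Lemma~\ref{L:R-SV-D2}, and your outline simply unpacks what those computations consist of---the four-block decomposition with respect to the extra $\C^{1|1}$ factor, the identification of the diagonal blocks with the lower-rank $R$-matrix via the inductive hypothesis, and the careful sign/$q$-power bookkeeping coming from the Clifford relations, the chirality-dependent structure of $\sigma^+$, and the graded tensor reordering.
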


The Lemma below follows directly from properties of the $L$-operators $L^\pm(u)$ and \eqref{R-SV-D}.

\begin{lemma}
The spinor-vector $R$-matrices of $U^{ex}_{q}(\mfL\mfso_{2n+2})$ satisfy the equations
\[
R^{\pm(n)}_{12}(u,\rho) \,R^{\pm(n)}_{13}(v,\rho)\,R_{q^2,23}(v,u) = R_{q^2,23}(v,u)\, R^{\pm(n)}_{13}(v,\rho) \,R^{\pm(n)}_{12}(u,\rho) .
\]
\end{lemma}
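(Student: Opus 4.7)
The equation is a mixed Yang--Baxter relation on $V^{\pm(n)}\ot\C^{2n+2}\ot\C^{2n+2}$, with slot~$1$ the spinor space and slots~$2,3$ the vector spaces. As in the $B_n$ case treated just above, it will be obtained from the defining $RLL$-relations of $U_q^{ex}(\mfL\mfso_{2n+2})$ by evaluating the algebra factor in the spinor representation. Concretely, start from the first identity of \eqref{RLL1} for two $L^+$-operators, with vector auxiliary slots $2,3$ and algebra factor acting on an additional slot~$1$:
\[
R_q(u,v)\,L^+_2(u)\,L^+_3(v) = L^+_3(v)\,L^+_2(u)\,R_q(u,v),
\]
and apply the algebra homomorphism $\si^+\circ\pi_\rho$ of Proposition~\ref{P:Uq->Aq} (extended to the loop algebra by \eqref{pi-rho}) to the algebra factor of each $L^+$-operator.

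The relation is preserved, and by the definition \eqref{R-SV-D} the image of $L^+_a(u)$ coincides with $R^{\pm(n)}_{1a}(u,\rho)$ up to conjugation by the antidiagonal matrix $J_a=\sum_i e_{-i,i}$ acting in vector slot $a$, which is precisely what implements the index reversal $e_{ij}\mapsto e_{-i,-j}$ appearing in \eqref{R-SV-D}. Conjugating both sides of the displayed $RLL$-identity by $J_2\ot J_3$ therefore yields the desired equation, provided that $(J_2\ot J_3)\,R_q(u,v)\,(J_2\ot J_3)$ coincides, up to an overall scalar that cancels, with the vector-vector $R$-matrix appearing on the right-hand side of the lemma evaluated at $(v,u)$. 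This is a direct computation from \eqref{Ru}--\eqref{RPQ}: the term $P$ is manifestly $J\ot J$-invariant, $Q_q$ transforms via $q\mapsto q^{-1}$ with a tensor-factor swap, and the remaining piece transforms compatibly; together with the $\mfso_N$-type crossing/unitarity symmetry of the vector $R$-matrix, this gives the required exchange $u\leftrightarrow v$.

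The principal obstacle is this last identification, which is a finite calculation in $\End(\C^{2n+2})^{\ot 2}$; all other ingredients --- the $RLL$-relation, the algebra homomorphism $\si^+\circ\pi_\rho$, and the compatibility of the image with the chirality decomposition $\La=\La^+\op\La^-$ preserved by $\si^+$ --- are already in place. As a consistency check one may repeat the derivation starting from the $RLL$-relation for two $L^-$-operators together with the equivalent $L^-$-form of $R^{\pm(n)}$ in \eqref{R-SV-D}, which yields the same equation with intermediate roles of $u$ and $v$ interchanged and therefore pins down the overall normalisation.
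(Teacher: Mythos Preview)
Your approach is essentially the paper's: the paper simply says the lemma ``follows directly from properties of the $L$-operators $L^\pm(u)$ and \eqref{R-SV-D}'', i.e.\ apply the spinor representation $\si^+\circ\pi_\rho$ to the $RLL$-relation and identify the result. You have added the detail about conjugating by $J_a=\sum_i e_{-i,i}$ to account for the index reversal $\ell^+_{-i,-j}$ in the definition \eqref{R-SV-D}, which is correct and necessary.

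There is, however, a small slip in your bookkeeping. After conjugating the $RLL$-identity by $J_2\ot J_3$ one obtains
\[
\tilde R_{23}\, R^{\pm(n)}_{12}(u,\rho)\, R^{\pm(n)}_{13}(v,\rho)
= R^{\pm(n)}_{13}(v,\rho)\, R^{\pm(n)}_{12}(u,\rho)\, \tilde R_{23},
\qquad \tilde R_{23}:=(J_2\ot J_3)\,R_{23}(u,v)\,(J_2\ot J_3).
\]
A direct check on $R_q$, $P$, $Q_q$ (as you outline) gives $(J\ot J)\,R(u,v)\,(J\ot J)=P\,R(u,v)\,P=R_{21}(u,v)$, \emph{not} $R(v,u)$. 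By the unitarity of the vector $R$-matrix one has $R_{21}(u,v)\propto R_{12}(v,u)^{-1}$, so $\tilde R_{23}\propto R_{23}(v,u)^{-1}$. Inverting then moves the vector $R$-matrix from the outer positions (left on the LHS, right on the RHS) to the inner positions, and this is exactly the displayed identity of the lemma. So the condition you need is $\tilde R_{23}\propto R_{23}(v,u)^{-1}$, not $\tilde R_{23}\propto R_{23}(v,u)$; since you already invoke unitarity this is a minor correction rather than a gap.
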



\subsection{Spinor-spinor $R$-matrices} \label{sec:RSS}

We define the spinor-spinor $R$-matrix of $U^{ex}_{q^2}(\mfL\mfso_{2n+1})$ as a $U^{ex}_{q^2}(\mfL\mfso_{2n+1})$-equivariant map in the superspace $\End(\C^{n|n} \ot \C^{n|n})$, i.e.\ it is a solution to the intertwining equation
\ali{
& (\si \ot \si) \circ (\pi_{v} \ot \pi_{u})  (\Delta'(\ell^\pm_{ij}(w)))\, R^{(n,n)}(u,v) \el
& \qq = R^{(n,n)}(u,v)\,(\si \ot \si) \circ (\pi_{v} \ot \pi_{u})  (\Delta(\ell^\pm_{ij}(w)))  \label{RSS-inv-B}
}
for all $-n \le i,j \le n$, where $\Delta'$ denotes the opposite coproduct. Our goal is to find a recurrence formula for $R^{(n,n)}(u,v)$.
Introduce rational functions
\equ{
\al(u,v) = \frac{v-u}{q v - q^{-1} u}, \qq
\be(u,v) = \frac{q - q^{-1}}{q v - q^{-1} u} . \label{al-be}
}

All the technical statements presented below are obtained using induction arguments and/or lengthy but direct computations. For instance, Lemma \ref{L:RSS-B1} follows by solving the intertwining equation \eqref{RSS-inv-B} for $n=1$. This Lemma then serves as the base of induction in verifying Proposition \ref{P:RB}. We leave the technical details to an interested reader.

\begin{lemma} \label{L:RSS-B1}
The spinor-spinor $R$-matrix of $U^{ex}_{q^2}(\mfL\mfso_{3})$ is an element of $\End(\C^{1|1} \ot \C^{1|1})$~given~by
\ali{ 
R^{(1,1)}(u,v) &= e^{(1)}_{-1,-1} \ot e^{(1)}_{-1,-1} + e^{(1)}_{11} \ot e^{(1)}_{11} \el
& \qu + \al(u,v) \,( e^{(1)}_{-1,-1} \ot e^{(1)}_{11}  + e^{(1)}_{11} \ot e^{(1)}_{-1,-1} ) \el 
& \qu + \be(u,v) \,( v\,e^{(1)}_{-1,1} \ot e^{(1)}_{1,-1} + u\,e^{(1)}_{1,-1} \ot e^{(1)}_{-1,1} ).  \label{R1}
}
\end{lemma}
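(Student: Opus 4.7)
The plan is to solve the intertwining equation \eqref{RSS-inv-B} directly at $n=1$ and read off $R^{(1,1)}(u,v)$ from the resulting linear system. First I would write down the images $(\si\circ\pi)(\ell^\pm_{ij})$ for $-1\le i,j\le 1$ as $2\times 2$ supermatrices on $\C^{1|1}$, using Proposition \ref{P:Uq->Aq} with $q$ replaced by $q^2$ together with the mapping $\si$ in \eqref{sigma}. After promoting these to $(\si\circ\pi_\rho)(\ell^\pm_{ij}(w))$ via \eqref{pi-rho}, each generator becomes an explicit $2\times 2$ matrix whose entries are rational functions of $w$ with poles at $w=\rho^{\pm1}$; these are the building blocks of what follows.

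Next, I would cut down the ansatz for $R^{(1,1)}(u,v)$ using weight preservation. The Cartan-type generators $\ell^\pm_{ii}$ act diagonally, so imposing \eqref{RSS-inv-B} for these forces $R^{(1,1)}(u,v)$ to lie in the six-dimensional subspace of $\End(\C^{1|1}\ot\C^{1|1})$ spanned by the four diagonal operators $e^{(1)}_{ii}\ot e^{(1)}_{jj}$ and the two weight-balanced off-diagonal operators $e^{(1)}_{-1,1}\ot e^{(1)}_{1,-1}$ and $e^{(1)}_{1,-1}\ot e^{(1)}_{-1,1}$. Write the ansatz as a combination with six unknown scalar rational functions of $u$ and $v$.

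Then I would impose \eqref{RSS-inv-B} for a single off-diagonal generator, say $\ell^+_{-1,1}(w)$. Using $\Delta(\ell^\pm_{ij}(w))=\sum_k\ell^\pm_{ik}(w)\ot\ell^\pm_{kj}(w)$ together with the single-site images from Step~1, both sides become explicit supermatrices on $\C^{1|1}\ot\C^{1|1}$ whose entries are rational in $w$ with coefficients linear in the six unknowns. Matching coefficients of the independent $w$-poles (or equivalently reading the leading and subleading behaviour at $w\to\infty$) yields a small linear system that, after normalising the coefficient of $e^{(1)}_{-1,-1}\ot e^{(1)}_{-1,-1}$ to $1$, is solved uniquely by \eqref{R1}, with the rational functions $\al(u,v)$ and $\be(u,v)$ from \eqref{al-be} emerging by partial fractions. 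A sanity check against one further generator, e.g.\ $\ell^-_{1,-1}(w)$, then confirms equivariance, since the remaining $\ell^\pm_{ij}(w)$ are obtained by products of the already checked ones.

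The main obstacle is bookkeeping rather than conceptual: one must track the super-signs from \eqref{factor0} when the opposite coproduct $\Delta'$ permutes tensor factors, the $q^{\pm1/2}$-shifts introduced by $\pi$ and $\pi_\rho$, and the unusual constants like $(-1-q^2)^{-1/2}$ arising in Proposition \ref{P:Uq->Aq} for the $j=0$ generators. Once these are tabulated correctly, the linear equations collapse to elementary identities between $\al(u,v)$, $\be(u,v)$ and $f_{q^2}$-type rational functions, reproducing the stated formula.
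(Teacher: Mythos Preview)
Your proposal is correct and follows exactly the approach the paper indicates: as stated just before Lemma~\ref{L:RSS-B1}, the result ``follows by solving the intertwining equation \eqref{RSS-inv-B} for $n=1$'', and you have spelled out a sensible way to carry this out (weight preservation to reduce to a six-parameter ansatz, then imposing the off-diagonal intertwining relations to pin down $\al,\be$). One minor remark: for $n=1$ there is no graded tensor structure inside each factor, and the outer $\ot$ in $\End(\C^{1|1}\ot\C^{1|1})$ is the ordinary tensor product, so the super-sign bookkeeping you flag as an obstacle is essentially absent at this level.
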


\begin{rmk} 
As an operator in $\mathscr{C}^1_{q^2} \ot \mathscr{C}^1_{q^2}$, the spinor-spinor $R$-matrix of $U_{q^2}(\mfL\mfso_{3})$ has the unique form
\aln{
\mc{R}^{(1,1)}(u,v) &= 1 - a^\dag_{1} \om_{1} a_{1} \ot 1 - 1 \ot a^\dag_{1} \om_{1} a_{1} + a^\dag_{1} a_{1}\ot a^\dag_{1} a_{1} + a^\dag_{1} \om_{1} a_{1}\ot a^\dag_{1} \om_{1} a_{1} \\
& \qu + \al(u,v)\,(a^\dag_{1}\, \om_{1}\, a_{1}\ot \om_{1} + \om_{1} \ot a^\dag_{1}\om_{1}\, a_{1} \\ & \hspace{2.2cm} - q^{-2} a^\dag_{1} a_{1} \ot a^\dag_{1} \om_{1} a_{1} - q^{-2}a^\dag_{1} \om_{1} a_{1}\ot a^\dag_{1} a_{1} ) \\
& \qu + \be(u,v)\,( v\,\om_{1} a_{1}\ot a^\dag_{1} + u\, a^\dag_{1} \om_{1}\ot a_{1} ) .
}
When $n\ge2$ the explicit form of $\mc{R}^{(n,n)}(u,v) \in \mr{C}^n_{q^2} \ot \mr{C}^n_{q^2}$ is not unique, however the transition elements are unique in the sense that the image of $\mc{R}^{(n,n)}(u,v)$ in $\End(\C^{n|n}\ot \C^{n|n})$ is unique. 
\end{rmk}

\begin{prop} \label{P:RB}
The spinor-spinor $R$-matrix of $U^{ex}_{q^2}(\mfL\mfso_{2n+1})$ when $n\ge2$ in an element of the space $\End(\C^{n|n}\ot \C^{n|n})$ given by the following recurrence formula: 
\ali{ 
R^{(n,n)}(u,v) &= R^{(n-1,n-1)}(u,v) {}\hot {}(e^{(1)}_{-1,-1} \ot e^{(1)}_{-1,-1} + e^{(1)}_{11} \ot e^{(1)}_{11}) \el
& \qu + \al(u,v) \,R^{(n-1,n-1)}(u,q^4v){} \hot {}(e^{(1)}_{-1,-1} \ot e^{(1)}_{11}  + e^{(1)}_{11} \ot e^{(1)}_{-1,-1}) \el 
& \qu + \be(u,v) \, U^{(n-1,n-1)}(u,q^4v){} \hot {}( v\,e^{(1)}_{-1,1} \ot e^{(1)}_{1,-1} + u\,e^{(1)}_{1,-1} \ot e^{(1)}_{-1,1} ) \label{Rn}
}
where
\equ{
U^{(n-1,n-1)}(u,v) := R^{(n-1,n-1)}(q^4,1) P^{\prime(n-1,n-1)} R^{(n-1,n-1)}(u,v)  \label{Q=RPR}
}
and 
\[
P^{\prime(n-1,n-1)} := ( \ga \ot id)(P^{(n-1,n-1)}) = ( id \ot \ga)(P^{(n-1,n-1)})
\]
with $P^{(n-1,n-1)} := R^{(n-1,n-1)}(u,u)$, the permutation operator on $\C^{n|n} \ot \C^{n|n}$.
\end{prop}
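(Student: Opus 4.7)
The plan is induction on $n$. The base case $n=1$ is Lemma \ref{L:RSS-B1}, established by directly solving \eqref{RSS-inv-B} for $N=3$. For the inductive step, I would verify that the ansatz \eqref{Rn} satisfies \eqref{RSS-inv-B} for every generator $\ell^\pm_{ij}(w)$. Because \eqref{pi-rho} expresses $\pi_\rho(\ell^\pm_{ij}(w))$ as a rational function in $w/\rho$ with coefficients in $\pi(U_{q^2}(\mfso_{2n+1}))$, the problem reduces to checking $U_{q^2}(\mfso_{2n+1})$-equivariance at the degree-zero level, with $u,v$ entering through the coefficients prescribed by \eqref{pi-rho}.

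Exploiting the nested factorization \eqref{factor}, both $R^{(n,n)}(u,v)$ and the supermatrix images $(\si \circ \pi)(\ell^\pm_{ij}[0])$ decompose according to the last $\C^{1|1}$ tensor factor. The generators of $U_{q^2}(\mfso_{2n+1})$ split into two families: those lying in the embedded $U_{q^2}(\mfso_{2n-1})$ subalgebra, whose image under $\pi$ acts trivially on the final $\C^{1|1}$, and the remaining raising/lowering generators built from $a_n, a_n^\dag, \om_n$. Substituting \eqref{Rn} into \eqref{RSS-inv-B} and separating coefficients of the basis tensors in $\End(\C^{1|1}\ot\C^{1|1})$ yields two systems of identities on $\End(\C^{n-1|n-1}\ot\C^{n-1|n-1})$.

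The first family, coming from the $U_{q^2}(\mfso_{2n-1})$ generators, requires $R^{(n-1,n-1)}(u,v)$, $R^{(n-1,n-1)}(u,q^4 v)$ and $U^{(n-1,n-1)}(u,q^4 v)$ each to intertwine $\Delta$ with $\Delta'$ at level $n-1$. The first two follow immediately from the induction hypothesis at shifted spectral parameters. The third follows because $P^{\prime(n-1,n-1)}$ is $U_{q^2}(\mfso_{2n-1})$-equivariant by construction --- the twist $\ga$ supplies exactly the $\Z_2$-grading sign needed to turn $\Delta$ into $\Delta'$ under the swap of tensor factors --- so the composition \eqref{Q=RPR} inherits equivariance.

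The main obstacle is the second family, arising from the raising/lowering generators at the $n$th site. Here \eqref{RSS-inv-B} produces non-trivial cross-relations linking the diagonal blocks (with coefficients $R^{(n-1,n-1)}(u,v)$ and $R^{(n-1,n-1)}(u,q^4 v)$) to the off-diagonal blocks (with coefficients $v\,U^{(n-1,n-1)}(u,q^4 v)$ and $u\,U^{(n-1,n-1)}(u,q^4 v)$). The spectral shift $v \mapsto q^4 v$ is forced by matching $\om_n^{\pm 1}$ eigenvalues: with deformation parameter $q^2$ the operator $\om_n$ acts by $q^{\pm 2}$ on weight vectors, so a crossing of the $n$th slot inherits an extra factor of $q^4$. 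The definition $U = R(q^4,1) P' R$ is the unique combination that closes these cross-relations --- the trailing $R^{(n-1,n-1)}(u,v)$ carries the correct spectral data to match the diagonal blocks, while $R^{(n-1,n-1)}(q^4,1) P^{\prime(n-1,n-1)}$ supplies the graded twist needed when the off-diagonal tensors $e^{(1)}_{\pm1,\mp1}$ exchange the two $\C^{n-1|n-1}$ slots. Verifying these cross-relations in detail constitutes the bulk of the direct computation deferred by the author; it ultimately reduces to a Yang--Baxter-type identity for $R^{(n-1,n-1)}$ evaluated at the specialization $(q^4,1)$, which itself follows by induction.
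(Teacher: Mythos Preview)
Your proposal is correct and follows the same approach as the paper: induction on $n$ with Lemma~\ref{L:RSS-B1} supplying the base case, and direct verification of the intertwining equation~\eqref{RSS-inv-B} for the inductive step. The paper itself omits the details, remarking only that the statements ``are obtained using induction arguments and/or lengthy but direct computations'' and that ``Lemma~\ref{L:RSS-B1} \ldots serves as the base of induction in verifying Proposition~\ref{P:RB}''; your splitting of the generators into the embedded $U_{q^2}(\mfso_{2n-1})$ subalgebra versus those involving $a_n, a_n^\dag, \om_n$ is a sensible way to organize the omitted computation.
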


\begin{lemma} \label{L:R-cross-B}
The inverse of the spinor-spinor $R$-matrix of $U^{ex}_{q^2}(\mfL\mfso_{2n+1})$ is given by
\equ{
R^{(n,n)}_{q^{-1}}(u,v) = P^{(n,n)} R^{(n,n)}(v,u)\,P^{(n,n)} = (R^{(n,n)}(u,v))^{-1}. \label{RB-inv}
}
Moreover, the spinor-spinor $R$-matrix is crossing symmetric, that is
\equ{
(R^{(n,n)}(q^{4n-2}u,v))^{\bar w_1} = (R^{(n,n)}(q^{4n-2}u,v))^{w_2} = h^{(n)}(u,v)\, (R^{(n,n)}(u,v))^{-1}  \label{RB-cross}
}
with $h^{(n)}(u,v) := \prod_{j=1}^n\al(q^{4j-2}u,v)$ and the $q$-transposition $w$ defined via \eqref{w1}.
\end{lemma}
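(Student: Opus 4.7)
The proof proceeds by induction on $n$, with Lemma \ref{L:RSS-B1} supplying the base case $n=1$ and the recurrence \eqref{Rn} driving the inductive step. Both identities are handled in parallel, since they share the same nested block structure.

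For $n=1$, both equalities in \eqref{RB-inv} follow by direct substitution of \eqref{R1}. The first equality $R^{(1,1)}_{q^{-1}}(u,v) = P^{(1,1)} R^{(1,1)}(v,u)\,P^{(1,1)}$ is immediate once one notes that $\al(u,v)|_{q\to q^{-1}} = \al(v,u)$ and $\be(u,v)|_{q\to q^{-1}} = \be(v,u)$. The unitarity $R^{(1,1)}(u,v)\,P^{(1,1)} R^{(1,1)}(v,u)\,P^{(1,1)} = I$ reduces, after collecting independent tensor monomials, to the scalar identity $\al(u,v)\al(v,u) + uv\,\be(u,v)\be(v,u) = 1$, which is a direct computation from \eqref{al-be}. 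For the crossing symmetry at $n=1$, the $q$-transposition $w$ of \eqref{w1} acts on the single graded factor of $\C^{1|1}$ by swapping the diagonal entries and multiplying the off-diagonal entries by $\pm q^{\pm 1}$; a direct check on \eqref{R1} with $u$ replaced by $q^2 u$ shows that $(R^{(1,1)}(q^2 u,v))^{w_2} = \al(q^2 u,v)\,(R^{(1,1)}(u,v))^{-1}$, and the $\bar w_1$-identity follows in the same way.

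For the inductive step of \eqref{RB-inv}, substituting \eqref{Rn} into $R^{(n,n)}(u,v)\,P^{(n,n)} R^{(n,n)}(v,u)\,P^{(n,n)}$ expands the product into terms supported on the four blocks $e^{(1)}_{\pm1,\pm1}\ot e^{(1)}_{\pm1,\pm1}$ of the outer $\End(\C^{1|1}\ot\C^{1|1})$-leg. Inside each block, the reduced product of inner operators simplifies by the inductive hypothesis together with the definition \eqref{Q=RPR} of $U^{(n-1,n-1)}$, and the surviving scalar coefficients collect to $I$ via the same identity in $\al,\be$ used at $n=1$. For the inductive step of \eqref{RB-cross}, the key observation is that $\vartheta_{\bm i}$ in \eqref{w1} is additive across the graded tensor factors, so $w_1$ and $\bar w_2$ act factor-by-factor on the nested decomposition \eqref{Rn}. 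Applying $w_1$ to the recurrence with $u \mapsto q^{4n-2} u$ yields a nested expression in $(R^{(n-1,n-1)}(q^{4(n-1)-2}\cdot q^4 u,v))^{w_1}$, to which the inductive hypothesis applies since the shift $q^{4n-2} = q^{4(n-1)-2}\cdot q^4$ exactly absorbs the internal $q^4$ of the recurrence; the overall scalar collects correctly thanks to $h^{(n)}(u,v) = \al(q^{4n-2}u,v)\,h^{(n-1)}(u,v)$. The $\bar w_1$-identity is obtained by the same argument.

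The chief technical obstacle will be the bookkeeping of signs and phase factors: tracking the signs produced by the graded tensor reorderings \eqref{factor0} and the $q$-powers $q^{\vartheta_{\bm i}-\vartheta_{\bm j}}$ when pulling $w_1$ through the nested blocks of \eqref{Rn}, together with identifying which of the scalar identities in $\al,\be$ closes each of the four blocks in the unitarity step.
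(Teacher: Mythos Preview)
Your proposal follows the same approach the paper takes: the paper does not supply a detailed proof of this lemma, stating only that the technical statements in this subsection are ``obtained using induction arguments and/or lengthy but direct computations'', with Lemma~\ref{L:RSS-B1} providing the base case. Your induction on $n$ via the recurrence \eqref{Rn}, with the $n=1$ case verified directly on \eqref{R1}, is exactly this strategy, and your identification of the sign and $q$-power bookkeeping as the main technical burden is accurate.
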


\begin{lemma} \label{L:YBE-RB}
The spinor $R$-matrices of $U^{ex}_{q^2}(\mfL\mfg_{2n+1})$ satisfy the following quantum Yang-Baxter equations:
\ali{
R^{(n,n)}_{12}(u,v)\,R^{(n,n)}_{13}(u,w)\,R^{(n,n)}_{23}(v,w) &= R^{(n,n)}_{23}(v,w)\,R^{(n,n)}_{13}(u,w)\,R^{(n,n)}_{12}(u,v) , \label{YBE-RB}
\\
R_{12}^{(n,n)}(u,v)\,R_{13}^{(n)}(u,\rho)\,R_{23}^{(n)}(v,\rho) &= R_{23}^{(n)}(v,\rho)\,R_{12}^{(n)}(u,\rho)\,R_{12}^{(n,n)}(u,v) . \label{YBE-RB-V}
}
\end{lemma}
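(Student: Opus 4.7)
The plan is to prove both Yang-Baxter equations by the standard intertwiner-uniqueness argument: exhibit each side as a $U^{ex}_{q^2}(\mfL\mfso_{2n+1})$-equivariant map between the same pair of triple tensor product representations, invoke generic irreducibility to conclude proportionality, and fix the scalar by evaluation on a suitable extremal weight vector. For \eqref{YBE-RB}, let $V_u, V_v, V_w$ denote the spinor evaluation representations on $\C^{n|n}$ with parameters $u,v,w$ obtained via $\si \circ \pi_{\cdot}$. The defining intertwining property \eqref{RSS-inv-B} of $R^{(n,n)}$ says that $R^{(n,n)}_{ij}(a,b)$ interchanges the $i$th and $j$th tensor factors (swapping the relevant $\Delta$ with $\Delta'$). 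Iterating this observation shows that both sides of \eqref{YBE-RB} are intertwiners $V_u \ot V_v \ot V_w \to V_w \ot V_v \ot V_u$. For \eqref{YBE-RB-V}, by the defining formula \eqref{R-SV-B} the spinor-vector matrix $R^{(n)}(u,\rho)$ similarly intertwines $V_u \ot (\C^{2n+1})_\rho$ with $(\C^{2n+1})_\rho \ot V_u$, so both sides of \eqref{YBE-RB-V} become intertwiners $V_u \ot V_v \ot (\C^{2n+1})_\rho \to (\C^{2n+1})_\rho \ot V_v \ot V_u$.

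At generic spectral parameters these intertwiner spaces are one-dimensional, so the two sides of each YBE differ only by a scalar. The scalar is pinned down by evaluating on a tensor product of extremal weight vectors. For \eqref{YBE-RB}, an easy induction based on the recurrence \eqref{Rn} and the base case Lemma \ref{L:RSS-B1} shows that $x(\bm 0) \ot x(\bm 0)$ is fixed by $R^{(n,n)}(a,b)$ for all $a,b$, so both sides of \eqref{YBE-RB} fix $x(\bm 0)^{\ot 3}$, giving equality. For \eqref{YBE-RB-V} a similar direct inspection of Proposition \ref{P:R-SV-Bn} produces an eigenvector of $R^{(n)}(a,\rho)$ whose corresponding eigenvalue factors nicely across both sides of \eqref{YBE-RB-V}, once again forcing the proportionality constant to be $1$.

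The main technical obstacle is the justification that the relevant intertwiner spaces are one-dimensional at generic spectral parameters. This is a standard generic-irreducibility statement for tensor products of evaluation representations of $U^{ex}_{q^2}(\mfL\mfso_{2n+1})$: it can be established by degenerating to the classical limit $q \to 1$, where the corresponding triple tensor products of $U(\mfL\mfso_{2n+1})$ are known to be irreducible at generic parameters, and then extending by algebraic continuity in $q$ and in the spectral parameters. A wholly algebraic alternative which avoids this irreducibility step is to proceed by induction on $n$ using the explicit recurrence formulae in Propositions \ref{P:RB} and \ref{P:R-SV-Bn}; the inductive step reduces each identity at rank $n$ to the same identity at rank $n-1$ together with scalar identities in $\al, \be$ and $f_q$, and the base case $n=1$ amounts to a finite-dimensional verification that follows from Lemmas \ref{L:R-SV-B1} and \ref{L:RSS-B1}.
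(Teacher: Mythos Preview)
Your proposal is correct. The paper itself gives no explicit proof of this lemma; it falls under the blanket remark preceding Lemma~\ref{L:RSS-B1} that the technical statements of this subsection ``are obtained using induction arguments and/or lengthy but direct computations,'' with details left to the reader. Your alternative approach---induction on $n$ via the recurrences in Propositions~\ref{P:RB} and~\ref{P:R-SV-Bn}, with the $n=1$ base case checked directly from Lemmas~\ref{L:R-SV-B1} and~\ref{L:RSS-B1}---is thus precisely what the paper has in mind.

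Your primary approach via intertwiner uniqueness is genuinely different and more structural: it trades the explicit recursive verification for a Schur-lemma argument resting on generic irreducibility of triple tensor products of evaluation modules. This is the standard representation-theoretic route to Yang--Baxter equations and has the virtue of making the equality conceptually transparent; the scalar normalisation via the highest-weight vector $x(\bm 0)^{\ot 3}$ (resp.\ $x(\bm 0)^{\ot 2}\ot e_{-n}$) is indeed immediate from the recurrences, as you say. Two small remarks. First, the $R$-matrices here do not literally permute tensor factors as linear maps (that would be $\check R = PR$); rather, they intertwine the $\Delta$- and $\Delta'$-module structures on the same underlying space, so your phrasing ``intertwiners $V_u\ot V_v\ot V_w \to V_w\ot V_v\ot V_u$'' should be read accordingly. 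Second, equation~\eqref{YBE-RB-V} is in fact very close to a direct matrix-form rewriting of the defining intertwining equation~\eqref{RSS-inv-B} once one unpacks the definition~\eqref{R-SV-B} of $R^{(n)}$ together with the coproduct~\eqref{Hopf}; so for that equation the full intertwiner-uniqueness machinery (and in particular the generic-irreducibility input) is arguably not needed, though your argument covers it either way.
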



We define the spinor-spinor $R$-matrices of $U^{ex}_{q}(\mfL\mfso_{2n+2})$ as $U^{ex}_{q}(\mfL\mfso_{2n+2})$-equivariant maps in the space $\End(V^{\eps_1(n)} \ot V^{\eps_2(n)})$ with $\eps_1,\eps_2 = \pm$, i.e.\ they are solutions to the intertwining equation
\ali{
& (\si^+ \ot \si^+) \circ (\pi_{v} \ot \pi_{u})  (\Delta'(\ell^\pm_{ij}(w)))\, R^{\eps_1\eps_2(n,n)}(u,v) \el
& \qq = R^{\eps_1\eps_2(n,n)}(u,v)\, (\si^+ \ot \si^+) \circ (\pi_{v} \ot \pi_{u})  (\Delta(\ell^\pm_{ij}(w))) 
}
for all $-n \le i,j \le n$.

\begin{lemma} 
The spinor-spinor $R$-matrices of $U^{ex}_{q}(\mfL\mfso_{4})$ are elements of $\End(V^{\pm(1)} \ot V^{\pm(1)})$ and $\End(V^{\pm(1)} \ot V^{\mp(1)})$ given by
\ali{ 
R^{\pm\pm(1,1)}(u,v) &= e^{(\pm)}_{-1,-1} \ot e^{(\pm)}_{-1,-1} + e^{(\pm)}_{+1,+1} \ot e^{(\pm)}_{+1,+1} \el
& \qu + \al(u,v) \,( e^{(\pm)}_{-1,-1} \ot e^{(\pm)}_{+1,+1}  + e^{(\pm)}_{+1,+1} \ot e^{(\pm)}_{-1,-1} ) \el 
& \qu + \be(u,v) \,( v\,e^{(\pm)}_{-1,+1} \ot e^{(\pm)}_{+1,-1} + u\,e^{(\pm)}_{+1,-1} \ot e^{(\pm)}_{-1,+1} ) \label{R1++}
}
and $R^{\pm\mp(1,1)}(u,v)= I^{\pm\mp(1,1)} := \sum_{i,j} e^{(\pm)}_{ii} \ot e^{(\mp)}_{jj} $, the identity operator in $\End(V^{\pm(1)} \ot V^{\mp(1)})$.
\end{lemma}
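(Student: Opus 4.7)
The plan is to solve the intertwining equation for $n=1$ directly and verify the claimed formulas by combining structural input from the exceptional isomorphism $\mfso_4 \cong \mathfrak{sl}_2 \oplus \mathfrak{sl}_2$ with a direct check on generators. First I would write down the explicit action of the $U^{ex}_q(\mfL\mfso_4)$ generators $\ell^\pm_{ij}(w)$ on the spinor spaces $V^{\pm(1)}$ via Proposition \ref{P:Uq->Aq} composed with $\si^+$ and the extension rule \eqref{pi-rho}. Since $V^{\pm(1)}$ are each two-dimensional, every generator becomes a $2\times2$ block acting on the four-dimensional tensor product space; this reduces the intertwining equation to a manageable linear system on the $16$ matrix entries of $R^{\eps_1\eps_2(1,1)}(u,v)$.

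A key preliminary simplification is that, at the Lie algebra level, $\mfso_4 \cong \mathfrak{sl}_2^{[+]} \oplus \mathfrak{sl}_2^{[-]}$, and the spinor representations $\La^\pm$ are the fundamental representations of opposite $\mathfrak{sl}_2$-factors. This carries over at generic $q$ via the $q$-Clifford realisation: inspecting the formulas defining $\si^+$ one sees that the generators acting non-trivially on $V^{+(1)}$ are precisely those acting trivially (as scalars) on $V^{-(1)}$, and vice versa. I would use this observation both as a sanity check and as the conceptual engine of the argument.

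For the same-chirality case, once we restrict to $V^{\pm(1)} \ot V^{\pm(1)}$ only one $\mathfrak{sl}_2$-factor is active on both tensor legs, so the intertwining equation collapses to the standard $U_q(\mathfrak{sl}_2)$-equivariance condition between two fundamental representations. The unique (up to normalisation) solution is the trigonometric six-vertex $R$-matrix, which is precisely the Yang-type structure in \eqref{R1++}; this parallels Lemma \ref{L:RSS-B1} for $\mfso_3$, with $V^{\pm(1)}$ playing the role of $\C^{1|1}$. I would verify \eqref{R1++} by checking the intertwining equation on the Chevalley-type generators $\ell^\pm_{ij}[0]$ with $|i-j|=1$ and on the Cartan-like generators $\ell^\pm_{ii}[0]$, then extend to all $\ell^\pm_{ij}(w)$ using the coproduct \eqref{Hopf} and the defining $RLL$-relations \eqref{RLL1}, which propagate intertwining from degree zero to all levels of the loop grading. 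The spectral dependence on $u,v$ is forced by the degree-one generators, which introduce the factors $\al(u,v)$ and $\be(u,v)$.

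For the mixed-chirality case, the identity $I^{\pm\mp(1,1)}$ is a solution because $\Delta(\ell^\pm_{ij}(w))$ and $\Delta'(\ell^\pm_{ij}(w))$ act identically on $V^{\pm(1)} \ot V^{\mp(1)}$. Indeed, by the decomposition above each term in the coproduct has one leg acting diagonally (through the Cartan part of the inactive $\mathfrak{sl}_2$-factor), and swapping the legs of a diagonal-times-arbitrary tensor is trivial; the sign factors from the graded tensor product \eqref{factor0} cancel in matched pairs. I would carry this out explicitly on the finite list of generators and then invoke uniqueness, which follows from the irreducibility of $V^{\pm(1)} \ot V^{\mp(1)}$ as a $U^{ex}_q(\mfL\mfso_4)$-module at generic spectral parameters.

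The main obstacle will be bookkeeping signs and gradings: the operators $e^{(\pm)}_{ij}$ and $f^{(\pm)}_{ij}$, together with the chirality-sensitive action of $\si^+$ encoded in the four-case formulas in Section 2.4, produce non-obvious sign patterns that must line up with the $\theta_{\bm i \bm j}$-factors appearing implicitly through the graded tensor product. Once the tabulated action of the generators is in hand, however, both parts of the lemma reduce to a finite, entry-by-entry check and the two recorded formulas emerge uniquely.
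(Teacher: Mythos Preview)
Your proposal is correct and, at its core, follows the paper's approach: the paper states that this lemma (like Lemma~\ref{L:RSS-B1}) is obtained by solving the intertwining equation directly, leaving the computation to the reader. Where you go beyond the paper is in invoking the exceptional isomorphism $\mfso_4 \cong \mathfrak{sl}_2 \oplus \mathfrak{sl}_2$ to explain \emph{why} the answer takes the form it does. This is a genuine improvement in clarity: it predicts in advance that $R^{\pm\pm(1,1)}$ must be the six-vertex matrix (only one $\mathfrak{sl}_2$-factor is active on both legs) and that $R^{\pm\mp(1,1)}$ must be the identity (the two legs see disjoint $\mathfrak{sl}_2$-factors, so $\Delta$ and $\Delta'$ coincide on the tensor product). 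The paper records these facts but does not articulate the structural reason. Your approach therefore buys conceptual transparency at no extra cost; the final verification is the same finite check on generators that the paper implicitly relies on. One small point to tighten: your remark that intertwining on degree-zero generators propagates to all of $\ell^\pm_{ij}(w)$ is justified not by the $RLL$-relations per se but by the form of the evaluation map \eqref{pi-rho}, which expresses $\pi_\rho(L^\pm(u))$ as a rational combination of the constant matrices $\pi(L^\pm)$, so that intertwining at degree zero already fixes everything.
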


\begin{lemma} \label{L:R2}
The spinor-spinor $R$-matrices of $U^{ex}_{q}(\mfL\mfso_{6})$ are elements of $\End(V^{\pm(2)}\ot V^{\pm(2)})$ and $\End(V^{\pm(2)}\ot V^{\mp(2)})$ given by
\ali{ 
R^{\pm\pm(2,2)}(u,v) &= R^{\pm\pm(1,1)}(u,v) {}\hot {} (e^{(1)}_{-1,-1} \ot e^{(1)}_{-1,-1} ) + R^{\mp\mp(1,1)}(u,v) {}\hot {} ( e^{(1)}_{+1,+1} \ot e^{(1)}_{+1,+1}) \el  
& \qu + \al(u,v)\, \Big( I^{\pm\mp(1,1)}{}\hot {} ( e^{(1)}_{-1,-1} \ot e^{(1)}_{+1,+1} ) + I^{\mp\pm(1,1)}{}\hot {} ( e^{(1)}_{+1,+1} \ot e^{(1)}_{-1,-1}) \Big) \el
& \qu - \be(u,v) \, \Big( v\,F^{\mp\pm(1,1)} {}\hot {}( e^{(1)}_{-1,+1} \ot e^{(1)}_{+1,-1} ) + u\, F^{\pm\mp(1,1)} {}\hot {} ( e^{(1)}_{+1,-1} \ot e^{(1)}_{-1,+1} ) \Big) , \label{R2++} \\[0.8em]
R^{\pm\mp(2,2)}(u,v) &= I^{\pm\mp(1,1)}{}\hot {} (e^{(1)}_{-1,-1} \ot e^{(1)}_{-1,-1} )  + I^{\mp\pm(1,1)}{}\hot {} ( e^{(1)}_{+1,+1} \ot e^{(1)}_{+1,+1}) \nn\\[0.2em]
& \qu + R^{\pm\pm(1,1)}(u,q^2v) {}\hot {} ( e^{(1)}_{-1,-1} \ot e^{(1)}_{+1,+1} ) + R^{\mp\mp(1,1)}(u,q^2v) \hot e^{(1)}_{+1,+1} \ot e^{(1)}_{-1,-1}) \el
& \qu - \frac{q-q^{-1}}{q^2v-q^{-2}u} \,\Big( v\,Q^{\mp\mp(1,1)} {}\hot {} ( e^{(1)}_{-1,+1} \ot e^{(1)}_{+1,-1} ) + u\, Q^{\pm\pm(1,1)} {}\hot {} ( e^{(1)}_{+1,-1} \ot e^{(1)}_{-1,+1}) \Big) \label{R2+-}
}
where 
\[ 
F^{\pm\mp(1,1)} := \sum_{i,j} f^{(\pm)}_{ij} \ot f^{(\mp)}_{ji} , \qq
Q^{\pm\pm(1,1)} := \sum_{i,j} (ij)\,q^{j-i} f^{(\pm)}_{ij} \ot f^{(\pm)}_{-i,-j} \,. 
\]
\end{lemma}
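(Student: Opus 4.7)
Because the spinor modules of $U^{ex}_q(\mfL\mfso_6)$ are irreducible and their tensor products are generically irreducible as well, the intertwining condition characterises each $R^{\eps_1\eps_2(2,2)}(u,v)$ uniquely up to an overall scalar, which is subsequently fixed by the coefficient of $e^{(\eps_1)}_{-1,-1}\ot e^{(\eps_2)}_{-1,-1}$. By the Hopf algebra structure and the formula \eqref{pi-rho} for $\pi_\rho$, verifying the intertwining relation reduces to checking it on the degree-$0$ generators $\ell^\pm_{ij}[0]$ (which generate $U_q(\mfso_6)$) together with one extra generator carrying the spectral-parameter dependence; all other generators follow from applying $\Delta$ iteratively.

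The first step is to decompose $V^{\pm(2)} \cong V^{\pm(1)} \hot \C^{1|1}$ so that each ansatz in \eqref{R2++} and \eqref{R2+-} splits into the four blocks indexed by the basis $e^{(1)}_{ij}$ of the last tensorand. Via Proposition \ref{P:Uq->Aq} composed with $\si^+$, the generators of the subalgebra $U_q(\mfso_4)\subset U_q(\mfso_6)$ act only on the first graded tensor factor, so the intertwining equation for these generators reduces, block by block, to identities for the already-constructed $R^{\pm\pm(1,1)}(u,v)$ and the identity operator $I^{\pm\mp(1,1)}$. In the mixed-chirality case \eqref{R2+-} the shifted argument $q^2v$ arises from the weight shift produced by the $\om_n$-factors acting on the last $\C^{1|1}$, while the $\be$-proportional blocks are necessarily built from the chirality-swapping operators $F^{\pm\mp(1,1)}$ and $Q^{\pm\pm(1,1)}$, which play the role of graded permutation partners of $R^{\pm\pm(1,1)}$ on the off-diagonal chirality sectors.

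Next I would verify equivariance with respect to the remaining generators, namely those whose $q$-Clifford image through Proposition \ref{P:Uq->Aq} involves $a_3$, $a_3^\dag$ or $\om_3$; these act non-trivially on both tensor factors and, in particular, swap the chirality of the $V^{\pm(1)}$ part. Applying $\Delta$ to these generators and evaluating via the anticommutation and braiding relations \eqref{Cq:1}--\eqref{Cq:4}, one checks that the $\al(u,v)$- and $\be(u,v)$-coefficients, the $u$- and $v$-prefactors, and the signs $(ij)$ and phases $q^{j-i}$ appearing in the definitions of $F^{\pm\mp(1,1)}$ and $Q^{\pm\pm(1,1)}$ are precisely those produced on both sides of the intertwining equation. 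The denominators $qv-q^{-1}u$ and $q^2v-q^{-2}u$ are then forced by matching the linear-fractional structure coming from \eqref{pi-rho}.

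The main obstacle is the bookkeeping of chirality labels and grading signs. For example, the minus sign in front of the $\be$-terms in \eqref{R2++} and the promotion $qv-q^{-1}u \leadsto q^2v-q^{-2}u$ in \eqref{R2+-} both originate from the product of the sign factor $\theta_{\bm i\bm j}$ coming from \eqref{factor0} with the chirality-swap hidden in $F^{\pm\mp(1,1)}$ and $Q^{\pm\pm(1,1)}$; getting these signs consistent across the four blocks is what makes the verification lengthy. Once this sign analysis is organised, the remaining work is the direct but routine computation alluded to by the author, and the resulting formulas serve as the base case for the recurrence that propagates to $U^{ex}_q(\mfL\mfso_{2n+2})$ for $n\ge 3$.
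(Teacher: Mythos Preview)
Your proposal is correct and follows essentially the same approach as the paper, which offers no detailed proof beyond the blanket remark that ``all the technical statements presented below are obtained using induction arguments and/or lengthy but direct computations'' by solving the intertwining equation; your outline is a reasonable fleshing-out of that directive. One small inaccuracy: by \eqref{pi-rho} the evaluation representation $\pi_\rho$ is built entirely from the degree-$0$ part $\pi$, so the spectral-parameter dependence is already encoded in the coefficients of $\pi(L^\pm)$ and no ``extra generator'' beyond the $\ell^\pm_{ij}[0]$ is required---the verification genuinely reduces to checking $U_q(\mfso_6)$-equivariance of the proposed operator together with the correct linear-fractional dependence on $(u,v)$.
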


\begin{prop} \label{P:RD}
The spinor-spinor $R$-matrices of $U^{ex}_{q}(\mfL\mfso_{2n+2})$ for $n>2$ are elements of the spaces $\End(V^{\pm(n)}\ot V^{\pm(n)})$ and $\End(V^{\pm(n)}\ot V^{\mp(n)})$ given by the following recurrence formulas:
\ali{ 
R^{\pm\pm(n,n)}(u,v) &= R^{\pm\pm(n-1,n-1)}(u,v) {}\hot {}(e^{(1)}_{-1,-1} \ot e^{(1)}_{-1,-1} ) \el & \hspace{2.5cm} + R^{\mp\mp(n-1,n-1)}(u,v) {}\hot {} (e^{(1)}_{+1,+1} \ot e^{(1)}_{+1,+1}) \el
& \qu + \al(u,v)\,\Big( R^{\pm\mp(n-1,n-1)}(u,q^2v){} \hot {}(e^{(1)}_{-1,-1} \ot e^{(1)}_{+1,+1}) \el & \hspace{2.5cm} + R^{\mp\pm(n-1,n-1)}(u,q^2v){} \hot {}( e^{(1)}_{+1,+1} \ot e^{(1)}_{-1,-1}) \Big) \el
& \qu - \be(u,v)\,\Big( v\, U^{\mp\pm(n-1,n-1)}(u,q^2v){} \hot {} ( e^{(1)}_{-1,+1} \ot e^{(1)}_{+1,-1} ) \el & \hspace{2.5cm} + u \, U^{\pm\mp(n-1,n-1)}(u,q^2v){} \hot {} ( e^{(1)}_{+1,-1} \ot e^{(1)}_{-1,+1} ) \Big) , \label{R++} 
}
\ali{
R^{\pm\mp(n,n)}(u,v) &= R^{\pm\mp(n-1,n-1)}(u,v) {}\hot {}(e^{(1)}_{-1,-1} \ot e^{(1)}_{-1,-1}) \el & \hspace{2.5cm} + R^{\mp\pm(n-1,n-1)}(u,v) {}\hot {}( e^{(1)}_{+1,+1} \ot e^{(1)}_{+1,+1}) \el 
& \qu + R^{\pm\pm(n-1,n-1)}(u,q^2v){} \hot {}(e^{(1)}_{-1,-1} \ot e^{(1)}_{+1,+1}) \el & \hspace{2.5cm} + R^{\mp\mp(n-1,n-1)}(u,q^2v){} \hot {} ( e^{(1)}_{+1,+1} \ot e^{(1)}_{-1,-1}) \el
& \qu + \frac{q-q^{-1}}{v-u} \, \Big( v\,U^{\mp\mp(n-1,n-1)}(u,q^2v) {} \hot {} ( e^{(1)}_{-1,+1} \ot e^{(1)}_{+1,-1}) \el & \hspace{2.5cm} + u\,U^{\pm\pm(n-1,n-1)}(u,q^2v) {} \hot {} ( e^{(1)}_{+1,-1} \ot e^{(1)}_{-1,+1} ) \Big) \label{R+-}
}
where 
\ali{
U^{\pm\mp(n-1,n-1)}(u,v) & := R^{\mp\pm(n-1,n-1)}(q^2,1)\,F^{\pm\mp(n-1,n-1)} R^{\pm\mp(n-1,n-1)}(u,v) , \label{U+-}\\
U^{\pm\pm(n-1,n-1)}(u,v) & := Q^{\pm\pm(n-1,n-1)} P^{\pm\pm(n-1,n-1)} R^{\pm\pm(n-1,n-1)}(u,v) \label{U++}
}
with $F^{\pm\mp(n-1,n-1)}$ and $Q^{\pm\pm(n-1,n-1)}$ defined by 
\ali{
F^{\pm\mp(n,n)} & := F^{\pm\mp(n-1,n-1)} {}\hot {}(e^{(1)}_{-1,-1} \ot e^{(1)}_{-1,-1}) + F^{\mp\pm(n-1,n-1)} {}\hot {}( e^{(1)}_{+1,+1} \ot e^{(1)}_{+1,+1}) \nn \\[0.3em]  
& \qu\; + P^{\pm\pm(n-1,n-1)} {} \hot {} ( e^{(1)}_{-1,+1} \ot e^{(1)}_{+1,-1} ) + P^{\mp\mp(n-1,n-1)} {} \hot {} ( e^{(1)}_{+1,-1} \ot e^{(1)}_{-1,+1} ) , \label{Fn}
 \\[0.3em]  
Q^{\pm\pm(n,n)} & := Q^{\pm\pm(n-1,n-1)} {}\hot {}(e^{(1)}_{-1,-1} \ot e^{(1)}_{-1,-1}) + Q^{\mp\mp(n-1,n-1)} {}\hot {}( e^{(1)}_{+1,+1} \ot e^{(1)}_{+1,+1}) \nn \\[0.3em]  
& \qu\; + F^{\pm\mp(n-1,n-1)} {} \hot {}(e^{(1)}_{-1,-1} \ot e^{(1)}_{+1,+1} ) + F^{\mp\pm(n-1,n-1)} {} \hot {}( e^{(1)}_{+1,+1} \ot e^{(1)}_{-1,-1}) \nn \\[0.3em]
& \qu\; + q^{-1} R^{\mp\pm(n-1,n-1)}(q^2,1) {} \hot {} ( e^{(1)}_{-1,+1} \ot e^{(1)}_{+1,-1} ) \nn \\[0.3em]
& \qu\;  + q\, R^{\pm\mp(n-1,n-1)}(q^2,1) {} \hot {} ( e^{(1)}_{+1,-1} \ot e^{(1)}_{-1,+1} )  \label{Qn}
}
and $P^{\pm\pm(n,n)}:=R^{\pm\pm(n,n)}(u,u)$.
\end{prop}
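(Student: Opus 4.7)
The plan is to proceed by induction on $n$, with the base case $n=2$ established in Lemma \ref{L:R2}. For $n>2$, assume the formulas (\ref{R++})--(\ref{R+-}) hold at level $n-1$; then take the stated right-hand sides as an ansatz for $R^{\eps_1\eps_2(n,n)}(u,v)$ and verify that they solve the intertwining equation for $U^{ex}_q(\mfL\mfso_{2n+2})$. Throughout, I exploit the nested decomposition in which $V^{\pm(n)}$ splits according to the last graded factor as $V^{\pm(n-1)}\hot\C e^{(1)}_{-1} \oplus V^{\mp(n-1)}\hot\C e^{(1)}_{+1}$; the diagonal blocks $e^{(1)}_{\pm,\pm}\ot e^{(1)}_{\pm,\pm}$ in the ansatz correspond to pairs on which the added Cartan $\om_{n+1}$ acts with matching weight in both tensorands, while the off-diagonal blocks $e^{(1)}_{-1,+1}\ot e^{(1)}_{+1,-1}$ and $e^{(1)}_{+1,-1}\ot e^{(1)}_{-1,+1}$ carry the weight-exchange produced by $a_{n+1}$ and $a^\dag_{n+1}$.

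The first reduction is to verify the intertwining equation generator by generator. Using (\ref{Hopf}) and Proposition \ref{P:Uq->Aq}, the generators $\ell^\pm_{ij}(w)$ of $U^{ex}_q(\mfL\mfso_{2n+2})$ split into two classes: those with $|i|,|j|\le n$, which generate the subalgebra $U^{ex}_q(\mfL\mfso_{2n})$ and act trivially on the added $\C^{1|1}$ factor, and those with $|i|$ or $|j|$ equal to $n+1$, whose image under $\si^+\circ\pi_\rho$ involves $a_{n+1}$, $a^\dag_{n+1}$ and $\om_{n+1}$. For the former class, the diagonal blocks of the ansatz reduce the intertwining to the inductive hypothesis on $R^{\eps_1\eps_2(n-1,n-1)}$, while the off-diagonal blocks require that $F^{\pm\mp(n-1,n-1)}$ and $Q^{\pm\pm(n-1,n-1)}$ intertwine the appropriate coproducts in a graded sense --- that is, they act as a graded permutation and a twisted-permutation operator respectively between chirality sectors. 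These two claims about $F$ and $Q$ are themselves proved by induction from (\ref{Fn})--(\ref{Qn}) with base case in Lemma \ref{L:R2}, and can be folded into the main induction.

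For the latter class of generators, the explicit form of $\pi_\rho$ given by (\ref{pi-rho}), together with the action of the added creation/annihilation operators, reduces the intertwining to a small collection of scalar identities among $\al(u,v)$, $\be(u,v)$ and their shifted versions $\al(u,q^2v)$, $\be(u,q^2v)$; the shift $v\mapsto q^2 v$ arises because $\om_{n+1}$ contributes weight $q^{\pm 1/2}$, whose squared power enters via (\ref{pi-rho}). The operators $U^{\pm\mp(n-1,n-1)}$ and $U^{\pm\pm(n-1,n-1)}$ defined by (\ref{U+-})--(\ref{U++}) are precisely what is needed to convert these scalar identities into operator identities on the off-diagonal blocks; the factor $R^{\eps_1\eps_2(n-1,n-1)}(q^2,1)$ in each $U$ produces the correct matching of $q$-powers when the new Cartan is commuted past the old $R$-matrix.

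The main obstacle will be the detailed graded-sign and $q$-power bookkeeping on the off-diagonal blocks. Because the identification of $V^{\pm(n)}$ with its nested form involves a chirality flip on the second summand, the action of the old generators and the placement of the $U$, $F$, $Q$ operators produce subtle relative signs --- visible already in the sign difference between the $-\be(u,v)$ terms in (\ref{R++}) and the $+(q-q^{-1})/(v-u)$ terms in (\ref{R+-}). Verifying that these signs and the $q^2$-shifted spectral parameters combine correctly into the claimed recurrence is the heart of the computation, and is precisely where the ``lengthy but direct'' verification alluded to in the preamble to the proposition is concentrated.
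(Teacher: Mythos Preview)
Your proposal is correct and aligns with the paper's own treatment: the paper does not give a detailed proof of this proposition at all, but states in the preamble to Section~\ref{sec:RSS} that all such statements ``are obtained using induction arguments and/or lengthy but direct computations'' with base case given by the preceding lemma, and leaves the details to the reader. Your outline---induction on $n$ with base case Lemma~\ref{L:R2}, verification of the intertwining equation generator by generator, splitting into the $U^{ex}_q(\mfL\mfso_{2n})$-subalgebra generators versus those involving the $(n{+}1)$-st index---is exactly the kind of argument the paper has in mind, and you have correctly identified where the work is concentrated (the graded-sign and $q$-shift bookkeeping on the off-diagonal blocks).
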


\begin{lemma} \label{L:RD-cross} 
Let $\eps_1,\eps_2=\pm$. The inverses of the spinor-spinor $R$-matrices of $U^{ex}_q(\mfL\mfso_{2n+2})$ are given by
\ali{
R^{\eps_1\eps_2(n,n)}_{q^{-1}}(u,v) = P^{\eps_1\eps_2(n,n)} R^{\eps_1\eps_2(n,n)}(v,u)\,P^{\eps_1\eps_2(n,n)} = (R^{\eps_1\eps_2(n,n)}(u,v))^{-1}. \label{RD-inv}
}
Moreover, the spinor-spinor $R$-matrices are crossing symmetric, that is
\ali{
(R^{\pm[\pm](n,n)}(q^{2n}u,v))^{\buu_1} = (R^{\pm[\pm](n,n)}(q^{2n}u,v))^{\uu_2} &= h^{+(n/2)}(u,v)\, (R^{\pm\pm(n,n)}(u,v))^{-1} , \label{RD-cross-1} \\
(R^{\pm[\mp](n,n)}(q^{2n}u,v))^{\buu_1} = (R^{\pm[\mp](n,n)}(q^{2n}u,v))^{\uu_2} &= h^{-(n/2)}(u,v)\, (R^{\pm\mp(n,n)}(u,v))^{-1} , \label{RD-cross-2}
}
where $[\pm] = \pm/\mp$ if $n$ is odd/even and similarly for $[\mp]$ and
\equ{
h^{+(n/2)}(u,v) := \prod_{j=1}^{\lceil n/2 \rceil}\al(q^{4j-2}u,v) , \qq
h^{-(n/2)}(u,v) := \prod_{j=1}^{\lfloor n/2 \rfloor}\al(q^{4j}u,v)  \label{h}
}
and the $q$-transposition $u\!\!u$ is defined via (\ref{w2}--\ref{w3}).
\end{lemma}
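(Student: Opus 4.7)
I would prove both statements by induction on $n$, treating the unitarity \eqref{RD-inv} and the crossing identities \eqref{RD-cross-1}--\eqref{RD-cross-2} in parallel, with base cases $n=1,2$ obtained by direct inspection of \eqref{R1++} and Lemma \ref{L:R2}, and the inductive step driven by the recurrences \eqref{R++}, \eqref{R+-} together with the auxiliary formulas \eqref{U+-}--\eqref{Qn}.

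For the unitarity \eqref{RD-inv}, the first observation is that the scalar functions $\al(u,v)$ and $\be(u,v)$ from \eqref{al-be} are invariant under the simultaneous substitution $q\mapsto q^{-1}$, $u\leftrightarrow v$. Since conjugation by $P^{\eps_1\eps_2(n,n)}$ acts blockwise on the tensor decomposition in \eqref{R++}--\eqref{R+-}, swapping the two superlegs of each $e^{(1)}_{ij}\otimes e^{(1)}_{kl}$ factor while exchanging $u$ and $v$ in the first tensorand, the recurrences for $R^{\eps_1\eps_2(n,n)}_{q^{-1}}(u,v)$ and for $P^{\eps_1\eps_2(n,n)} R^{\eps_1\eps_2(n,n)}(v,u) P^{\eps_1\eps_2(n,n)}$ coincide termwise once the inductive hypothesis is applied to each $R^{\eps_1\eps_2(n-1,n-1)}$, $F^{\eps_1\eps_2(n-1,n-1)}$, $Q^{\eps_1\eps_2(n-1,n-1)}$, and $U^{\eps_1\eps_2(n-1,n-1)}$ appearing on the right. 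The second equality, that this common object equals $(R^{\eps_1\eps_2(n,n)}(u,v))^{-1}$, follows from the intertwiner definition: $R^{\eps_1\eps_2(n,n)}(v,u)\,R^{\eps_1\eps_2(n,n)}(u,v)$ commutes with the action of $U_q^{ex}(\mfL\mfso_{2n+2})$ on the generically irreducible tensor product of spinor representations, hence is scalar by Schur's lemma, and evaluating on a highest-weight vector together with the normalisation coming from \eqref{R1++} fixes the scalar to be $1$.

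For the crossing symmetries, the inductive step proceeds by applying the $\uu_2$ transposition to each term on the right-hand sides of \eqref{R++} and \eqref{R+-} after the spectral shift $u\mapsto q^{2n}u$. Using the blockwise formulas \eqref{w2}--\eqref{w3}, the transposition acts on the outer $e^{(1)}_{kl}$ factor by the explicit rule and on the inner $X^{\eps_1\eps_2(n-1,n-1)}$-factor via the inductive hypothesis, producing precisely the inverse $R$-matrix of rank $n-1$ up to the accumulated factor $h^{\pm((n-1)/2)}(u,v)$. The identity \eqref{XYZ-pm} is the bridge that converts the transposed recurrence into the recurrence satisfied by $(R^{\eps_1\eps_2(n,n)}(u,v))^{-1}$, and the coefficient $\al(q^{2n}u,v)$ or $\be(q^{2n}u,v)$ from the outer layer of the recurrence supplies the one new factor needed to upgrade $h^{\pm((n-1)/2)}$ to $h^{\pm(n/2)}$, consistently with the $\lceil n/2\rceil$ versus $\lfloor n/2\rfloor$ arithmetic in \eqref{h}. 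The equality of the $\buu_1$- and $\uu_2$-transposed forms is then a consequence of \eqref{RD-inv} applied in the other direction.

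The main obstacle will be combinatorial bookkeeping: the chirality on the left-hand side of \eqref{RD-cross-1}--\eqref{RD-cross-2} alternates via $[\pm]=\pm/\mp$ as $n$ is odd/even, while the chirality on the right-hand side is held fixed, so the induction must track how the recurrence couples $R^{\pm\pm}$ to $R^{\pm\mp}$ through the off-diagonal operators $U^{\pm\mp}$, $F^{\pm\mp}$, $Q^{\pm\pm}$ and produce exactly the crossing pattern demanded by the sign $\veps=(-1)^{n-1}$ in Proposition \ref{P:R-SV-Dn} and by the definitions \eqref{Fn}, \eqref{Qn}. The parity-dependent alternation between adding a new factor to the ceiling product $h^{+(n/2)}$ or to the floor product $h^{-(n/2)}$ at each step, together with the signs forced by \eqref{chi^pm-map} on the duality map $\chi^{\pm(n)}$, will require careful case-splitting on the parity of $n$ in each of the two crossing identities.
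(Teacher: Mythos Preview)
Your proposal is correct and aligns with the paper's own approach: the paper does not give a detailed proof of this lemma, stating only (just before Lemma~\ref{L:RSS-B1}) that ``all the technical statements presented below are obtained using induction arguments and/or lengthy but direct computations'' and leaving ``the technical details to an interested reader.'' Your inductive scheme, with base cases $n=1,2$ checked directly from \eqref{R1++} and Lemma~\ref{L:R2} and the step driven by the recurrences \eqref{R++}--\eqref{Qn}, is exactly the kind of argument the paper has in mind; your Schur's-lemma shortcut for the second equality in \eqref{RD-inv} is a legitimate alternative to a fully recursive verification, though it does rely on generic irreducibility of the tensor product of spinor evaluation modules, which you should state explicitly. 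One minor point: the identity \eqref{XYZ-pm} concerns the duality map $\chi^{\pm(n)}$ rather than the transposition $\uu$ itself, so in the crossing step it is the blockwise rules \eqref{w2}--\eqref{w3} that do the work directly, not \eqref{XYZ-pm}.
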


\begin{lemma} \label{L:YBE-RD}
Let $\eps_1, \eps_2, \eps_3 = \pm$. The spinor-spinor $R$-matrices of $U^{ex}_q(\mfL\mfso_{2n+2})$ satisfy the following quantum Yang-Baxter equations:
\gan{
R^{\eps_1\eps_2(n,n)}_{12}(u,v)\,R^{\eps_1\eps_3(n,n)}_{13}(u,w)\,R^{\eps_2\eps_3(n,n)}_{23}(v,w) = R^{\eps_2\eps_3(n,n)}_{23}(v,w)\,R^{\eps_1\eps_3(n,n)}_{13}(u,w)\,R^{\eps_1\eps_2(n,n)}_{12}(u,v) ,
\\[0.5em]
R^{\eps_1\eps_2(n,n)}_{12}(u,v)\,R^{\eps_1(n)}_{13}(u,\rho)\,R^{\eps_2(n)}_{23}(v,\rho) = R^{\eps_2(n)}_{23}(v,\rho)\,R^{\eps_1(n)}_{13}(u,\rho)\,R^{\eps_1\eps_2(n,n)}_{12}(u,v) . 
}
\end{lemma}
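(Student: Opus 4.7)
The plan is to apply the standard intertwiner-uniqueness argument for Yang-Baxter equations. By construction (via the intertwining relation stated just before Lemma \ref{L:R2}), for each pair of chiralities $\eps_1,\eps_2=\pm$ the spinor-spinor $R$-matrix $R^{\eps_1\eps_2(n,n)}(u,v)$ is, up to a scalar, the unique $U^{ex}_q(\mfL\mfso_{2n+2})$-equivariant map from $V^{\eps_1(n)}\ot V^{\eps_2(n)}$ carrying the representation $(\si^+\ot\si^+)\circ(\pi_v\ot\pi_u)\circ\Delta$ to the same space carrying the opposite-coproduct representation; likewise $R^{\eps(n)}(u,\rho)$ from Proposition \ref{P:R-SV-Dn} is the unique such intertwiner from the spinor evaluation representation at $u$ to the vector evaluation representation at $\rho$.

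First, using coassociativity $(\Delta\ot id)\Delta=(id\ot\Delta)\Delta$ together with the pairwise intertwining relations that define the $R$-matrices, I would verify that both sides of the pure spinor-spinor-spinor YBE define $U^{ex}_q(\mfL\mfso_{2n+2})$-module maps between two fixed representations on $V^{\eps_1(n)}\ot V^{\eps_2(n)}\ot V^{\eps_3(n)}$, with the iterated coproducts evaluated at the appropriate spectral parameters. The same observation applies to the spinor-spinor-vector YBE once $V^{\eps_3(n)}$ is replaced by $\C^{2n+2}$ at parameter $\rho$. Next, since the spinor representations are irreducible by Proposition \ref{P:Uq->Aq}, and the vector evaluation representation of $U_q(\mfso_{2n+2})$ is irreducible, the relevant triple tensor products are irreducible for generic spectral parameters, so Schur's lemma forces the two sides of each YBE to agree up to a scalar factor depending rationally on $u,v,w$ (resp.~$u,v,\rho$). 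Evaluating both sides on the highest weight vector of the triple tensor product and using that each $R$-matrix factor acts diagonally with leading coefficient $1$ on this vector, as is visible from the recurrences in Propositions \ref{P:RD} and \ref{P:R-SV-Dn}, pins this scalar to $1$; rationality then propagates the identity to all spectral parameters.

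The principal obstacle is the combinatorial bookkeeping across the eight chirality sectors $(\eps_1,\eps_2,\eps_3)\in\{\pm\}^3$ of the first YBE, together with the verification that each triple tensor product has a one-dimensional space of intertwiners to the opposite for generic spectral parameters. If the representation-theoretic route degenerates for a particular chirality configuration, I would fall back on induction in $n$ via the recurrences of Propositions \ref{P:RD} and \ref{P:R-SV-Dn}: the base case $n=1$ is essentially trivial, since $R^{\pm\mp(1,1)}$ degenerates to the identity while $R^{\pm\pm(1,1)}$ is the six-vertex-type $R$-matrix whose YBE is verifiable by direct matrix computation, and the inductive step matches the two sides of the YBE block by block under the nested block decomposition of the recurrences, with the auxiliary identities involving $F^{\eps_1\eps_2(n-1,n-1)}$, $Q^{\pm\pm(n-1,n-1)}$ and the operators $U^{\eps_1\eps_2(n-1,n-1)}$ of \eqref{U+-}--\eqref{U++} handled using the crossing and unitarity relations of Lemma \ref{L:RD-cross}.
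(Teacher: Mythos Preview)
The paper does not spell out a proof of this lemma; it is covered by the blanket remark preceding Lemma~\ref{L:RSS-B1} that ``all the technical statements presented below are obtained using induction arguments and/or lengthy but direct computations''. Your fallback plan---induction on $n$ through the block recurrences of Proposition~\ref{P:RD} (and Proposition~\ref{P:R-SV-Dn} for the mixed equation), with the $n=1$ base given by the six-vertex identity and the triviality of $R^{\pm\mp(1,1)}$---is exactly the route the paper has in mind. Your primary route, the intertwiner-uniqueness argument via Schur's lemma, is a genuinely different and more conceptual approach: it replaces a lengthy block-by-block inductive check by a single generic-irreducibility statement and a highest-weight normalisation. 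What it costs is that the assertion ``the relevant triple tensor products are irreducible for generic spectral parameters'' is a nontrivial fact about tensor products of evaluation modules for quantum loop algebras which the paper neither states nor proves; you would need to import it from the literature or, as you say, revert to the inductive computation in the chirality sectors where you cannot verify it directly.

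One small correction: your description of $R^{\eps(n)}(u,\rho)$ as ``the unique such intertwiner from the spinor evaluation representation at $u$ to the vector evaluation representation at $\rho$'' is not accurate. By its very definition \eqref{R-SV-D}, $R^{\eps(n)}(u,\rho)$ is the generating matrix $L^{\pm}(u)$ evaluated through $\si^+\circ\pi_\rho$ (with the index transposition $i,j\mapsto -i,-j$); it is an operator on $V^{\eps(n)}\ot\C^{2n+2}$, not a map between the two modules. This matters for the second equation: once you tensor the defining intertwining relation for $R^{\eps_1\eps_2(n,n)}(u,v)$ by $e_{ij}$ in an auxiliary vector space and sum over $i,j$, the coproduct $\Delta(\ell_{ij})=\sum_k\ell_{ik}\ot\ell_{kj}$ assembles the two tensor factors into a product of two evaluated $L$-operators, i.e.\ two spinor-vector $R$-matrices. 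So the spinor-spinor-vector YBE is essentially a rewriting of the intertwining relation that \emph{defines} $R^{\eps_1\eps_2(n,n)}$, and no Schur-type argument or genericity hypothesis is needed there. Reserving the intertwiner-uniqueness machinery for the pure spinor equation and treating the mixed one this way would streamline your argument.
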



\subsection{Fusion relations}

We demonstrate fusion relations for spinor-spinor and spinor-vector $R$-matrices that may be viewed as $q$-analogues of relations (3.16) and (4.27) in \cite{Rsh91}. We will make use of the usual check-notation, i.e.\ $\check{R}^{(n,n)} := P^{(n,n)}R^{(n,n)}$.

Consider the algebra $U_{q^2}(\mfso_{2n+1})$ generated by the elements $\ell^\pm_{ij}[0]$ with $-n\le i,j\le n$.
Define a vector $\eta^{(n,n)} \in \C^{n|n} \ot \C^{n|n}$ by
\equ{
\eta^{(n,n)} := \Bigg( \bigotimes_{i=1}^{n-1} \big(e^{(1)}_{-1} \ot e^{(1)}_{+1} + (-1)^{i} q^{-2i+1}\,e^{(1)}_{+1} \ot e^{(1)}_{-1} \big) \Bigg) \hot \big(e^{(1)}_{-1} \ot e^{(1)}_{-1}\big). \label{fused-eta-B}
}
Vector $\eta^{(n,n)}$ is a highest vector; it is a direct computation to verify that 
\aln{
\ell^+_{ij}[0]\cdot\eta^{(n,n)} &= 0 \text{ \;for\; $i<j$\; and} \\
\ell^+_{ii}[0]\cdot\eta^{(n,n)} &= q^{2\del_{in}-2\del_{i,-n}}\,\eta^{(n,n)}
}
where the left $U_{q^2}(\mfso_{2n+1})$-action is given by composing coproduct with the homomorphism $\pi \ot \pi$ and representation $\si \ot \si$.
It follows that the subspace 
\[
W^{(n,n)} := U_{q^2}(\mfso_{2n+1})\cdot\eta^{(n,n)} \subset \C^{n|n} \ot \C^{n|n}
\]
is isomorphic to the first fundamental (vector) representation of $U_{q^2}(\mfso_{2n+1})$, $W^{(n,n)} \cong \C^{2n+1}$.

\begin{lemma} \label{L:fusion-B}
Let $\equiv$ denote equality of operators in the space $\C^{n|n} \ot W^{(n,n)} \subset (\C^{n|n})^{\ot 3} $. Then, upon a suitable identification of $W^{(n,n)}$ and $\C^{2n+1}$ (which we label by the subscript $(23)$), we have that
\equ{
R^{(n,n)}_{13}(q^4v, u) \, R^{(n,n)}_{12}(q^{4n-2}v,u) \equiv \frac{h^{(n)}(v,u)}{f_q(v,u)}\, R^{(n)}_{1(23)}(v,u) . \label{RR=R-B}
}
\end{lemma}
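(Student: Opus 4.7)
The plan is to combine the quantum loop algebra equivariance of both sides with Schur's lemma for tensor products of irreducible $U^{ex}_{q^2}(\mfL\mfso_{2n+1})$-modules, and then to fix the overall scalar by a direct evaluation on a distinguished vector.

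First, both sides of \eqref{RR=R-B} are $U_{q^2}(\mfso_{2n+1})$-intertwiners in their natural domains: the LHS because each factor $R^{(n,n)}$ satisfies the defining intertwining property \eqref{RSS-inv-B}, which combined with coassociativity of $\Delta$ extends to the triple tensor product; and the RHS because $R^{(n)}$ satisfies \eqref{R-SV-B}. The key technical step is to show that the LHS preserves the subspace $\C^{n|n} \ot W^{(n,n)} \subset (\C^{n|n})^{\ot 3}$. Since $W^{(n,n)}$ is cyclically generated by $\eta^{(n,n)}$ as a $U_{q^2}(\mfso_{2n+1})$-module, equivariance reduces this to verifying that the LHS sends $\C^{n|n} \ot \{\eta^{(n,n)}\}$ into $\C^{n|n} \ot W^{(n,n)}$; the specific spectral shifts $q^{4n-2}v$ and $q^4 v$, whose ratio is $q^{4n-6}$, are tuned precisely to the $q$-weights $(-1)^i q^{-2i+1}$ appearing in \eqref{fused-eta-B}, ensuring cancellation of components outside $W^{(n,n)}$.

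Once invariance is established, I would appeal to Schur's lemma for intertwiners between tensor products of two irreducible $U^{ex}_{q^2}(\mfL\mfso_{2n+1})$-modules at generic evaluation parameters -- here $\C^{n|n}$ at parameter $u$ and $\C^{2n+1} \cong W^{(n,n)}$ at parameter $v$ -- which is one-dimensional and spanned by $R^{(n)}_{1(23)}(v,u)$. Hence the restricted LHS equals $R^{(n)}_{1(23)}(v,u)$ up to a rational scalar in $u,v$. To compute this scalar I would evaluate both sides on the joint highest-weight vector $e^{(1)}_{-1} \hot \cdots \hot e^{(1)}_{-1} \ot \eta^{(n,n)}$ and extract the diagonal coefficient. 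Using the recurrence \eqref{Rn}, the LHS telescopes into the product $\prod_{j=1}^n \al(q^{4j-2}v, u) = h^{(n)}(v,u)$, while the same diagonal in the RHS, via \eqref{RBV1}--\eqref{RBVn}, picks up an explicit factor of $f_q(v,u)$ from the $f_{q^2}(u,\rho)$ term of $R^{(1)}$; dividing yields the prefactor $h^{(n)}(v,u)/f_q(v,u)$.

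The main obstacle will be the invariance step. A clean way to execute it is by induction on $n$ using the nested recurrence \eqref{Rn} together with the tensor-product structure of $\eta^{(n,n)}$ in \eqref{fused-eta-B}: at each nesting level one peels off a copy of $R^{(1,1)}$ acting on a two-term combination $e^{(1)}_{-1} \ot e^{(1)}_{+1} + (-1)^i q^{-2i+1} e^{(1)}_{+1} \ot e^{(1)}_{-1}$ from $\eta^{(n,n)}$, and the interplay between the shift $q^{4n-6}$ in the spectral arguments and the $q^{-2i+1}$ weights in $\eta^{(n,n)}$ is precisely what forces the off-$W^{(n,n)}$ contributions to vanish. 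The bookkeeping of gradings, signs, and spectral shifts across the nesting is technically delicate but ultimately mechanical once the base case $n=1$, where the statement reduces to direct comparison of \eqref{R1} and \eqref{RBV1}, has been checked.
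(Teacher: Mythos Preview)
Your overall strategy---equivariance plus Schur's lemma plus evaluation on a highest-weight vector---matches the paper's, and your identification of the spectral ratio $q^{4n-6}$ as the key to invariance is exactly right. The difference lies in how you establish that the LHS preserves $\C^{n|n}\ot W^{(n,n)}$.

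The paper does this in one stroke using the standard fusion argument: it observes that $\Pi^{(n,n)} := \big((1-q^{6-4n}v)\,\check R^{(n,n)}(v,1)\big)\big|_{v=q^{4n-6}}$ is a projector onto $W^{(n,n)}$ (since $\check R^{(n,n)}$ degenerates at precisely this value), and then the Yang--Baxter equation \eqref{YBE-RB} with the two spectral parameters $q^{4n-2}v$ and $q^4v$ in spaces $2,3$ immediately gives that the LHS commutes with $\Pi^{(n,n)}_{23}$, hence preserves its image. This is a one-line argument once the projector is identified, and it avoids any explicit inductive bookkeeping.

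Your inductive approach via the recursion \eqref{Rn} and the explicit form \eqref{fused-eta-B} of $\eta^{(n,n)}$ would also work, but it replaces a clean structural argument (YBE + projector) with a direct computation that is, as you acknowledge, delicate in its tracking of gradings and shifts. The reduction to checking on $\C^{n|n}\ot\{\eta^{(n,n)}\}$ via cyclicity is fine once you note that for generic $u,v$ the evaluation module $\C^{n|n}(u)\ot W^{(n,n)}(v)$ is irreducible over the loop algebra, so any nonzero subspace generates it. The paper's route buys you conceptual clarity and brevity; yours is more elementary in that it does not invoke the projector construction, but at the cost of a longer verification. Both land on the same final step: evaluate on $e^{(1)}_{-1}\hot\cdots\hot e^{(1)}_{-1}\ot\eta^{(n,n)}\equiv e^{(1)}_{-1}\hot\cdots\hot e^{(1)}_{-1}\ot e_{-n}$ to read off the scalar $h^{(n)}(v,u)/f_q(v,u)$.
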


\begin{proof}
Define $\Pi^{(1,1)} := \check{R}^{(1,1)}(q^{-2},1)$ and $\Pi^{(n,n)} := \big((1 - q^{6-4n}v)\,\check{R}^{(n,n)}(v,1) \big) \big|_{v=q^{4n-6}}$ when $n\ge 2$. 
The operator $\Pi^{(n,n)}$ is a projector operator acting on $\eta^{(n,n)}$ by a scalar multiplication. In particular, it projects the space $\C^{n|n} \ot \C^{n|n}$ to its subspace~$W^{(n,n)}$.  
The Yang-Baxter equation \eqref{YBE-RB} then implies that the l.h.s.\ of \eqref{RR=R-B} acts stably on the space~$\C^{n|n}\ot W^{(n,n)}$. Therefore, thanks to the Schur's Lemma, it is sufficient to verify the equality \eqref{RR=R-B} for a single vector, say $e^{(1)}_{-1} \ot \eta^{(n,n)} \equiv e^{(1)}_{-1} \ot e_{-n}$.
\end{proof}

Next, for $n\ge2$, consider the algebra $U_{q}(\mfso_{2n+2})$ generated by the elements $\ell^\pm_{ij}[0]$ with $-n-1\le i,j\le n+1$.
Introduce vectors
\[
\psi^{\pm\pm(1,1)} := e^{(\pm)}_{+1} \ot e^{(\pm)}_{-1} - q\,e^{(\pm)}_{-1} \ot e^{(\pm)}_{+1} \in V^{\pm(1)} \ot V^{\pm(1)} 
\]
satisfying 
\[
\ell^-_{ij}[0] \cdot \psi^{\pm\pm(1,1)} = \ell^+_{ij}[0] \cdot \psi^{\pm\pm(1,1)} =  \del_{ij} \psi^{\pm\pm(1,1)} \;\text{ for }\; {-}2\le i,j \le2 .
\]
Then, for $2\le k< n$, define recurrently vectors
\aln{
\psi^{\mp\pm(k,k)} & := \psi^{\pm\pm(k-1,k-1)} {}\hot{} ( e^{(1)}_{+1} \ot e^{(1)}_{-1} ) + q^{k}\, \psi^{\mp\mp(k-1,k-1)} {}\hot{} ( e^{(1)}_{-1} \ot e^{(1)}_{+1} )  && \qu \text{ if $k$ is even,} \\
\psi^{\pm\pm(k,k)} & := \psi^{\pm\pm(k-2,k-2)} {}\hot{} \phi^{++(2,2)}_{q^{2k-1}} + q^{k-1}\, \psi^{\mp\mp(k-2,k-2)} {}\hot{} \phi^{--(2,2)}_q  && \qu\text{ if $k$ is odd,}
}
where
\[
\phi^{\pm\pm(2,2)}_q := \big(e^{(1)}_{\pm1} \ot e^{(1)}_{\mp1}\big) \hot \big(e^{(1)}_{+1} \ot e^{(1)}_{-1}\big) - q\,\big(e^{(1)}_{\mp1} \ot e^{(1)}_{\pm1}\big) \hot \big(e^{(1)}_{-1} \ot e^{(1)}_{+1}\big) . 
\]
Finally set
\equ{
\eta^{[\mp]\pm(n,n)} := \psi^{[\mp]\pm(n-1,n-1)} {} \hot {} (e^{(1)}_{-1} \ot e^{(1)}_{-1}) \in V^{[\mp](n)} \ot V^{\pm(n)}  \label{fused-eta-D}
}
where $[\mp]=\mp/\pm$ if $n$ is odd/even. 
It is a highest vector; it is a direct computation to verify that 
\aln{
\ell^+_{ij}[0]\cdot \eta^{[\mp]\pm(n,n)} &= 0  \text{ \; for\; $i<j$\; and}\\
\ell^+_{ii}[0]\cdot \eta^{[\mp]\pm(n,n)} &= q^{\del_{i,n+1} - \del_{-i,n+1}} \eta^{[\mp]\pm(n,n)} .
}
Thus the space
\[
W^{[\mp]\pm(n,n)} := U_q(\mfso_{2n+2}) \cdot \eta^{[\mp]\pm(n,n)} \subset V^{[\mp](n)} \ot V^{\pm(n)} 
\]
is isomorphic to the first fundamental (vector) representation of $U_q(\mfso_{2n+2})$, that is \linebreak $W^{[\mp]\pm(n,n)} \cong \C^{2n+2}$.

\begin{lemma} \label{L:fusion-D}
Let $\equiv$ denote equality of operators in the space  $V^{\eps(n)} \ot W^{[\mp]\pm(n,n)}$. Then, upon a suitable identification of $W^{[\mp]\pm(n,n)}$ and $\C^{2n+2}$ (which we label by the subscript $(23)$), we have that
\ali{
R^{\mp\pm(n,n)}_{13}(q^2v, u) \, R^{\mp[\mp](n,n)}_{12}(q^{2n}v,u) & \equiv \frac{h^{+(n/2)}(v,u)}{f_q(v;u)}\, R^{\mp(n)}_{1(23)}(v,u) , \label{RR=R:1-D1}
\\
R^{\pm\pm(n,n)}_{13}(q^{2}v, u) \, R^{\pm[\mp](n,n)}_{12}(q^{2n}v,u) & \equiv h^{-(n/2)}(v,u)\, R^{\pm(n)}_{1(23)}(v,u) , \label{RR=R:1-D2} 
}
where $h^{\pm(n/2)}(v,u)$ is given by \eqref{h} and $[\mp]=\mp/\pm$ when $n$ is odd/even.
\end{lemma}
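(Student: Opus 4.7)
The plan is to extend the projector-and-Schur argument from Lemma~\ref{L:fusion-B} to the two-chirality situation of $U^{ex}_q(\mfL\mfso_{2n+2})$, with the book-keeping of chiralities being the only real complication.

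First, I would construct fusion projectors from the spinor-spinor $\check R$-matrices. For $n\ge 2$ set
\equ{
\Pi^{[\mp]\pm(n,n)} := \Bigl((1 - q^{2-2n} v)\,\check{R}^{[\mp]\pm(n,n)}(v,1)\Bigr)\Bigl|_{v = q^{2n-2}} ,
}
and similarly $\Pi^{\pm[\mp](n,n)}$ at the same distinguished point dictated by the crossing identity \eqref{RD-cross-1}–\eqref{RD-cross-2}: crossing turns $R^{\eps_1\eps_2(n,n)}$ at $v = q^{2n}u$ into (a scalar multiple of) $(R^{\eps_1\eps_2(n,n)}(u,v))^{-1}$, so the $R$-matrix develops a simple pole whose residue is rank $2n+2$. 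A direct check on the highest vector shows that this residue projects $V^{[\mp](n)} \ot V^{\pm(n)}$ onto $W^{[\mp]\pm(n,n)}$, and analogously in the opposite-order product for $\Pi^{\pm[\mp](n,n)}$; by construction $\eta^{[\mp]\pm(n,n)}$ is an eigenvector with nonzero eigenvalue.

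Second, I would use the Yang-Baxter equations of Lemma~\ref{L:YBE-RD} to push $\Pi^{[\mp]\pm(n,n)}_{23}$ through the product on the left-hand side of \eqref{RR=R:1-D1} (respectively \eqref{RR=R:1-D2}). A standard move—factor $\Pi^{[\mp]\pm(n,n)}_{23}$ as the (normalized) evaluation of $\check R^{[\mp]\pm(n,n)}_{23}$ at the fusion point, commute it through using the YBE, and reabsorb it on the other side—shows that both sides act stably on $V^{\eps(n)} \ot W^{[\mp]\pm(n,n)}$. Hence each side restricts to a genuine operator on this subspace.

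Third, Schur's Lemma finishes the job: since $W^{[\mp]\pm(n,n)} \cong \C^{2n+2}$ is the irreducible vector representation of $U_q(\mfso_{2n+2})$ and both sides commute with the diagonal $U_q(\mfso_{2n+2})$-action (the left side because every factor is $U_q$-equivariant by the definition of the $R$-matrices, the right side trivially), the two operators on $V^{\eps(n)} \ot W^{[\mp]\pm(n,n)}$ must agree up to a scalar. The scalar is then pinned down by evaluating both sides on the single vector $e^{(\eps)}_{-1} {\hot} e^{(1)}_{-1} {\hot} \cdots {\hot} e^{(1)}_{-1} \ot \eta^{[\mp]\pm(n,n)}$, which under the identification $W^{[\mp]\pm(n,n)} \cong \C^{2n+2}$ becomes (a multiple of) $e^{(\eps)}_{-1} {\hot} \cdots {\hot} e^{(1)}_{-1} \ot e_{-n-1}$; here the spinor-vector $R$-matrix on the right-hand side is read off directly from \eqref{R-SV-D}, while on the left-hand side the recurrence \eqref{R++}–\eqref{R+-} reduces the computation to the diagonal pieces $A^{\eps_1\eps_2(n-1,n-1)}$. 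The scalars $h^{+(n/2)}/f_q$ and $h^{-(n/2)}$ then emerge from telescoping the crossing normalisations \eqref{h}.

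The principal obstacle is disentangling the chirality book-keeping: the parity-dependent symbol $[\mp]$ forces the two relations \eqref{RR=R:1-D1} and \eqref{RR=R:1-D2} to have different normalisation factors, reflecting that fusion of opposite-chirality spinor pairs into the vector representation differs, in its crossing behaviour, from fusion of like-chirality pairs. Careful matching of these factors via \eqref{RD-cross-1}–\eqref{RD-cross-2} across the two parities of $n$ is where almost all the detail lies; the rest is the routine verification on a single highest-weight vector.
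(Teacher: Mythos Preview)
Your proposal is correct and follows essentially the same approach as the paper: construct the fusion projector $\Pi^{[\mp]\pm(n,n)} := \big((1 - q^{2-2n}v)\,\check{R}^{[\mp]\pm(n,n)}(v,1)\big)\big|_{v=q^{2n-2}}$, use the Yang--Baxter equation to show the left-hand side acts stably on $V^{\eps(n)} \ot W^{[\mp]\pm(n,n)}$, and then invoke Schur's Lemma to reduce the identity to a check on a single highest-weight vector. The paper's own proof is simply a one-line reference to Lemma~\ref{L:fusion-B} together with the explicit projector formula; your write-up fleshes out the same skeleton with more detail on the chirality bookkeeping and the origin of the scalar factors.
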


\begin{proof}
The proof is analogous to that of Lemma \ref{L:fusion-B} except the projection operator is now defined by $\Pi^{[\mp]\pm(n,n)} := \big((1 - q^{2-2n}v)\,\check{R}^{[\mp]\pm(n,n)}(v,1) \big) \big|_{v=q^{2n-2}}$. 
\end{proof}


\subsection{Exchange relations} \label{sec:EX}

The last ingredient that we will need are spinor-type Yang-Baxter exchange relations imposed by the spinor-spinor $R$-matrices. We will need ``BB'', ``AB'' and ``DB'' type relations only.
For any $n\ge0$ introduce a matrix $T^{(n+1)}(u)$ in $\End(\C^{n+1|n+1})$ with entries being operators in an associative algebra. Then write $T^{(n+1)}(u)$ in the nested form,
\ali{
T^{(n+1)}(u) &= A^{(n)}(u)\hot e^{(1)}_{-1,-1} + B^{(n)}(u)\hot e^{(1)}_{-1,+1} + C^{(n)}(u)\hot e^{(1)}_{+1,-1} + D^{(n)}(u)\hot e^{(1)}_{+1,+1} , \label{TB}
}
and require it to satisfy the equation
\equ{
R^{(n+1,n+1)}_{12}(u,v)\,T^{(n+1)}_{1}(u)\,T^{(n+1)}_{2}(v) = \,T^{(n+1)}_{2}(v)\,T^{(n+1)}_{1}(u)\,R^{(n+1,n+1)}_{12}(u,v)  \label{RTT-B}
}
so that the entries of $T^{(n+1)}(u)$ were operators in a Yang-Baxter algebra.

\begin{lemma} \label{L:rels-B}
We have the following ``BB'', ``AB'' and ``DB'' exchange relations:
\ali{
& R^{(n,n)}_{12}(v,u)\,B^{(n)}_1(v)\,B^{(n)}_2(u) = B^{(n)}_2(u)\,B^{(n)}_1(v)\,R^{(n,n)}_{12}(v,u) , \label{BB}
\\[0.3em]
& A^{(n)}_1(v)\,B^{(n)}_2(u) = f_q(v,u) \, R^{(n,n)}_{21}(u,v)\, B^{(n)}_2(u)\, A^{(n)}_1(v)\,R^{\prime(n,n)}_{12}(q^4v,u) \el & \hspace{2.8cm} - \frac{v/u}{v-u} \Res{w\to u} \Big( f_q(w,u)\,R^{(n,n)}_{21}(u,w)\, B^{(n)}_2(v)\, A^{(n)}_1(w)\,R^{\prime(n,n)}_{12}(q^4w,u) \Big) , \label{AB}
\\[0.3em]
& D^{(n)}_1(v)\,B^{(n)}_2(u) = f_{q^{-1}}(v,u)\, R^{(n,n)}_{21}(q^{4}u,v)\, B^{(n)}_2(u)\, D^{(n)}_1(v)\, R^{\prime(n,n)}_{12}(v,u) \el & \hspace{2.8cm} - \frac{v/u}{v-u} \Res{w\to u} \Big( f_{q^{-1}}(w,u)\, R^{(n,n)}_{21}(q^{4}u,w)\, B^{(n)}_2(v)\, D^{(n)}_1(w)\, R^{\prime(n,n)}_{12}(w,u) \Big) , \label{DB} 
}
where $R^{(0,0)}(u,v)=1$ and $R^{\prime(n,n)} := (\ga \ot id)(R^{(n,n)}) = (id \ot \ga)(R^{(n,n)})$.
\end{lemma}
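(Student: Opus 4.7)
The plan is to insert the nested decomposition \eqref{TB} of $T^{(n+1)}$ and the recurrence formula \eqref{Rn} for $R^{(n+1,n+1)}$ into the defining RTT equation \eqref{RTT-B}, and then project onto selected matrix-unit basis elements $(e^{(1)}_{ab})_{1'}\ot(e^{(1)}_{cd})_{2'}$ of the two auxiliary $(n+1)$-st tensor factors (labelled $1'$, $2'$). Since $e^{(1)}_{ab}\,e^{(1)}_{cd}=\del_{bc}e^{(1)}_{ad}$, each such projection yields an identity among the sub-supermatrices $A^{(n)},B^{(n)},C^{(n)},D^{(n)}$ dressed by pieces of $R^{(n,n)}$ read off from \eqref{Rn}. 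Throughout, the inverse formula $(R^{(n,n)}_{12}(u,v))^{-1}=R^{(n,n)}_{21}(v,u)$ from Lemma \ref{L:R-cross-B}, the definition \eqref{Q=RPR} of $U^{(n,n)}$, and the identity \eqref{XYZ} are the main algebraic tools.

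For \eqref{BB}, project both sides onto $(e^{(1)}_{-1,+1})_{1'}\ot(e^{(1)}_{-1,+1})_{2'}$. Inspection of \eqref{Rn} shows that on each side only the $(e^{(1)}_{-1,-1}\ot e^{(1)}_{-1,-1})$ block of $R^{(n+1,n+1)}$ survives; both surviving contributions carry the prefactor $R^{(n,n)}(u,v)$, producing \eqref{BB} directly (after relabelling the dummy spectral parameters). For \eqref{AB}, project onto $(e^{(1)}_{-1,-1})_{1'}\ot(e^{(1)}_{-1,+1})_{2'}$: the LHS reduces to $R^{(n,n)}_{12}(u,v)\,A^{(n)}_1(u)\,B^{(n)}_2(v)$, while on the RHS three blocks of $R^{(n+1,n+1)}$ contribute. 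The diagonal $(e^{(1)}_{-1,-1}\ot e^{(1)}_{-1,-1})$ block and the $\al$-weighted $(e^{(1)}_{-1,-1}\ot e^{(1)}_{11})$ block jointly deliver a ``wanted'' $B^{(n)}_2\,A^{(n)}_1$ term which, after left-multiplying by $(R^{(n,n)}_{12}(v,u))^{-1}$, collapses to the first line of \eqref{AB}; the factor $R^{\prime(n,n)}_{12}(q^4v,u)$ appears because of the $P^{\prime(n-1,n-1)}$-insertion (equivalently the $\ga$-twist) in \eqref{Q=RPR} combined with \eqref{XYZ}. The analogous projection onto $(e^{(1)}_{+1,+1})_{1'}\ot(e^{(1)}_{-1,+1})_{2'}$ produces \eqref{DB}; the replacements $f_q\mapsto f_{q^{-1}}$ and the shift $u\mapsto q^4u$ inside $R^{(n,n)}_{21}$ reflect that this projection selects the second diagonal block of \eqref{Rn}, which carries the prefactor $R^{(n,n)}(u,q^4v)$.

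The main technical step, and the expected obstacle, is the rewriting of the ``unwanted'' contribution coming from the $\be$-block $\be(u,v)\,U^{(n,n)}(u,q^4v)\hot(v\,e^{(1)}_{-1,+1}\ot e^{(1)}_{+1,-1}+u\,e^{(1)}_{+1,-1}\ot e^{(1)}_{-1,+1})$ into the residue form on the second line of \eqref{AB}--\eqref{DB}. After the manipulations above this contribution has the shape $\tfrac{v/u}{v-u}\,G(v,u)$, where $G(v,u)$ is a product of $R^{(n,n)}_{21}$, $B^{(n)}_2$, $A^{(n)}_1$ (resp.\ $D^{(n)}_1$), and $R^{\prime(n,n)}_{12}$ evaluated at coinciding spectral parameters. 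Since $\Res_{w\to u}f_q(w,u)=(q-q^{-1})u$ and, by \eqref{RB-inv}, $R^{(n,n)}_{21}(u,u)$ is the graded permutation, the coinciding-parameter expression is precisely $-\Res_{w\to u}$ of the integrand displayed in \eqref{AB}; the analogous computation with $f_{q^{-1}}$ yields \eqref{DB}. Careful bookkeeping of signs, $\Z_2$-gradings and the action of $\ga$ on each factor is routine but lengthy, and is controlled by \eqref{XYZ} together with the crossing identity \eqref{RB-cross}.
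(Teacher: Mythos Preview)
Your overall strategy—substitute the nested form \eqref{TB} of $T^{(n+1)}$ and the recurrence \eqref{Rn} for $R^{(n+1,n+1)}$ into \eqref{RTT-B}, then project onto chosen matrix-unit components in the two outer $e^{(1)}$-factors—is exactly what the paper does, and your identification of the relevant projections for \eqref{BB}, \eqref{AB}, \eqref{DB} is correct. The observation that the unwanted contribution assembles into a residue via $\Res_{w\to u}f_q(w,u)=(q-q^{-1})u$ together with $R^{(n,n)}(u,u)=P^{(n,n)}$ is also the right mechanism.

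Where your proposal drifts from the paper is in the auxiliary identities you invoke. The identity \eqref{XYZ} concerns the linear map $\chi^{(n)}$ to the dual space and plays no role in this lemma; it is only used later, in the proof of Theorem~\ref{T:B}, to convert the matrix relations \eqref{AB}--\eqref{DB} into relations for the scalar creation operators $\mr{b}^{(k-1,k-1)}$. Likewise the crossing relation \eqref{RB-cross} (the $w$-transposition identity) is not needed here. What actually converts the raw projected identities into the stated form—producing $R^{(n,n)}_{21}$ on the left, swapping $B_1\leftrightarrow B_2$ in the unwanted term, and generating the $\ga$-twisted factor $R^{\prime(n,n)}_{12}$—are the elementary conjugation identities
\[
P^{\prime(n,n)}_{12}\,R^{(n,n)}_{12}(u,v)\,P^{\prime(n,n)}_{12}=R^{(n,n)}_{21}(u,v), \qquad P^{\prime(n,n)}_{12}\,X^{\prime(n)}_1\,P^{\prime(n,n)}_{12}=X^{(n)}_2,
\]
applied after unpacking $U^{(n,n)}=R^{(n,n)}(q^4,1)\,P^{\prime(n,n)}\,R^{(n,n)}$ from \eqref{Q=RPR}. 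These, together with \eqref{RB-inv} and $R^{(n,n)}(u,u)=P^{(n,n)}$, are the only tools the paper uses. If you replace your appeals to \eqref{XYZ} and \eqref{RB-cross} by these $P'$-conjugation identities, your argument goes through and coincides with the paper's.
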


\begin{proof}
These relations are obtained by substituting \eqref{TB} into \eqref{RTT-B}. For \eqref{AB} and \eqref{DB} one also needs to use \eqref{RB-inv}, $R^{(n,n)}(u,u) = P^{(n,n)}$, and
\[
P^{\prime (n,n)}_{12} R^{(n,n)}_{12}(u,v) P^{\prime (n,n)}_{12} = R^{(n,n)}_{21}(u,v) , \qq 
P^{\prime (n,n)}_{12} X^{\prime(n)}_{1} P^{\prime (n,n)}_{12} = X^{(n)}_{2} 
\]
for any $X^{(n)} \in \End(\C^{n|n})$ and $X^{\prime(n)} = \ga(X^{(n)})$ with $\ga(e^{(n)}_{\bm i \bm j}) = \theta_{\bm i \bm j}\, e^{(n)}_{\bm i\bm j}$.
\end{proof}

Next, introduce a matrix $T^{\pm(n+1)}(u)$ in $\End(V^{\pm(n+1)})$ with entries being operators in an associative algebra. Then write $T^{\pm(n+1)}(u)$ as
\ali{
T^{\pm(n+1)}(u) &= A^{\pm(n)}(u)\hot e^{(1)}_{-1,-1} + B^{\mp(n)}(u)\hot e^{(1)}_{-1,+1} + C^{\pm(n)}(u)\hot e^{(1)}_{+1,-1} + D^{\mp(n)}(u)\hot e^{(1)}_{+1,+1}  \label{TD}
}
and require it to satisfy the equation
\ali{
R^{\eps_1\eps_2(n+1,n+1)}_{12}(u,v)\,T^{\eps_1(n+1)}_{1}(u)\,T^{\eps_2(n+1)}_{2}(v) = \,T^{\eps_2(n+1)}_{2}(v)\,T^{\eps_1(n+1)}_{1}(u)\,R^{\eps_1\eps_2(n+1,n+1)}_{12}(u,v) \label{RTT-D}
}
where $\eps_1,\eps_2 = \pm$.

\begin{lemma} \label{L:rels-D}
We have the following ``BB'', ``AB'' and ``DB'' exchange relations:
\ali{
& R^{-\eps_1-\eps_2(n,n)}_{12}(v,u)\,B^{\eps_1(n)}_1(v)\,B^{\eps_2(n)}_2(u) = B^{\eps_2(n)}_2(u)\,B^{\eps_1(n)}_1(v)\,R^{\eps_1\eps_2(n,n)}_{12}(v,u) , \label{BB-D} 
\\[0.5em] 
& A^{\pm(n)}_1(v)\,B^{\mp(n)}_2(u) = f_q(v,u) \, R^{\pm\pm(n,n)}_{21}(u,v)\, B^{\mp(n)}_2(u)\, A^{\pm(n)}_1(v)\,R^{\pm\mp(n,n)}_{12}(q^2v,u) \el & \hspace{3.4cm} - \frac{v/u}{v-u} \Res{w\to u} \Big( f_q(w,u)\,R^{\pm\pm(n,n)}_{21}(u,w)\el & \hspace{6.5cm} \times B^{\mp(n)}_2(v)\, A^{\pm(n)}_1(w)\,R^{\pm\mp(n,n)}_{12}(q^2w,u) \Big) , \label{AB-D}
\\[0.5em]
& D^{\mp(n)}_1(v)\,B^{\mp(n)}_2(u) = f_{q^{-1}}(v,u)\, R^{\pm\mp(n,n)}_{21}(q^{2}u,v)\,  B^{\mp(n)}_2(u)\, D^{\mp(n)}_1(v)\, R^{\mp\mp(n,n)}_{12}(v,u) \el & \hspace{3.4cm} - \frac{v/u}{v-u} \Res{w\to u} \Big( f_{q^{-1}}(w,u)\, R^{\pm\mp(n,n)}_{21}(q^{2}u,w) \el & \hspace{6.5cm} \times B^{\mp(n)}_2(v)\, D^{\mp(n)}_1(w)\, R^{\mp\mp(n,n)}_{12}(w,u) \Big) , \label{DB-D} 
\\[0.5em] 
& A^{\pm(n)}_1(v)\,B^{\pm(n)}_2(u) = R^{\mp\pm(n,n)}_{21}(u,v)\, B^{\pm(n)}_2(u)\, A^{\pm(n)}_1(v)\,R^{\pm\pm(n,n)}_{12}(q^2v,u) \el & \hspace{3.4cm} - v\,\frac{q-q^{-1}}{v-u}\, B^{\mp(n)}_1(v)\, A^{\mp(n)}_2(u) \el & \hspace{6.5cm} \times U^{\pm\pm(n,n)}_{21}(u,q^2v)\,R^{\pm\pm(n,n)}_{12}(q^2v,u) \,, \label{AB-Dx}
\\[0.5em]
& D^{\mp(n)}_1(v)\,B^{\pm(n)}_2(u) = R^{\mp\mp(n,n)}_{21}(q^{2}u,v)\, B^{\pm(n)}_2(u)\, D^{\mp(n)}_1(v)\, R^{\mp\pm(n,n)}_{12}(v,u) \el & \hspace{3.4cm} - u\,\frac{q-q^{-1}}{u-v}\, R^{\mp\mp(n,n)}_{21}(q^{2}u,v) \el & \hspace{6.5cm} \times U^{\pm\pm(n,n)}_{21}(v,q^2u)\, B^{\mp(n)}_1(v)\, D^{\pm(n)}_2(u) , \!\! \label{DB-Dx} 
}
where $U^{\pm\pm(1,1)}_{21}(u,q^2v) := \dfrac{v-u}{q^2v-q^{-2}u}\, Q^{\pm\pm(1,1)}_{21}$.
\end{lemma}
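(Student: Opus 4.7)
The plan is to follow the template used for Lemma \ref{L:rels-B}: substitute the nested decomposition \eqref{TD} into the RTT relation \eqref{RTT-D} and extract the components in the auxiliary superspaces, one chirality pair $(\eps_1,\eps_2)$ at a time. The nested decomposition of $R^{\eps_1\eps_2(n+1,n+1)}$ is supplied by Proposition \ref{P:RD}, which tells us which inner R-matrix (of chirality $\eps_1\eps_2$ or $-\eps_1-\eps_2$) multiplies each auxiliary block. Invertibility and the permutation identities required to rewrite the extracted identities are provided by Lemma \ref{L:RD-cross}, in particular \eqref{RD-inv} and $R^{\eps\eps(n,n)}(u,u) = P^{\eps\eps(n,n)}$.

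First I would read off the ``BB'' relation \eqref{BB-D} from the coefficient of $e^{(1)}_{-1,+1}\ot e^{(1)}_{-1,+1}$ in the expansion of \eqref{RTT-D}: only the $(e^{(1)}_{-1,+1}\ot e^{(1)}_{-1,+1})$--blocks of $R^{\eps_1\eps_2(n+1,n+1)}_{12}$ and $T^{\eps_1(n+1)}_1 T^{\eps_2(n+1)}_2$ contribute, and according to \eqref{R++}--\eqref{R+-} these blocks are accompanied by the inner matrix of opposite chirality, yielding the factor $R^{-\eps_1-\eps_2(n,n)}_{12}(v,u)$ on the left. Next I would isolate \eqref{AB-D} and \eqref{DB-D} from the $(\eps_1,\eps_2)=(\pm,\mp)$ sector: extract the coefficient of $e^{(1)}_{-1,-1}\ot e^{(1)}_{-1,+1}$ (respectively $e^{(1)}_{+1,+1}\ot e^{(1)}_{-1,+1}$), then multiply from the right by the inverse of $R^{\pm\pm(n,n)}_{12}(q^2v,u)$ using \eqref{RD-inv}. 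The apparent pole at $v=u$ produced by $\al(q^2v,u)^{-1}$ is resolved exactly as in Lemma \ref{L:rels-B}: it is cancelled by the contribution of the ``wrong-order'' $B$-term evaluated via the residue prescription, because $R^{\pm\pm(n,n)}(u,u)=P^{\pm\pm(n,n)}$ together with the factor $f_q(w,u)$ makes the offending piece proportional to a delta in $w=u$.

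For the same-chirality ``AB'' and ``DB'' relations \eqref{AB-Dx} and \eqref{DB-Dx}, I would work in the $(\eps_1,\eps_2)=(\pm,\pm)$ sector of \eqref{RTT-D} and extract the coefficient of $e^{(1)}_{-1,-1}\ot e^{(1)}_{-1,+1}$ (respectively $e^{(1)}_{+1,+1}\ot e^{(1)}_{-1,+1}$). The crucial new feature, compared with the opposite-chirality case, is that the nested form \eqref{R+-} of $R^{\pm\mp(n+1,n+1)}$ — which appears here through the block-diagonal pieces of $R^{\pm\pm(n+1,n+1)}$ in \eqref{R++} — mixes the $(e^{(1)}_{-1,+1}\ot e^{(1)}_{+1,-1})$ and $(e^{(1)}_{+1,-1}\ot e^{(1)}_{-1,+1})$ super-units with the operators $U^{\pm\pm(n-1,n-1)}$ defined by \eqref{U++}. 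Tracing these crossed terms through the RTT identity produces the additional summand proportional to $B^{\mp(n)}_1(v)A^{\mp(n)}_2(u)$ (respectively $B^{\mp(n)}_1(v)D^{\pm(n)}_2(u)$), with the precise weight $U^{\pm\pm(n,n)}_{21}(u,q^2v)$ dictated by Proposition \ref{P:RD}. There is no pole in $v-u$ to regularise in this case, because the corresponding $U^{\pm\pm}$ matrix is already proportional to $\be(u,v)$.

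The main obstacle is bookkeeping: the two chirality labels propagate independently through each step, and one must carefully distinguish which factor of $R^{\pm\pm}$, $R^{\pm\mp}$, $U^{\pm\pm}$ or $U^{\pm\mp}$ is picked up in each auxiliary component. A secondary, technical difficulty is the base case $n=1$ for the ``same-chirality'' relations, where $U^{\pm\pm(1,1)}_{21}(u,q^2v)$ is not given by the recurrence \eqref{U++} but must be identified explicitly as $\tfrac{v-u}{q^2v-q^{-2}u}\,Q^{\pm\pm(1,1)}_{21}$; this matches the displayed formula in the statement and is checked directly from Lemma \ref{L:R2} together with the definitions of $F^{\pm\mp(1,1)}$ and $Q^{\pm\pm(1,1)}$.
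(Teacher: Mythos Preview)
Your overall plan---substitute \eqref{TD} into \eqref{RTT-D}, read off the block components using Proposition~\ref{P:RD}, and clean up with \eqref{RD-inv} together with $R^{\pm\pm(n,n)}(u,u)=P^{\pm\pm(n,n)}$---is exactly the paper's argument. But you have the chirality sectors reversed, and this is not a cosmetic slip: it determines which of the two recurrences \eqref{R++} or \eqref{R+-} feeds into the computation, and hence which inner $R$- and $U$-matrices appear.

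From \eqref{TD} one has $B^{\mp(n)}=[T^{\pm(n+1)}]_{-1,+1}$ and $B^{\pm(n)}=[T^{\mp(n+1)}]_{-1,+1}$. Therefore $A^{\pm(n)}_1(v)B^{\mp(n)}_2(u)$ (relation \eqref{AB-D}) lives in the \emph{same}-chirality sector $(\eps_1,\eps_2)=(\pm,\pm)$ of \eqref{RTT-D}, governed by \eqref{R++}; it is \eqref{R++} whose off-diagonal block carries $\al(v,u)R^{\pm\mp(n,n)}(v,q^2u)$ and whose crossed blocks carry $U^{\pm\mp},U^{\mp\pm}$ with the $\be$-prefactor. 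After removing $R^{\pm\pm(n,n)}_{12}(v,u)$ on the left via \eqref{RD-inv} and rewriting the crossed piece with $R^{\pm\pm(n,n)}(u,u)=P^{\pm\pm(n,n)}$, one lands on the residue form \eqref{AB-D}. Conversely, $A^{\pm(n)}_1(v)B^{\pm(n)}_2(u)$ (relation \eqref{AB-Dx}) lives in the \emph{opposite}-chirality sector $(\pm,\mp)$, governed by \eqref{R+-}; it is \eqref{R+-} whose crossed blocks contain $U^{\pm\pm},U^{\mp\mp}$ with the raw prefactor $(q-q^{-1})/(u-v)$, which is precisely the $U^{\pm\pm(n,n)}_{21}$ you need in \eqref{AB-Dx}. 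Working in the sectors you named would produce $U^{\pm\pm}$ in \eqref{AB-D} and $U^{\pm\mp}$ in \eqref{AB-Dx}, i.e.\ the wrong answer.

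Two smaller points. First, the sentence ``the nested form \eqref{R+-} of $R^{\pm\mp(n+1,n+1)}$ appears through the block-diagonal pieces of $R^{\pm\pm(n+1,n+1)}$'' conflates levels: the blocks of a level-$(n{+}1)$ matrix are level-$n$ matrices, not another level-$(n{+}1)$ matrix. Second, ``there is no pole in $v-u$ to regularise'' is inaccurate: \eqref{AB-Dx} and \eqref{DB-Dx} display the pole $(q-q^{-1})/(v-u)$ explicitly. What is true is that no residue rewriting is required there, because the unwanted piece $B^{\mp}_1 A^{\mp}_2$ (resp.\ $B^{\mp}_1 D^{\pm}_2$) arises directly from the crossed block of \eqref{R+-} rather than from the wanted term evaluated at coincident spectral parameters.
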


\begin{proof} 
The proof is analogous to that of Lemma \ref{L:rels-B}. The exchange relations are obtained by substituting \eqref{TD} into \eqref{RTT-D}. For \eqref{AB-D} and \eqref{DB-D} one also needs to use \eqref{RD-inv} and $R^{\pm\pm(n,n)}(u,u) = P^{\pm\pm(n,n)}$.
\end{proof}


\section{Algebraic Bethe Ansatz for $U_{q^2}(\mfso_{2n+1})$-symmetric spin chains} \label{sec:B}


In this section we study spectrum of $U_{q^2}(\mfso_{2n+1})$-symmetric chains with the \emph{full quantum space}~given~by
\equ{
L^{(n)} = L^V := (\C^{2n+1})^{\ot \ell} \qu\text{or}\qu L^{(n)} = L^S := (\C^{n|n})^{\ot \ell} \label{Ln-B}
}
where $\ell \in\N$ is the length of the chain. We will say that $L^{(n)}$ is the \emph{level-$n$ quantum space}. For each individual quantum space we assign a non-zero complex parameter $\rho_i$, called an \emph{inhomogeneity} or a \emph{marked point}. Their collection will be denoted by $\bm\rho=(\rho_1,\ldots,\rho_\ell)\in(\C^\times)^\ell$. We will assume that all $\rho_i$ are distinct.


\subsection{Quantum spaces and monodromy matrices}  \label{sec:QS-MM-B} 

Choose $m_1,m_2,\dots,m_n\in\Z_{\ge0}$, the excitation, or magnon, numbers. 
For each $m_k$ assign an $m_k$-tuple $\bm u^{(k)} := (u^{(k)}_1,\dots,u^{(k)}_{m_k})$ of non-zero complex parameters that will accommodate Bethe roots, and, when $k\ge2$, three $m_k$-tuples of labels, $\bm\ta^{k} := ( \ta^k_1, \dots, \ta^k_{m_k} )$, $\bm\tta^{k} := ( \tta^k_1, \dots, \tta^k_{m_k})$, and $\bm a^{k} := ( a^k_1, \dots, a^k_{m_k})$. These labels will be used to enumerate \emph{nested quantum spaces}. In particular, for each $\ta^{k}_i$ and each $\tta^{k}_i$ we associate a copy of $\C^{k-1|k-1}$ denoted by $V^{(k-1)}_{\ta^{k}_i}$ and $V^{(k-1)}_{\tta^{k}_i}$, respectively. 
We then identify subspaces $W_{a^k_i} \subset V^{(k-1)}_{\ta^{k}_i} \ot V^{(k-1)}_{\tta^{k}_i}$, isomorphic to $\C^{2k-1}$, in the following way. 
Let $\eta_{a^{k}_i} \in V^{(k-1)}_{\ta^{k}_i} \ot V^{(k-1)}_{\tta^{k}_i}$ be a highest vector as per \eqref{fused-eta-B}. Then $W_{a^{k}_i} \cong U_{q^2}(\mf{so}_{2k-1})\cdot \eta_{a^{k}_i}$, as a vector space.

For each $1\le k < n$ we recurrently define the \emph{nested level-$k$ quantum space} $L^{(k)}$ by 
\[
L^{(k)} := (L^{(k+1)} )^0 \ot W_{a^{k+1}_1} \ot \cdots \ot W_{a^{k+1}_{m_{k+1}}} 
\]
where $(L^{(k+1)} )^0$ is the \emph{level-$(k{+}1)$ vacuum space} defined by
\[
(L^{(k+1)} )^0 := \{ \xi \in L^{(k+1)} : \ell^+_{i,k+1}[0]\cdot\xi = 0 \text{ for } {-(k+1)}\le i \le k \} .
\]
In particular, $(L^{(k+1)} )^0 \cong \C$ or $(\C^{k|k})^{\ot \ell}$ when $ L^{(n)} = L^{V}$ or $L^{S}$, respectively.

We will make use of the following shorthand notation:
\[ 
\al(v;\bm u^{(k)} ) := \prod_{i=1}^{m_k} \al(v,u^{(k)}_i) , \qq
f_q(v;\bm u^{(k)} ) := \prod_{i=1}^{m_k} f_q(v,u^{(k)}_i) .
\]
For any $k<l$ we set $\bm u^{(k\dots l)} := (\bm u^{(k)},\dots,\bm u^{(l)})$ and $\bm u^{(l\ldots k)} := \emptyset$. We will also assume that $\bm u^{(n+1)} = \bm\rho$.

Having set up all the necessary quantum spaces and the shorthand notation we are ready to introduce the relevant monodromy matrices of the spin chain. 
With this goal in mind we introduce a diagonal ``twist'' matrix
\[
\mc{E}^{(n)} := \sum_{\bm i} \veps^{(1)}_{i_1}\cdots\veps^{(n)}_{i_n} \, e^{(1)}_{i_1i_1} \hot \cdots \hot  e^{(1)}_{i_n i_n}  \in\End(\C^{n|n})
\]
and set $\veps^{(k)} := \veps^{(k)}_{+1}/\veps^{(k)}_{-1}$ for all $k$. 
This matrix satisfies the Yang-Baxter relation \eqref{RTT-B} and will play the role of the twisted diagonal periodic boundary conditions.
(Note the factorisation relation: $\mc{E}^{(n)} = \mc{E}^{(n-1)}\hot \,(\veps^{(n)}_{-1} e^{(1)}_{-1,-1} + \veps^{(n)}_{+1} e^{(1)}_{+1,+1})$ with $\mc{E}^{(n-1)}\in\End(\C^{n-1|n-1})$.) 
Let $V^{(k)}_a$ and $V^{(k)}_b$ denote copies of $\C^{k|k}$, called \emph{auxiliary spaces}.
We define the \emph{level-$n$ monodromy matrix} with entries acting on the level-$n$ quantum space $L^{(n)}$ by
\equ{
T_{a}^{(n)}(v) := \mc{E}^{(n)}_a T^{(n)}_{a1}(v,\rho_1) \cdots T^{(n)}_{a\ell}(v,\rho_\ell)  \label{mono-B}
}
where $T^{(n)}_{ai}(v,\rho_i) = R^{(n)}_{ai}(v,\rho_i)$ or $R^{(n,n)}_{ai}(q^2v,\rho_i)$ when $L^{(n)} = L^{V}$ or $L^{S}$, respectively. (The $q^2$ in $R^{(n,n)}_{ai}(q^2v,\rho_i)$ helps the final expressions to be more elegant.)
Then, for each $1\le k <n$, we recurrently define the {\it nested level-$k$ monodromy matrices} with entries acting on the nested level-$k$ quantum space $L^{(k)}$ by 
\ali{
T^{(k)}_{a}(v; \bm u^{(k+1\dots n)}) &:= \frac{f_q(v;\bm u^{(k+1)})}{h^{(k)}(v;\bm u^{(k+1)})} A^{(k)}_a(v;\bm u^{(k+2\ldots n)}) \el & \qu\; \times \prod_{i=1}^{m_{k+1}} R^{\prime(k,k)}_{a\tta^{k+1}_i}(q^4 v, u^{(k+1)}_i) \, R^{\prime(k,k)}_{a\ta^{k+1}_i}(q^{4k-2} v, u^{(k+1)}_i) \el
& \; \equiv A^{(k)}_a(v;\bm u^{(k+2\ldots n)})  \prod_{i=1}^{m_{k+1}} R^{(k)}_{aa^{k+1}_i}(v, u^{(k+1)}_i) ,  \label{nmono-A}
\\[0.5em]
\wt T^{(k)}_{a}(v; \bm u^{(k+1\dots n)}) &:= \frac{f_q(q^{-4}v;\bm u^{(k+1)})}{h^{(k)}(q^{-4}v;\bm u^{(k+1)})} D^{(k)}_a(v;\bm u^{(k+2\ldots n)}) \el & \qu\; \times \prod_{i=1}^{m_{k+1}} R^{\prime(k,k)}_{a\tta^{k+1}_i}(v, u^{(k+1)}_i)\, R^{\prime(k,k)}_{a\ta^{k+1}_i}(q^{4k-6}v, u^{(k+1)}_i) \el
& \; \equiv D^{(k)}_a(v;\bm u^{(k+2\ldots n)})  \prod_{i=1}^{m_{k+1}} R^{(k)}_{aa^{k+1}_i}(v, q^{4}u^{(k+1)}_i) \,, \label{nmono-D}
}
where
\ali{
A^{(k)}_a(v;\bm u^{(k+2\ldots n)}) & = \big[ T^{(k+1)}_{a}(v; \bm u^{(k+2\dots n)}) \big]_{-1,-1} , \label{A-B} \\
D^{(k)}_a(v;\bm u^{(k+2\ldots n)}) & = \big[ T^{(k+1)}_{a}(v; \bm u^{(k+2\dots n)}) \big]_{+1,+1} , \label{D-B}
}
and $\equiv$ denotes equality of operators in the space $L^{(k)}$ subject to a suitable identification of the spaces $W_{a^{k+1}_i}\subset V_{\ta^{k+1}_i} \ot V_{\tta^{k+1}_i}$ and copies of $\C^{2k+1}$, as per Lemma \ref{L:fusion-B}.

The nested monodromy matrices span the following nesting tree:
\[
\begin{tikzpicture}
\tikzset{grow'=right}
\tikzset{level 1/.style={level distance=2.5cm}}
\tikzset{level 2/.style={level distance=3.5cm}}
\tikzset{level 3/.style={level distance=2.7cm}}
\tikzset{level 4/.style={level distance=2.5cm}}
\Tree [.$T^{(n)}_a(v)$ [.$T^{(n-1)}_a(v;\bm u^{(n)})$ [.$T^{(n-2)}_a(v;\bm u^{(n-1,n)})$ [.$\cdots$ [.$T^{(1)}_a(v;\bm u^{(2\ldots n)})$ ] [.$\wt T^{(1)}_a(v;\bm u^{(2\ldots n)})$ ] ] [.$\cdots$ ] ] [.$\wt T^{(n-2)}_a(v;\bm u^{(n-1,n)})$ ] ][.$\wt T^{(n-1)}_a(v;\bm u^{(n)})$ ] ]
\end{tikzpicture}
\]
It will be sufficient to focus on the non-tilded monodromy matrices at each level of nesting.
Indeed, it follows from the explicit form of the spinor $R$-matrices given by \eqref{RBVn} and \eqref{Rn} and definitions of the nested monodromy matrices in \eqref{nmono-A} and \eqref{nmono-D} that we have the following equalities of operators \eqref{A-B} and \eqref{D-B} in the spaces $L^{(n-1)}$ and $L^{(k)}$ with $1\le k < n-1$, subject to the choice of the full quantum space $L^{(n)}$:
\[
\arraycolsep=8pt\def\arraystretch{1.5}
\begin{array}{llll}
& L^{V} & L^{S} \\ \hline
A^{(n-1)}_a(v)/\veps^{(n)}_{-1} & \mc{E}^{(n-1)}_a & T^{(n-1)}_a(v) \\
D^{(n-1)}_a(v)/\veps^{(n)}_{+1} & f_{q^2}(v; \bm \rho)\,\mc{E}^{(n-1)}_a & f_{q}(v; \bm \rho)\, T^{(n-1)}_a(q^{-4}v) 
\\ 
A^{(k)}_a(v;\bm u^{(k+2\ldots n)})/\wt\veps^{\,(k+1)}_{-1} & \mc{E}^{(k)}_a & T^{(k)}_a(v) \\
D^{(k)}_a(v;\bm u^{(k+2\ldots n)})/\wt\veps^{\,(k+1)}_{+1} & f_{q^2}(v; \bm u^{(k+2)})\, \mc{E}^{(k)}_a & f_{q}(v; \bm \rho)\, f_{q^2}(v; \bm u^{(k+2)}) \, T^{(k)}_a(q^{-4}v)
\end{array}
\]
Here $\wt\veps^{\,(k+1)}_{\mp1} = \veps^{(n)}_{-1} \cdots \veps^{(k+2)}_{-1} \veps^{(k+1)}_{\mp1}$ and the operators $T^{(n-1)}_a(v)$ and $T^{(k)}_a(v)$ are defined in the same way as $T^{(n)}_a(v)$, viz.\ \eqref{mono-B}.
This table states that, for instance, $A^{(n-1)}_a(v) \equiv \veps^{(n)}_{-1}\,\mc{E}^{(n-1)}_a$ or $\veps^{(n)}_{-1}\,T^{(n-1)}_a(v)$ in the space $L^{(n-1)}$ when $L^{(n)}=L^V$ or $L^S$, respectively. 
It~is now easy to deduce that
\ali{
& R^{(k,k)}_{ab}(v,w)\,T^{(k)}_{a}(v;\bm u^{(k+1\dots n)})\,T^{(k)}_{b}(w;\bm u^{(k+1\dots n)}) \nn\\
& \qq \equiv T^{(k)}_{b}(w;\bm u^{(k+1\dots n)})\,T^{(k)}_{a}(v;\bm u^{(k+1\dots n)})\,R^{(k,k)}_{ab}(v,w) \label{RTTk-B}
}
for $1\le k <n$.
Therefore the entries of $T^{(k)}_{a}(v;\bm u^{(k+1\dots n)})$ in the space $L^{(k)}$ satisfy exchange relations given by Lemma~\ref{L:rels-B}. In other words, $T^{(k)}_{a}(v;\bm u^{(k+1\dots n)})$ is a monodromy matrix for a nested $U_{q^2}(\mf{so}_{2k+1})$-symmetric spin chain with the full quantum space $L^{(k)}$.


\subsection{Creation operators and Bethe vectors} \label{sec:CO-BV-B}

For each level of nesting we need to introduce $m_k$-magnon creation operators that will help us to define Bethe vectors. We will make use of the following notation:
\aln{
\mr{b}(v; \bm u^{(2\ldots n)}) & := \big[T^{(1)}_a(v;\bm u^{(2\ldots n)})\big]_{-1,+1} \,, \\
B^{(k-1)}_a(v; \bm u^{(k+1\ldots n)}) & := \big[T^{(k)}_a(v;\bm u^{(k+1\ldots n)})\big]_{-1,+1} \,,
}
where $2\le k\le n$. 
Note that $\mr{b}$ is an operator acting on $L^{(1)}$, and $B^{(k-1)}_a$ is a matrix in $\End(V^{(k-1)}_a)$ with entries acting on $L^{(k)}$.

We define the {\it level-$1$ creation operator} by
\equ{
\mr{B}^{(0)}(\bm u^{(1)}; \bm u^{(2\ldots n)}) := \prod_{i=m_1}^1 \mr{b}(u^{(1)}_i; \bm u^{(2\ldots n)}) .
}
For each $2\le k \le n$ we define the {\it level-$k$ creation operator} by
\equ{
\mr{B}^{(k-1)}(\bm u^{(k)}; \bm u^{(k+1\ldots n)}) := \prod_{i=m_k}^1 \mr{b}^{(k-1,k-1)}_{\ta^k_i \tta^k_i} (u^{(k)}_i; \bm u^{(k+1\ldots n)}) 
}
where 
\equ{
\mr{b}^{(k-1,k-1)}_{\ta^k_i \tta^k_i} (u^{(k)}_i; \bm u^{(k+1\ldots n)}) := \chi^{(k-1)}_{\ta^k_i \tta^k_i} \Big( B^{(k-1)}_{a}(u^{(k)}_i;\bm u^{(k+1\dots n)}) \Big) \label{beta-B}
}
with $\chi^{(k-1)}_{\ta^k_i \tta^k_i} : \End(V^{(k-1)}_a) \to (V^{(k-1)}_{\ta^k_i})^* \ot (V^{(k-1)}_{\tta^k_i})^*$ defined via \eqref{chi-map}.


\nc{\BV}[1]{\Phi^{(#1)} (\bm u^{(1\dots #1)};\bm u^{(#1+1\dots n)})}

Bethe vectors will be constructed by acting with creation operators on a suitably chosen highest vector $\eta \in L^{(1)}$, the \emph{nested vacuum vector}, defined by
\equ{
\eta := \eta_1 \ot \cdots \ot \eta_\ell \ot \eta_{a^{n}_1} \ot \cdots \ot \eta_{a^{n}_{m_n}} \ot \cdots \ot \eta_{a^2_1} \ot \cdots \ot \eta_{a^2_{m_2}}  . \label{vac}
}
Here $\eta_1$, \ldots, $\eta_\ell$ are highest vectors of the initial quantum spaces, viz.\ \eqref{Ln-B}, and $\eta_{a^{n}_1}$, \ldots, $\eta_{a^{2}_{m_2}}$ are highest vectors of the nested quantum spaces $W_{a^n_1}, \ldots, W_{a^2_{m_2}}$.
For each $1\le k \le n$ we define the {\it level-$k$ Bethe vector} by
\equ{
\BV{k} := \left( \,\prod_{i=k}^{1} \mr{B}^{(i-1)}(\bm u^{(i)}; \bm u^{(i+1\ldots n)}) \right) \cdot \eta  . \label{BVk}
}
The Bethe vector $\BV{k}$ is an element of the level-$k$ quantum space $L^{(k)}$ and has $\bm u^{(k+1\dots n)}$ and $\bm \rho$ as its free parameters. Furthermore, it is invariant under an interchange of any two of its non-free parameters of the same level, i.e.\ $u^{(l)}_i$ and $u^{(l)}_j$ for any $1\le l \le k$ and any admissible $i$ and $j$. 
Indeed, set $\mf{S}_{\bm m_{1\dots k}} := \mf{S}_{m_1}\times \cdots \times \mf{S}_{m_{k}}$ where each $\mf{S}_{m_l}$ is the symmetric group on $m_l$ letters.
Then, given any $\si^{(l)} \in \mf{S}_{m_l}$, define the action of $\mf{S}_{\bm m_{1\dots k}}$ on $\BV{k}$ by 
\[
\si^{(l)}: \bm u^{(1\dots k)} \mapsto \bm u^{(1\dots k)}_{\si^{(l)}} := ( \bm u^{(1)} , \dots , \bm u^{(l)}_{\si^{(l)}}, \dots , \bm u^{(k)} \} \qu\text{where}\qu
\bm u^{(l)}_{\si^{(l)}} := (u^{(l)}_{\si^{(l)}(1)}, \dots , u^{(l)}_{\si^{(l)}(m_l)}).
\]
For further convenience we set $\si^{(l)}_j\in\mf{S}_{m_l}$ to be the $j$-cycle such that 
\equ{
\bm u^{(l)}_{\si^{(l)}_j} = (u^{(l)}_j,u^{(l)}_{j+1},\dots,u^{(l)}_{m_l},u^{(l)}_1,\dots,u^{(l)}_{j-1}) . \label{u-si-j}
}
We will also make use of the notation
\equ{
\bm u^{(l)}_{\si^{(l)}_j\!,\,u^{(l)}_j\to v} := \bm u^{(l)}_{\si^{(l)}_j}\Big|_{\,u^{(l)}_j \to v} = (v,u^{(l)}_{j+1},\dots,u^{(l)}_{m_l},u^{(l)}_1,\dots,u^{(l)}_{j-1}) . \label{u-si-j-v}
}

\begin{lemma} \label{L:BV-symm-B}
The Bethe vector $\BV{k}$ is invariant under the action of $\mf{S}_{\bm m_{1\dots k}}$.
\end{lemma}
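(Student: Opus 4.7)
The argument proceeds by induction on the nesting level $k$.

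\textbf{Base case $k=1$.} The level-$1$ creation operator $\mr{b}(v;\bm u^{(2\ldots n)}) = [T^{(1)}_a(v;\bm u^{(2\ldots n)})]_{-1,+1}$ is a scalar-valued operator on $L^{(1)}$. The ``BB'' identity of Lemma~\ref{L:rels-B}, specialised to the $RTT$ relation \eqref{RTTk-B} at level $k=1$ (where $R^{(0,0)}(u,v)=1$), reduces to $\mr{b}(v)\,\mr{b}(u) = \mr{b}(u)\,\mr{b}(v)$. Hence $\mr{B}^{(0)}(\bm u^{(1)};\cdot) = \prod_{i=m_1}^{1}\mr{b}(u^{(1)}_i;\cdot)$ is symmetric in $\bm u^{(1)}$, giving $\mf{S}_{m_1}$-invariance of $\Phi^{(1)}$.

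\textbf{Inductive step.} Assume the claim holds at level $k-1$. Since
\[
\Phi^{(k)}(\bm u^{(1\ldots k)};\bm u^{(k+1\ldots n)}) = \mr{B}^{(k-1)}(\bm u^{(k)};\bm u^{(k+1\ldots n)})\cdot\Phi^{(k-1)}(\bm u^{(1\ldots k-1)};\bm u^{(k\ldots n)})
\]
and since $\mr{B}^{(k-1)}$ depends only on $\bm u^{(k)}$ and $\bm u^{(k+1\ldots n)}$ (through the nested monodromy $T^{(k)}_a$), the invariance of $\Phi^{(k)}$ under $\mf{S}_{\bm m_{1\ldots k-1}}$ follows immediately from the inductive hypothesis applied to $\Phi^{(k-1)}$. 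It therefore suffices to prove $\mf{S}_{m_k}$-invariance; since $\mf{S}_{m_k}$ is generated by adjacent transpositions, it is enough to verify invariance under the swap $u^{(k)}_j \leftrightarrow u^{(k)}_{j+1}$.

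\textbf{Core computation.} The key ingredient is the ``BB'' identity of Lemma~\ref{L:rels-B} applied to \eqref{RTTk-B} at level $k$:
\[
R^{(k-1,k-1)}_{ab}(v,u)\,B^{(k-1)}_a(v;\bm u^{(k+1\ldots n)})\,B^{(k-1)}_b(u;\bm u^{(k+1\ldots n)}) = B^{(k-1)}_b(u)\,B^{(k-1)}_a(v)\,R^{(k-1,k-1)}_{ab}(v,u).
\]
Setting $v=u^{(k)}_j$, $u=u^{(k)}_{j+1}$ and applying $\chi^{(k-1)}_{\ta^k_j\tta^k_j}\otimes\chi^{(k-1)}_{\ta^k_{j+1}\tta^k_{j+1}}$ (with $a$ tied to the $j$-th labels and $b$ to the $(j{+}1)$-th) to both sides, one uses the intertwining identity \eqref{XYZ} for $\chi^{(k-1)}$ together with the crossing property \eqref{RB-cross} to transfer the auxiliary $R$-matrix into an $R$-action on the dual spaces attached to $\ta^k_j,\tta^k_j,\ta^k_{j+1},\tta^k_{j+1}$. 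Inside the full Bethe vector, these dual auxiliary spaces are contracted against the vacuum factors $\eta_{a^k_j}\otimes\eta_{a^k_{j+1}}\in W_{a^k_j}\otimes W_{a^k_{j+1}}$ appearing in $\eta$, viz.\ \eqref{vac}. By the fusion identity of Lemma~\ref{L:fusion-B}, the transferred $R$-matrix descends to the spinor-vector $R$-matrix on $W_{a^k_j}\otimes W_{a^k_{j+1}}$; since each $\eta_{a^k_i}$ is the $U_{q^2}(\mfso_{2k-1})$-highest vector of $W_{a^k_i}$, this restricted $R$-matrix acts as a scalar, and the contributions from the two occurrences of $R^{(k-1,k-1)}$ on opposite sides of the BB relation cancel. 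Therefore the exchange $u^{(k)}_j\leftrightarrow u^{(k)}_{j+1}$ leaves $\Phi^{(k)}$ unchanged, completing the induction.

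\textbf{Principal obstacle.} The delicate point is verifying that, after the $\chi^{(k-1)}$-transfer and the fusion restriction to $W_{a^k_j}\otimes W_{a^k_{j+1}}$, the residual $R$-action on the tensor product of highest vectors is indeed a scalar that matches on both sides. This rests squarely on the fusion formula \eqref{RR=R-B} and on the explicit highest-weight structure of $\eta^{(n,n)}$ from \eqref{fused-eta-B}, with the grading factors carried along by \eqref{XYZ} combining correctly with the crossing identity \eqref{RB-cross}.
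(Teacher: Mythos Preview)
Your overall strategy is sound and matches the paper's: reduce to adjacent transpositions at a fixed level, apply the BB exchange relation \eqref{BB} for the nested monodromy, and use the map $\chi^{(k-1)}$ to convert the auxiliary $R$-matrices into operators on the dual spaces attached to $\ta^k_i,\tta^k_i,\ta^k_{i+1},\tta^k_{i+1}$. The paper carries this out and obtains
\[
\mr{b}^{(k-1,k-1)}_{\ta^k_i \tta^k_i}(u^{(k)}_i)\,\mr{b}^{(k-1,k-1)}_{\ta^k_{i+1} \tta^k_{i+1}}(u^{(k)}_{i+1})
=
\mr{b}^{(k-1,k-1)}_{\ta^k_i \tta^k_i}(u^{(k)}_{i+1})\,\mr{b}^{(k-1,k-1)}_{\ta^k_{i+1} \tta^k_{i+1}}(u^{(k)}_{i})\,
\hat R^{(k-1,k-1)}_{\ta^k_i \ta^k_{i+1}}\,\check R^{(k-1,k-1)}_{\tta^k_i \tta^k_{i+1}},
\]
with $\hat R = RP$, $\check R = PR$, and then checks directly that $\hat R_{\ta^k_i\ta^k_{i+1}}(u^{(k)}_{i+1},u^{(k)}_i)\,\check R_{\tta^k_i\tta^k_{i+1}}(u^{(k)}_i,u^{(k)}_{i+1})\cdot\eta=\eta$.

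The gap in your argument is the justification of this last step. You invoke the crossing relation \eqref{RB-cross} and the fusion identity of Lemma~\ref{L:fusion-B}, claiming that the transferred $R$-matrix ``descends to the spinor-vector $R$-matrix on $W_{a^k_j}\otimes W_{a^k_{j+1}}$''. This is not what happens: the two residual $R$-matrices act on the pairs $(\ta^k_i,\ta^k_{i+1})$ and $(\tta^k_i,\tta^k_{i+1})$ respectively --- they couple the \emph{first} tensor legs of the two fused vectors to each other, and separately the \emph{second} legs. That is not the configuration of Lemma~\ref{L:fusion-B}, which fuses a single pair $(\ta,\tta)$ into one vector space sitting against a third spinor space; no spinor-vector $R$-matrix appears here, and the crossing identity \eqref{RB-cross} is not used at all. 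What must actually be verified is the concrete identity
\[
\hat R^{(k-1,k-1)}_{13}(u_{i+1},u_i)\,\check R^{(k-1,k-1)}_{24}(u_i,u_{i+1})\cdot\bigl(\eta^{(k-1,k-1)}_{12}\otimes\eta^{(k-1,k-1)}_{34}\bigr)=\eta^{(k-1,k-1)}_{12}\otimes\eta^{(k-1,k-1)}_{34},
\]
which is a direct computation from the factorised form \eqref{fused-eta-B} of $\eta^{(k-1,k-1)}$ together with the recursion \eqref{Rn}. Your ``Principal obstacle'' paragraph correctly identifies this as the crux, but the tools you reach for do not establish it.
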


\begin{proof} 
We rewrite the ``BB'' exchange relation \eqref{BB} in terms of the creation operators \eqref{beta-B},
\aln{
& \mr{b}^{(k-1,k-1)}_{\ta^k_i \tta^k_i} (u^{(k)}_i; \bm u^{(k+1\ldots n)}) \,
\mr{b}^{(k-1,k-1)}_{\ta^k_{i+1} \tta^k_{i+1}} (u^{(k)}_{i+1}; \bm u^{(k+1\ldots n)}) \\ & \qq 
= 
\mr{b}^{(k-1,k-1)}_{\ta^k_i \tta^k_i} (u^{(k)}_{i+1}; \bm u^{(k+1\ldots n)}) \,
\mr{b}^{(k-1,k-1)}_{\ta^k_{i+1} \tta^k_{i+1}} (u^{(k)}_{i}; \bm u^{(k+1\ldots n)}) \\ & \qq\qq \times \hat{R}^{(k-1,k-1)}_{\ta^k_i\ta^k_{i+1}}(u^{(k)}_{i+1},u^{(k)}_i) \, \check{R}^{(k-1,k-1)}_{\tta^k_i\tta^k_{i+1}}(u^{(k)}_i,u^{(k)}_{i+1}) ,
}
where $\hat{R}^{(k,k)} := R^{(k,k)}\,P^{(k,k)}$ and $\check{R}^{(k,k)} := P^{(k,k)}\, R^{(k,k)}$. Then one can verify that
\[
\hat{R}^{(k-1,k-1)}_{\ta^k_i\ta^k_{i+1}}(u^{(k)}_{i+1},u^{(k)}_i) \, \check{R}^{(k-1,k-1)}_{\tta^k_i\tta^k_{i+1}}(u^{(k)}_i,u^{(k)}_{i+1}) \cdot \eta =  \eta .
\]
This implies that $\BV{k}$ is invariant under the interchange of $u^{(k)}_i$ and $u^{(k)}_{i+1}$. Analogous arguments also imply that $\BV{k}$ is invariant under the interchange of $u^{(l)}_i$ and $u^{(l)}_{i+1}$ for any $1\le l \le k$ and any admissible $i$, thus implying the claim.
\end{proof}


\subsection{Transfer matrices, their eigenvalues, and Bethe equations}

We are now in position to define transfer matrices and study their spectrum. 
We begin from the simplest case, the $U_{q^2}(\mf{so}_3)$-symmetric spin chain. This chain is a special case of the XXZ spin chain with spin-$\frac12$ transfer matrix and spin-$1$ or spin-$\frac12$ quantum spaces when $L^{(1)} = L^V$ or $L^S$, respectively. It will serve us as a warm-up exercise. We define the {\it level-$1$ transfer matrix}~by
\[
\tau^{(1)}(v) := \tr_a T^{(1)}_a(v) \,.
\]

\begin{thrm} \label{T:B1}
The Bethe vector $\Phi^{(1)}(\bm u^{(1)})$ is an eigenvector of $\tau^{(1)}(v)$ with the eigenvalue 
\ali{
\La^{(1)}(v;\bm u^{(1)}) := \veps^{(1)}_{-1}\, f_q(v;\bm u^{(1)}) + \veps^{(1)}_{+1}\,f_{q^{-1}}(v;\bm u^{(1)})\, f_{q^\mu}(v;\bm \rho )  \label{EV-B1}
}
where $\mu=2$ or $1$ when $L^{(1)}=L^{V}$ or $L^{S}$, respectively, provided
\ali{
\Res{v\to u^{(1)}_j} \La^{(1)}(v;\bm u^{(1)}) = 0 \qu\text{for}\qu 1 \le j \le m_1. \label{BE-B1}
}
\end{thrm}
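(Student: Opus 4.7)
The approach is a classical algebraic Bethe ansatz calculation --- this is an XXZ-type chain and the computation mirrors the standard one. First I would split $\tau^{(1)}(v)=A^{(0)}(v)+D^{(0)}(v)$ by tracing the auxiliary supermatrix in \eqref{TB}, using $\tr e^{(1)}_{ij}=\del_{ij}$. To identify the vacuum eigenvalues, I would read off the diagonal sub-supermatrices of $R^{(1)}(v,\rho)$ from Lemma \ref{L:R-SV-B1} (case $L^{(1)}=L^V$) or of $R^{(1,1)}(q^{2}v,\rho)$ from Lemma \ref{L:RSS-B1} (case $L^{(1)}=L^S$), combine with the twist $\mc{E}^{(1)}$, and obtain
\[
A^{(0)}(v)\cdot\eta=\veps^{(1)}_{-1}\,\eta,\qq D^{(0)}(v)\cdot\eta=\veps^{(1)}_{+1}\,f_{q^{\mu}}(v;\bm\rho)\,\eta.
\]
In the $L^S$ case the identity $\al(q^{2}v,\rho)=f_{q}(v,\rho)$ selects $\mu=1$; in the $L^V$ case the direct substitution of the diagonal entries of \eqref{RBV1} selects $\mu=2$.

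Next I would specialise the exchange relations \eqref{AB} and \eqref{DB} of Lemma \ref{L:rels-B} to $n=0$. Because $R^{(0,0)}\equiv1$, all matrix factors disappear, \eqref{BB} reduces to mutual commutativity of the $B^{(0)}$'s, and, using $\Res{w\to u}f_{q^{\pm1}}(w,u)=\pm(q-q^{-1})u$, the $AB$ and $DB$ relations collapse to the familiar XXZ form
\[
A^{(0)}(v)\,B^{(0)}(u)=f_q(v,u)\,B^{(0)}(u)\,A^{(0)}(v)-\tfrac{(q-q^{-1})v}{v-u}\,B^{(0)}(v)\,A^{(0)}(u),
\]
\[
D^{(0)}(v)\,B^{(0)}(u)=f_{q^{-1}}(v,u)\,B^{(0)}(u)\,D^{(0)}(v)+\tfrac{(q-q^{-1})v}{v-u}\,B^{(0)}(v)\,D^{(0)}(u).
\]

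I would then iterate to push $A^{(0)}(v)$ and $D^{(0)}(v)$ through the product $\mr{B}^{(0)}(\bm u^{(1)})=\prod_{i=m_1}^{1}B^{(0)}(u^{(1)}_i)$. Keeping only the first (``wanted'') term in every commutation produces $\La^{(1)}(v;\bm u^{(1)})\,\Phi^{(1)}(\bm u^{(1)})$ with $\La^{(1)}$ as in \eqref{EV-B1}. The remaining (``unwanted'') contributions take the form $B^{(0)}(v)\prod_{i\ne j}B^{(0)}(u^{(1)}_i)\cdot\eta$ for $j=1,\dots,m_1$. Using Lemma \ref{L:BV-symm-B} together with \eqref{u-si-j}, I may permute the product so as to place $B^{(0)}(u^{(1)}_j)$ in the leftmost position, after which a single application of the second term of the relevant exchange relation, followed by wanted commutations through the remaining $B^{(0)}$'s and the vacuum action, produces the unwanted coefficient
\[
\frac{(q-q^{-1})v}{v-u^{(1)}_j}\Bigl(-\veps^{(1)}_{-1}\prod_{i\ne j}f_q(u^{(1)}_j,u^{(1)}_i)+\veps^{(1)}_{+1}\,f_{q^{\mu}}(u^{(1)}_j;\bm\rho)\prod_{i\ne j}f_{q^{-1}}(u^{(1)}_j,u^{(1)}_i)\Bigr).
\]

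Finally I would recognise the bracketed expression, up to a non-vanishing scalar factor $-[(q-q^{-1})u^{(1)}_j]^{-1}$, as $\Res{v\to u^{(1)}_j}\La^{(1)}(v;\bm u^{(1)})$; indeed $\Res{v\to u^{(1)}_j}f_{q^{\pm1}}(v;\bm u^{(1)})=\pm(q-q^{-1})u^{(1)}_j\prod_{i\ne j}f_{q^{\pm1}}(u^{(1)}_j,u^{(1)}_i)$. Vanishing of every unwanted term is therefore equivalent to \eqref{BE-B1}, which completes the argument. The only genuine obstacle is bookkeeping --- tracking signs from the $AB$ and $DB$ relations and relating the polar coefficients to the residue --- and no step requires a new idea beyond the classical XXZ calculation, consistent with the theorem being positioned as a warm-up for the higher-rank nesting.
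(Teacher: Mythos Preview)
Your proposal is correct and follows essentially the same route as the paper's own proof: specialise Lemma~\ref{L:rels-B} to $n=0$ so that $R^{(0,0)}\equiv1$ collapses the exchange relations to the standard XXZ ones, commute $A^{(0)}(v)+D^{(0)}(v)$ through $\mr{B}^{(0)}(\bm u^{(1)})$ using the symmetry of Lemma~\ref{L:BV-symm-B}, and identify the unwanted coefficients with the residues in~\eqref{BE-B1}. The paper keeps everything in the compact residue form (the second line of its displayed formula for $\tau^{(1)}(v)\,\Phi^{(1)}(\bm u^{(1)})$ is exactly your unwanted sum before unpacking) and does not spell out the vacuum eigenvalues or the identity $\al(q^2v,\rho)=f_q(v,\rho)$, whereas you make these steps explicit; otherwise the two arguments coincide.
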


The explicit form of the Bethe equations \eqref{BE-B1} is
\[
\prod_{i=1}^{m_1} \frac{q u^{(1)}_j - q^{-1} u^{(1)}_i}{q^{-1} u^{(1)}_j - q u^{(1)}_i} = -  \veps^{(1)}\prod_{i=1}^\ell  \frac{q^\mu v - q^{-\mu} \rho_i}{v - \rho_i} \,.
\]

\begin{proof}[Proof of Theorem \ref{T:B1}]
This is a standard result, see e.g.\ \cite{BR08}. Write $T^{(1)}(u)$ as
\[
T^{(1)}(u) = \mr{a}(u)\, e^{(1)}_{-1,-1} + \mr{b}(u)\, e^{(1)}_{-1,+1} + \mr{c}(u)\, e^{(1)}_{+1,-1} + \mr{d}(u)\, e^{(1)}_{+1,+1} .
\]
Lemma \ref{L:rels-B} then implies that
\aln{
& \mr{b}(v)\,\mr{b}(u) = \mr{b}(u)\,\mr{b}(v) , 
\\[0.5em]
& \mr{a}(v)\,\mr{b}(u) = f_q(v,u) \, \mr{b}(u)\, \mr{a}(v) - \frac{v/u}{v-u} \Res{w\to u} \Big( f_q(w,u)\,\mr{b}(v)\, \mr{a}(w) \Big) , 
\\[0.5em]
& \mr{d}(v)\,\mr{b}(u) = f_{q^{-1}}(v,u)\, \mr{b}(u)\, \mr{d}(v) - \frac{v/u}{v-u} \Res{w\to u} \Big( f_{q^{-1}}(w,u)\,\mr{b}(v)\, \mr{d}(w) \Big) . 
}
Using the relations above and the standard symmetry arguments, cf.\ Lemma \ref{L:BV-symm-B}, we obtain 
\aln{
\tau^{(1)}(v)\, \Phi^{(1)}(\bm u^{(1)}) & = \big(\!\mr{a}(v) +\mr{d}(v)\big)\,\mr{B}^{(0)}(\bm u^{(1)})\cdot \eta \\[0.2em]
& = \mr{B}^{(0)}(\bm u^{(1)}) \, \big( f_q(v;\bm u^{(1)})\,\mr{a}(v) + f_{q^{-1}}(v)\,\mr{d}(v) \big) \cdot \eta \\
& \qu - \sum_{j=1}^{m_1} \frac{v/u^{(1)}_j}{v-u^{(1)}_j} \, \mr{B}^{(0)}(\bm u^{(1)}_{u^{(1)}_j\to v}) \\ & \qq \times \Res{w\to u^{(1)}_j}\big( f_q(w;\bm u^{(1)})\,\mr{a}(w) + f_{q^{-1}}(w;\bm u^{(1)})\,\mr{d}(w) \big)\cdot \eta 
}
which, upon evaluation, yields the wanted result.
\end{proof}


We now turn to the $U_{q^2}(\mf{so}_{2n+1})$-symmetric spin chains with $n\ge2$. We define the {\it level-$n$ transfer~matrix}~by
\[
\tau^{(n)}(v) := \tr_a T^{(n)}_a(v) .
\]
Moreover, for each $1\le k \le n-1$, we define the {\it nested level-$k$ transfer matrices}~by
\aln{
\tau^{(k)}(v;\bm u^{(k+1\ldots n)}) & := \tr_a T^{(k)}_a(v;\bm u^{(k+1\ldots n)}) ,
\\
\wt\tau^{(k)}(v;\bm u^{(k+1\ldots n)}) & := \tr_a \wt{T}^{(k)}_a(v;\bm u^{(k+1\ldots n)}) .
}
Let $\equiv$ denote equality of operators in the nested space $L^{(k)}$ and set $\bm u^{(n+1)} := \bm \rho$. It follows from the results of Section \ref{sec:QS-MM-B} that
\equ{
\wt\tau^{(k)}(v; \bm u^{(k+1\ldots n)}) \equiv \veps^{(k+1)}\,\mu^{(k)}(v; \bm u^{(k+2)})\,\tau^{(k)}(q^{-4}v;\bm u^{(k+1\ldots n)}) \label{t=t-B}
}
where $\mu^{(k)}(v; \bm u^{(k+2)})$ is given by
\[
\arraycolsep=8pt\def\arraystretch{1.5}
\begin{array}{lll}
 & L^{V} & L^{S} \\ \hline
\mu^{(n-1)}(v; \bm u^{(n+1)}) & f_{q^2}(v;\bm\rho) & f_q(v;\bm\rho) \\
\mu^{(k)}(v; \bm u^{(k+2)}) & f_{q^2}(v;\bm u^{(k+2)}) & f_q(v;\bm\rho)\,f_{q^2}(v;\bm u^{(k+2)}) 
\end{array}
\]

We extend the prescription above to include the $k=0$ case. 
The Theorem below is the main result of this section.

\begin{thrm} \label{T:B} 
The Bethe vector $\Phi^{(n)}(\bm u^{(1\ldots n)})$ with $n\ge 2$ is an eigenvector of $\tau^{(n)}(v)$ with the eigenvalue
\ali{
\La^{(n)}(v;\bm u^{(1\ldots n)}) & := \sum_{\bm i} f_q(q^{\,p_0(\bm i)}v;\bm u^{(1)}) \el & \qu\; \times \prod_{j=1}^n \veps^{(j)}_{i_j} \Big(\mu^{(j-1)}(q^{\,p_j(\bm i)} v; \bm u^{(j+1)})\, f_{q^{-2}}(q^{\,p_j(\bm i)} v;\bm u^{(j)}) \Big)^{\frac12(1+i_j)} \label{EV-B}
}
where $p_j(\bm i) = -2 \sum_{k=j+1}^n (1 + i_{k})$ provided
\equ{
\Res{v\to u^{(k)}_j} \La^{(n)}(v;\bm u^{(1\ldots n)}) = 0 \qu\text{for}\qu 1\le j \le m_k,\; 1\le k \le n. \label{BE-B}
}
\end{thrm}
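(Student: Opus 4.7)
My plan is to proceed by induction on $n$, with Theorem~\ref{T:B1} as the base case $n=1$. For the inductive step I would split the transfer matrix as $\tau^{(n)}(v) = \tr_a A^{(n-1)}(v) + \tr_a D^{(n-1)}(v)$ via the block decomposition \eqref{TB}, and iterate the exchange relations \eqref{AB}, \eqref{DB} to commute $A^{(n-1)}(v)$ and $D^{(n-1)}(v)$ past the $m_n$ outer creation operators $\mr{b}^{(n-1,n-1)}_{\ta^n_i\tta^n_i}$ that turn $\Phi^{(n-1)}$ into $\Phi^{(n)}$.

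The wanted contribution will assemble into an $A$-branch of shape
\[
\mr{B}^{(n-1)}(\bm u^{(n)};\cdot)\,f_q(v;\bm u^{(n)})\, A^{(n-1)}(v)\prod_{i=1}^{m_n} R^{\prime(n-1,n-1)}_{a\tta^n_i}(q^{4}v,u^{(n)}_i)\,R^{\prime(n-1,n-1)}_{a\ta^n_i}(q^{4n-2}v,u^{(n)}_i)\cdot\Phi^{(n-1)}
\]
together with a parallel $D$-branch in which $A^{(n-1)}$, $f_q$, $q^{4}$, $q^{4n-2}$ are replaced by $D^{(n-1)}$, $f_{q^{-1}}$, $1$, $q^{4n-6}$ respectively. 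Taking the trace over $a$ and restricting to the fused subspaces $W_{a^n_i}$, Lemma~\ref{L:fusion-B} collapses each paired product of $R^{\prime(n-1,n-1)}$-matrices into a single spinor-vector $R^{(n-1)}$-matrix, and the normalisation $h^{(n-1)}/f_q$ produced by the fusion identity matches precisely the prefactor in \eqref{nmono-A}--\eqref{nmono-D}. Hence the wanted part of $\tau^{(n)}(v)\Phi^{(n)}$ reduces to
\[
\bigl(f_q(v;\bm u^{(n)})\,\tau^{(n-1)}(v;\bm u^{(n)}) + f_{q^{-1}}(v;\bm u^{(n)})\,\wt\tau^{(n-1)}(v;\bm u^{(n)})\bigr)\,\Phi^{(n-1)}.
\]

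Next I would invoke \eqref{t=t-B} to rewrite $\wt\tau^{(n-1)}$ as $\veps^{(n)}\,\mu^{(n-1)}(v;\bm\rho)\,\tau^{(n-1)}(q^{-4}v;\bm u^{(n)})$ on $L^{(n-1)}$, and apply the inductive hypothesis to each $\tau^{(n-1)}(\cdot)\Phi^{(n-1)}$. The two resulting branches correspond to $i_n=\pm1$: the $i_n=-1$ branch contributes the factor $\veps^{(n)}_{-1}\,f_q(v;\bm u^{(n)})$ with all deeper spectral arguments unshifted, whereas the $i_n=+1$ branch contributes $\veps^{(n)}_{+1}\,\mu^{(n-1)}(v;\bm\rho)\,f_{q^{-1}}(v;\bm u^{(n)})$ and shifts every deeper spectral argument by $q^{-4}$, reproducing exactly the pattern encoded by $p_j(\bm i)$ for $j<n$. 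Summing over $i_n=\pm1$ then yields $\Lambda^{(n)}(v;\bm u^{(1\ldots n)})$ as stated.

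Finally I need to show the unwanted residue terms coming from \eqref{AB}, \eqref{DB} cancel. By the $\mf{S}_{m_n}$-symmetry of $\Phi^{(n)}$ (Lemma~\ref{L:BV-symm-B}) these collect into contributions proportional to $\Res_{v\to u^{(n)}_j}\Lambda^{(n)}$, which vanish by the Bethe equations \eqref{BE-B} at $k=n$; the lower-level Bethe conditions for $k<n$ are inherited through the inductive hypothesis applied to the nested transfer matrices. The main obstacle I anticipate is the bookkeeping in the fusion step: one must verify that the specific combinations of $R^{\prime(n-1,n-1)}$-matrices produced by iterating \eqref{AB}, \eqref{DB}, once restricted to $W_{a^n_i}$, reassemble into precisely the nested monodromy matrix \eqref{nmono-A}. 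This relies on the tailored $q$-shifts $q^{4n-2}$ and $q^{4}$ and on the fused highest vector $\eta^{(n,n)}$ of \eqref{fused-eta-B}, which are designed so that Lemma~\ref{L:fusion-B} applies on the correct auxiliary pairs.
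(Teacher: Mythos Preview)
Your overall strategy is the paper's: split $\tau^{(n)} = \tr_a A^{(n-1)} + \tr_a D^{(n-1)}$, commute through the $\mr{b}$'s via \eqref{AB}--\eqref{DB} (rewritten using the crossing relation \eqref{RB-cross} and the map \eqref{XYZ}), identify the result with the nested transfer matrices, then recurse. The gap you flag at the end is real, but your diagnosis is off: the obstruction is not the $q$-shifts or the fused highest vector, it is the \emph{ordering} of the $R^{\prime}$-matrices. Iterating the exchange relation $m_n$ times and using trace cyclicity yields
\[
\tr_a\Big( A^{(n-1)}_a(v)\,\prod_{i=1}^{m_n} R^{\prime(n-1,n-1)}_{a\tta^n_i}(q^4v,u^{(n)}_i)\,\prod_{i=m_n}^{1} R^{\prime(n-1,n-1)}_{a\ta^n_i}(q^{4n-6}v,u^{(n)}_i) \Big),
\]
in which the pairs $(\ta^n_i,\tta^n_i)$ are not adjacent, so Lemma~\ref{L:fusion-B} cannot be applied site by site as you assume. (Note also that the crossing shift for $R^{(n-1,n-1)}$ is $q^{4(n-1)-2}=q^{4n-6}$, not $q^{4n-2}$; correspondingly the $D$-branch carries $q^{4n-10}$, not $q^{4n-6}$.)

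The paper resolves this by introducing an explicit reordering operator $X^{(n-1)}$, built from products of $R^{(n-1,n-1)}$-matrices so that the Yang--Baxter equation gives $\ol\tau^{\,(n-1)} = X^{(n-1)}\,\tau^{(n-1)}\,(X^{(n-1)})^{-1}$, and then checking that each factor of $X^{(n-1)}$ acts as a scalar on the nested Bethe vector $\Phi^{(n-1)}$. That scalarity is the non-obvious step; without it the conjugation would not drop out and the wanted term would not reduce to a genuine nested transfer matrix acting on $\Phi^{(n-1)}$. Once this reordering is in place, your induction and your treatment of the unwanted terms go through essentially as written.
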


The explicit form of the Bethe equations \eqref{BE-B} is
\gat{
\prod_{i=1}^{m_1} \frac{q u^{(1)}_j - q^{-1} u^{(1)}_i}{q^{-1} u^{(1)}_j - q\,u^{(1)}_i}
\prod_{i=1}^{m_2} \frac{u^{(1)}_j - u^{(2)}_i}{q^{2} u^{(1)}_j - q^{-2} u^{(2)}_i} = - \veps^{(1)}\, \la_1(u^{(1)}_j) \,, \label{BE-B-a}
\\
\prod_{i=1}^{m_{k-1}} \frac{q^{-2}u^{(k)}_j - q^2 u^{(k-1)}_i}{u^{(k)}_j - u^{(k-1)}_i} \prod_{i=1}^{m_k} \frac{q^2 u^{(k)}_j - q^{-2} u^{(k)}_i}{q^{-2} u^{(k)}_j - q^2u^{(k)}_i}
\prod_{i=1}^{m_{k+1}} \frac{u^{(k)}_j - u^{(k+1)}_i}{q^{2}u^{(k)}_j - q^{-2} u^{(k+1)}_i} = - \frac{\veps^{(k)}}{\veps^{(k-1)}} \,, \label{BE-B-b}
\\
\prod_{i=1}^{m_{n-1}} \frac{q^{-2}u^{(n)}_j - q^2 u^{(n-1)}_i}{u^{(n)}_j - u^{(n-1)}_i} \prod_{i=1}^{m_n} \frac{q^2 u^{(n)}_j - q^{-2} u^{(n)}_i}{q^{-2} u^{(n)}_j - q^2u^{(n)}_i} = - \frac{\veps^{(n)}}{\veps^{(n-1)}} \, \la_n(u^{(n)}_j) \,, \label{BE-B-c}
}
where $\la_1(v)=1$ or $f_{q}(v;\bm\rho)$ and $\la_n(v)=f_{q^2}(v;\bm\rho)$ or $1$ when $L^{(n)}=L^V$ or $L^S$, respectively.

\begin{proof}[Proof of Theorem \ref{T:B}]
We begin by rewriting the ``AB'' and ``DB'' exchange relations, \eqref{AB} and \eqref{DB}, in a more convenient form. 
Lemma \ref{L:R-cross-B} implies that 
\[
R^{(n-1,n-1)}_{21}(u,v) = \frac{(R^{(n-1,n-1)}_{12}(q^{4n-6}v,u))^{w_2}}{h^{(n-1)}(v,u)} \,.
\]
Combining this identity with \eqref{XYZ}, \eqref{AB} and \eqref{beta-B} yields the wanted form of the ``AB'' exchange relation,
\ali{
& \hspace{-1em} A^{(n-1)}_a(v)\,\mr{b}^{(n-1,n-1)}_{\ta^{n}_i\tta^{n}_i}(u^{(n)}_i) \el
& \qu = \mr{b}^{(n-1,n-1)}_{\ta^{n}_i\tta^{n}_i}(u^{(n)}_i)\, \Bigg( \frac{f_q(v,u^{(n)}_i)}{h^{(n-1)}(v,u^{(n)}_i)} \el & \hspace{3.4cm} \times  R^{\prime(n-1,n-1)}_{a\ta^{n}_i}(q^{4n-6}v,u^{(n)}_i) \, A^{(n-1)}_a(v)\,R^{\prime(n-1,n-1)}_{a\tta^{n}_i}(q^{4}v,u^{(n)}_i) \Bigg) \el
& \qq - \frac{v/u^{(n)}_i}{v-u^{(n)}_i} \, \mr{b}^{(n-1,n-1)}_{\ta^{n}_i\tta^{n}_i}(v) \Res{w\to u^{(n)}_i} \Bigg( \frac{f_q(w,u^{(n)}_i)}{h^{(n-1)}(w,u^{(n)}_i)} 
 \el & \hspace{3.4cm} \times R^{\prime(n-1,n-1)}_{a\ta^{n}_i}(q^{4n-6}w,u^{(n)}_i)\, A^{(n-1)}_a(w) \,R^{\prime(n-1,n-1)}_{a\tta^{n}_i}(q^{4}w,u^{(n)}_i) \Bigg) . \label{Ab}
}
Applying the same arguments and the identity
\[
f_{q^{-1}}(v,u^{(k+1)}_i) = f_{q^{-2}}(v,u^{(k+1)}_i)\, f_{q}(q^{-4}v,u^{(k+1)}_i) 
\]
to \eqref{DB} we find the wanted form of the ``DB'' exchange relation,
\ali{
& \hspace{-1em} D^{(n-1)}_a(v)\,\mr{b}^{(n-1)}_{\ta^{n}_i \tta^{n}_i}(u^{(n)}_i)  \el
& \qu = \mr{b}^{(n-1,n-1)}_{\ta^{n}_i \tta^{n}_i}(u^{(n)}_i)\, \Bigg( f_{q^{-2}}(v,u^{(k+1)}_i) \, \frac{f_q(q^{-4}v,u^{(n)}_i)}{h^{(n-1)}(q^{-4}v,u^{(n)}_i)}\el & \hspace{3.4cm} \times R^{\prime(n-1,n-1)}_{a\ta^{n}_i}(q^{4n-10}v,u^{(n)}_i)\, D^{(n-1)}_a(v)\,R^{\prime(n-1,n-1)}_{a\tta^{n}_i}(v,u^{(n)}_i) \Bigg)  \el
& \qq - \frac{v/u^{(n)}_i}{v-u^{(n)}_i}\,\mr{b}^{(n-1,n-1)}_{\ta^{n}_i\tta^{n}_i}(v) \Res{w\to u^{(n)}_i} \Bigg( f_{q^{-2}}(w,u^{(k+1)}_i) \, \frac{f_q(q^{-4}w,u^{(n)}_i)}{h^{(n-1)}(q^{-4}w,u^{(n)}_i)} \el & \hspace{3.4cm} \times R^{\prime(n-1,n-1)}_{a\ta^{n}_i}(q^{4n-10}w,u^{(n)}_i) \,D^{(n-1)}_a(w) \,R^{\prime(n-1,n-1)}_{a\tta^{n}_i}(w,u^{(n)}_i) \Bigg) . \label{Db}
}
Inspired by the exchange relations above we define a barred transfer matrix
\aln{
\ol\tau^{\,(n-1)}(v; \bm u^{(n)} ) & := \frac{f_q(v;\bm u^{(n)})}{h^{(n-1)}(v;\bm u^{(n)})} \tr_a \Bigg( A^{(n-1)}_a(v) \prod_{i=1}^{m_{n}} R^{\prime(n-1,n-1)}_{a\tta^{n}_i}(q^4v,u^{(n)}_i) \\ & \hspace{4.72cm} \times  \prod_{i=m_{n}}^1 R^{\prime(n-1,n-1)}_{a\ta^{n}_i}(q^{4n-6}v,u^{(n)}_i) \Bigg) 
}
which differs from $\tau^{(n-1)}(v; \bm u^{(n)} )$ in \eqref{nmono-A} by the ordering of the $R$-matrices only. 
The ordering can be amended with the help of operator $X^{(n-1)} := \prod_{i=1}^{m_n-1} X^{(n-1)}_i$ where
\[
X^{(n-1)}_{i} := \prod_{j=i+1}^{m_n} R^{(n-1,n-1)}_{\ta^n_j\ta^n_i}(u^{(n)}_j, u^{(n)}_i) \prod_{j=m_n}^{i+1} R^{(n-1,n-1)}_{\tta^n_j\ta^n_i}( q^{4n-10} u^{(n)}_j,u^{(n)}_i) .
\]
In particular, $\ol\tau^{\,(n-1)}(v;\bm u^{(n-1)}) = X^{(n-1)} \, \tau^{(n-1)}(v;\bm u^{(n-1)})\,(X^{(n-1)})^{-1}$. 
Moreover, each $X^{(n-1)}_i$ acts as a scalar operator on $\Phi^{(n-1)}(\bm u^{(1\ldots n-1)}; \bm u^{(n)})$. 
Then, using the wanted exchange relations above, Lemma \ref{L:BV-symm-B}, the standard symmetry arguments, the equality \eqref{t=t-B}, and recalling that $\tau^{(n)}(v) = \tr_a ( A^{(n-1)}_a(v) + D^{(n-1)}_a(v) )$, we obtain
\aln{
\tau^{(n)}(v)\,\Phi^{(n)}(\bm u^{(1\ldots n)}) &= \mr{B}^{(n-1)}(\bm u^{(n)})\, X^{(n-1)} \, \tau^{(n)}(v;\bm u^{(n)}) \,(X^{(n-1)})^{-1}\, \Phi^{(n-1)}(\bm u^{(1\ldots n-1)};\bm u^{(n)})
\\
& \qu - \sum_{j=1}^{m_n} \frac{v/u^{(n)}_j}{v-u^{(n)}_j} \, \mr{B}^{(n-1)}(\bm u^{(n)}_{\si^{(n)}_j,u^{(n)}_j\to v}) \, X^{(n-1)} \\ & \qq\times\Res{w\to u^{(n)}_j} \tau^{(n)}(w; \bm u^{(n)}_{\si^{(n)}_j}) \,(X^{(n-1)})^{-1}\, \Phi^{(n-1)}(\bm u^{(1\ldots n-1)};\bm u^{(n)}_{\si^{(n)}_j}) 
}
where 
\aln{
\tau^{(n)}(v;\bm u^{(n)}) & := \tau^{(n-1)}(v;\bm u^{(n)}) + \veps^{(n)} f_{q^{-2}}(v;\bm u^{(n)})\,\mu^{(n-1)}(v;\bm u^{(n+1)})\,\tau^{(n-1)}(q^{-4}v; \bm u^{(n)}).
}
Since $(X^{(n-1)})^{-1}$ acts as a scalar operator, we are only left to determine the action of $\tau^{(n)}(v;\bm u^{(n)})$ on $\Phi^{(n-1)}(\bm u^{(1\ldots n-1)};\bm u^{(n)})$.
But $\Phi^{(n-1)}(\bm u^{(1\ldots n-1)};\bm u^{(n)})\in L^{(n-1)}$ and thus we can use \eqref{RTTk-B} and repeat the same arguments as above down the nesting. 
This gives a recurrence relation for the eigenvalue $\La^{(n)}(v;\bm u^{(1\ldots n)})$:
\aln{
\La^{(k)}(v;\bm u^{(1\ldots k)}; \bm u^{(k+1\ldots n)}) & := \veps^{(k)}_{-1}\La^{(k-1)}(v;\bm u^{(1\ldots k-1)}; \bm u^{(k\ldots n)}) \el & \qu\; + \veps^{(k)}_{+1} f_{q^{-2}}(v,\bm u^{(k)})\,\mu^{(k-1)}(v;\bm u^{(k+1)})\, \La^{(k-1)}(q^{-4}v;\bm u^{(1\ldots k-1)}; \bm u^{(k\ldots n)}) 
}
where $\La^{(1)}(v;\bm u^{(1)}; \bm u^{(2\ldots n)}) := \veps^{(1)}_{-1} f_q(v;\bm u^{(1)}) + \veps^{(1)}_{+1} f_{q^{-1}}(v,\bm u^{(1)})\,\mu^{(0)}(v;\bm u^{(2)})$.
Solving this recurrence relation yields the wanted result.
\end{proof}


\section{Algebraic Bethe Ansatz for $U_{q}(\mfso_{2n+2})$-symmetric spin chains} \label{sec:D}

In this section we study spectrum of $U_{q}(\mfso_{2n+2})$-symmetric spin chains with the \emph{full quantum space}~given~by
\equ{
L^{(n)} = L^{V} := (\C^{2n+2})^{\ot \ell} \qu\text{or}\qu L^{(n)} = L^{\pm S} := (V^{\pm(n)})^{\ot \ell} . \label{Ln-D}
}
Our approach will be very similar to that in Section \ref{sec:B}, thus most of the notation will carry through with minor adjustments only.


\subsection{Quantum spaces and monodromy matrices} \label{sec:QS-MM-D}

Choose $m_\pm,m_2,\dots,m_n\in\Z_{\ge0}$, the excitation, or magnon, numbers. 
For each $m_k$ assign an $m_k$-tuple $\bm u^{(k)} := (u^{(k)}_1,\dots,u^{(k)}_{m_k})$ of non-zero complex parameters, that will accommodate Bethe roots. We will write $u^{\pm}_i = u^{(\pm)}_i$ and say that $u^{\pm}_i$ are level-$1$ parameters. Accordingly, we set $m_1 := m_+ + m_-$ to be the number of level-1 excitations.
Then, for each $2\le k< n$, we introduce three $m_k$-tuples of labels, $\bm\ta = (\ta^k_1 , \ldots , \ta^k_{m_k})$, $\bm\tta = (\tta^k_1 , \ldots , \tta^k_{m_k})$, and $\bm a = (a^k_1 , \ldots , a^k_{m_k})$. 
For each label $\ta^k_i$ and $\tta^k_i$ we associate a copy of $V^{[+](k-1)}$ and $V^{-(k-1)}$, respectively, where $[+]=+/-$ if $k\!-\!1$ is odd/even. 
We then identify subspaces $W_{a^k_i} \subset V^{[+](k-1)}_{\ta^{k}_i} \ot V^{-(k-1)}_{\tta^{k}_i}$, isomorphic to $\C^{2k}$, in the following way. 
Let $\eta_{a^{k}_i} \in V^{[+](k-1)}_{\ta^{k}_i} \ot V^{-(k-1)}_{\tta^{k}_i}$ be a highest vector as per \eqref{fused-eta-D}. Then $W_{a^{k}_i} \cong U_{q}(\mf{so}_{2k})\cdot \eta_{a^{k}_i}$, as a vector space.

For each $2\le k < n$ we recurrently define the \emph{nested level-$k$ quantum space} $L^{(k)}$ in a similar way as we did in Section \ref{sec:QS-MM-B}, that is
\[
L^{(k)} := (L^{(k+1)} )^0 \ot W_{a^{k+1}_1} \ot \cdots \ot W_{a^{k+1}_{m_{k+1}}}
\]
where 
\[ 
(L^{(k+1)} )^0 := \{ \xi \in L^{(k+1)} : \ell^+_{i,k+2}[0]\cdot\xi = 0 \text{ for } {-(k\!+\!2)}\le i \le k\!+\!1 \} . 
\]
In particular, $(L^{(k+1)} )^0 \cong \C$ or $(V^{\pm(k)})^{\ot \ell}$ when $L^{(n)}=L^{V}$ or $L^{\pm S}$, respectively. Finally, we define the \emph{nested level-$1$ quantum space} to be
\equ{
L^{(1)} := (L^{(2)})^0 \ot V^{+(1)}_{\ta^{(2)}_1} \ot V^{-(1)}_{\tta^{(2)}_1} \ot \cdots \ot V^{+(1)}_{\ta^{(2)}_{m_2}} \ot V^{-(1)}_{\tta^{(2)}_{m_2}} . \label{nested-l1qs}
}

\enlargethispage{0.5em} 

We are now ready to introduce monodromy matrices. 
The diagonal ``twist'' matrix that we will~need~is 
\[
\mc{E}^{\pm(n)} := \sum_{\bm i} \veps^{(1)}_{i_1} \cdots \veps^{(n)}_{i_n}\, e^{(\eps)}_{i_1i_1} {}\hot{} e^{(1)}_{i_2i_2} {}\hot \cdots \hot{} e^{(1)}_{i_ni_n} \in \End(V^{\pm(n)}) 
\]
where $\eps= \pm/\mp$ if $(-1)^{n-1}i_2 \cdots i_n = +1/{-}1$.
We define the even and odd {\it level-$n$ monodromy matrices} with entries acting on the level-$n$ quantum space $L^{(n)}$ by
\equ{
T_{a}^{\pm(n)}(v) := \mc{E}^{\pm(n)}\,T^{\pm(n)}_{a1}(v) \cdots T^{\pm(n)}_{a\ell}(v)  \label{mono-D}
}
where $T^{\pm(n)}_{ai}(v) = R^{\pm(n)}_{ai}(v,\rho_i)$ or $R^{\pm+(n,n)}_{ai}(q^2v,\rho_i)$ or $R^{\pm-(n,n)}_{ai}(q^2v,\rho_i)$ when $L^{(n)} = L^V$ or $L^{+S}$ or $L^{-S}$, respectively.
Then, for each $1\le k <n$, we recurrently define the even and odd {\it nested level-$k$ monodromy matrices} with entries acting on the level-$k$ quantum space $L^{(k)}$ by
\ali{
T^{\pm(k)}_{a}(v; \bm u^{(k+1\dots n)}) &:= \frac{(f_q(v;\bm u^{(k+1)}))^{\frac{1\pm1}2}}{h^{\pm(k/2)}(v;\bm u^{(k+1)})} \, A^{\pm(k)}_a(v;\bm u^{(k+2\ldots n)}) \el & \qu\; \times \prod_{i=1}^{m_{k+1}} R^{\pm-(k,k)}_{a\tta^{k+1}_i}(q^2v, u^{(k+1)}_i)\, R^{\pm[+](k,k)}_{a\ta^{k+1}_i}(q^{2k}v, u^{(k+1)}_i) \el &  \overset{k\ge2}{\equiv} A^{\pm(k)}_a(v;\bm u^{(k+2\ldots n)}) \prod_{i=1}^{m_{k+1}} R^{\pm(k)}_{aa^{k+1}_i}(v, u^{(k+1)}_i) \,, \label{nmono-A-D}
}
\ali{
\wt T^{\mp(k)}_{a}(v; \bm u^{(k+1\dots n)}) &:= \frac{(f_q(q^{-2}v;\bm u^{(k+1)}))^{\frac{1\mp1}2}}{h^{\mp(k/2)}(q^{-2}v;\bm u^{(k+1)})} \, D^{\mp(k)}_a(v;\bm u^{(k+2\ldots n)}) \el & \qu\; \times \prod_{i=1}^{m_{k+1}} R^{\mp-(k,k)}_{a\tta^{k+1}_i}(v, u^{(k+1)}_i) \, R^{\mp[+](k,k)}_{a\ta^{k+1}_i}(q^{2k-2}v, u^{(k+1)}_i) \el &  \overset{k\ge2}{\equiv}  D^{\mp(k)}_a(v;\bm u^{(k+2\ldots n)}) \prod_{i=1}^{m_{k+1}} R^{\mp(k)}_{aa^{k+1}_i}(q^{-2} v, u^{(k+1)}_i) 
\,, \label{nmono-D-D}
}
where $[+] = +/-$ if $k$ is odd/even, and
\ali{
A^{\pm(k)}_a(v;\bm u^{(k+2\ldots n)}) & = \big[ T^{\pm(k+1)}_{a}(v; \bm u^{(k+2\dots n)}) \big]_{-1,-1} , \label{A-D} \\
D^{\mp(k)}_a(v;\bm u^{(k+2\ldots n)}) & = \big[ T^{\pm(k+1)}_{a}(v; \bm u^{(k+2\dots n)}) \big]_{+1,+1} , \label{D-D}
}
and $\overset{k\ge2}{\equiv}$ denotes equality of operators in the space $L^{(k)}$ when $k\ge2$, subject to a suitable identification of the subspaces $W_{a^{k+1}_i} \subset V^{[+](k)}_{\ta^{k+1}_i} \ot V^{-(k)}_{\tta^{k+1}_i}$ and copies of $\C^{2k+2}$, as per Lemma~\ref{L:fusion-D}. 
When~$k=1$, the expressions above simplify to those in (\ref{nmono-A2-1}--\ref{nmono-D2-4}) stated below because of the identity $R^{\pm\mp(1,1)}(u,v)=I^{\pm\mp(1,1)}$.

The nested monodromy matrices span the following nesting tree:
\[
\begin{tikzpicture}
\tikzset{grow'=right}
\tikzset{level 1/.style={level distance=2.5cm}}
\tikzset{level 2/.style={level distance=3.5cm}}
\tikzset{level 3/.style={level distance=2.7cm}}
\tikzset{level 4/.style={level distance=2.5cm}}
\Tree [.$T^{\pm(n)}_a(v)$ [.$T^{\pm(n-1)}_a(v;\bm u^{(n)})$ [.$T^{\pm(n-2)}_a(v;\bm u^{(n-1,n)})$ [.$\cdots$ [.$T^{\pm(1)}_a(v;\bm u^{(2\ldots n)})$ ] [.$\wt T^{\mp(1)}_a(v;\bm u^{(2\ldots n)})$ ] ] [.$\cdots$ ] ] [.$\wt T^{\mp(n-2)}_a(v;\bm u^{(n-1,n)})$ ] ][.$\wt T^{\mp(n-1)}_a(v;\bm u^{(n)})$ ] ]
\end{tikzpicture}
\]
By the same arguments as in the previous case, it will be sufficient to focus on the non-tilded monodromy matrices at each level of nesting.
In particular, we have the following equalities of operators \eqref{A-D} and \eqref{D-D} in the spaces $L^{(n-1)}$ and $L^{(k)}$ with $1\le k < n-1$, subject to the choice of the full quantum space $L^{(n)}$:
\[
\arraycolsep=5pt\def\arraystretch{1.5}
\begin{array}{llll}
& L^{V} & L^{+S} & L^{-S} \\ \hline
\!A^{\pm(n-1)}_a(v)/\veps^{(n)}_{-1} & \mc{E}^{\pm(n-1)}_a & T^{\pm(n-1)}_a(v) & T^{\pm(n-1)}_a(v) \\
\!D^{-(n-1)}_a(v)/\veps^{(n)}_{+1} & f_q(v; \bm \rho) \mc{E}^{-(n-1)}_a\!\! & f_q(v; \bm \rho)\, T^{-(n-1)}_a(\wt{v}) & T^{-(n-1)}_a(\wt{v}) \\
\!D^{+(n-1)}_a(v)/\veps^{(n)}_{+1} & f_q(v; \bm \rho) \mc{E}^{+(n-1)}_a\!\! & T^{-(n-1)}_a(\wt{v}) & f_q(v; \bm \rho)\, T^{-(n-1)}_a(\wt{v}) \\ 
\!A^{\pm(k)}_a(v;\bm u^{(k+2\ldots n)})/\wt\veps^{\,(k+1)}_{-1}\!\! & \mc{E}^{\pm(k)}_a & T^{\pm(k)}_a(v) & T^{\pm(k)}_a(v) \\
\!D^{-(k)}_a(v;\bm u^{(k+2\ldots n)})/\wt\veps^{\,(k+1)}_{+1}\!\! & f_q(v; \bm u^{(k+2)})\mc{E}^{-(k)}_a\!\! & g_q(v; \bm \rho; \bm u^{(k+2)}) T^{-(k)}_a(\wt{v})\!\! & f_q(v; \bm u^{(k+2)})\, T^{-(k)}_a(\wt{v}) \\
\!D^{+(k)}_a(v;\bm u^{(k+2\ldots n)})/\wt\veps^{\,(k+1)}_{+1}\!\! & f_q(v; \bm u^{(k+2)})\mc{E}^{+(k)}_a\!\! & f_q(v; \bm u^{(k+2)}) \, T^{+(k)}_a(\wt{v}) & g_q(v; \bm \rho; \bm u^{(k+2)}) T^{+(k)}_a(\wt{v})\! 
\end{array}
\]
Here $\wt\veps^{\,(k+1)}_{\mp1} = \veps^{(n)}_{-1} \cdots \veps^{(k+2)}_{-1} \veps^{(k+1)}_{\mp1}$, $g_q(v; \bm \rho; \bm u^{(k+2)}) = f_q(v; \bm \rho)\,f_q(v; \bm u^{(k+2)})$, $\wt v = q^{-2} v$, and the operators $T^{\pm(n-1)}_a(v)$ and $T^{\pm(k)}_a(v)$ are defined in the same way as $T^{\pm(n)}_a(v)$, viz.\ \eqref{mono-D}.
It is now easy to see that, for $\eps_a,\eps_b = \pm$, 
\ali{
& R^{\eps_a\eps_b(k,k)}_{ab}(v,w)\,T^{\eps_a(k)}_{a}(v;\bm u^{(k+1\dots n)})\,T^{\eps_b(k)}_{b}(w;\bm u^{(k+1\dots n)}) \el
& \qq \equiv T^{\eps_b(k)}_{b}(w;\bm u^{(k+1\dots n)})\,T^{\eps_a(k)}_{a}(v;\bm u^{(k+1\dots n)})\,R^{\eps_a\eps_b(k,k)}_{ab}(v,w) \,. \label{RTTk-D}
}
Thus entries of $T^{\pm(k)}_{a}(v;\bm u^{(k+1\dots n)})$ in the space $L^{(k)}$ satisfy the exchange relations given by Lemma~\ref{L:rels-D}. 
In~other words, operators $T^{+(k)}_{a}(v;\bm u^{(k+1\dots n)})$ and $T^{-(k)}_{a}(v;\bm u^{(k+1\dots n)})$ are even and odd monodromy matrices for a nested $U_{q}(\mf{so}_{2k+2})$-symmetric spin chain with the full quantum space $L^{(k)}$.


\subsection{Creation operators and Bethe vectors} \label{S:CO-BV-D}


We now introduce $m_k$-magnon creation operators. 
We will make use of the following notation:
\aln{
\mr{b}^\mp(v; \bm u^{(2\ldots n)}) & := \big[T^{\pm(1)}_a(v;\bm u^{(2\ldots n)}) \big]_{-1,+1} \,,  \\
B^{\mp(k-1)}_a(v; \bm u^{(k+1\ldots n)}) & := \big[T^{\pm(k)}_a(v;\bm u^{(k+1\ldots n)}) \big]_{-1,+1} \,.
}
We define the {\it level-$1$ creation operator} by
\[
\mr{B}^{(0)}(\bm u^{(1)}; \bm u^{(2\ldots n)}) := \prod_{i=m_+}^1 \mr{b}^+(u^{+}_i; \bm u^{(2\ldots n)}) \prod_{i=m_-}^1 \mr{b}^-(u^{-}_i; \bm u^{(2\ldots n)}) .
\]
For each $2\le k \le n$ we define the {\it level-$k$ creation operator} by
\[
\mr{B}^{(k-1)}(\bm u^{(k)}; \bm u^{(k+1\ldots n)}) := 
\prod_{i=m_k}^1 \mr{b}^{[+]-(k-1,k-1)}_{\ta^k_{i}\tta^k_{i}} (u^{(k)}_{i}; \bm u^{(k+1\ldots n)}) 
\]
where 
\equ{
\mr{b}^{[+]-(k-1,k-1)}_{\ta^k_i \tta^k_i} (u^{(k)}_i; \bm u^{(k+1\ldots n)}) := \chi^{-(k-1)}_{\ta^k_i \tta^k_i} \Big( B^{-(k-1)}_{a^{k}_i}(u^{(k)}_i;\bm u^{(k+1\dots n)}) \Big)  \label{beta-D}
}
with $\chi^{-(k-1)}_{\ta^k_i \tta^k_i} : \Hom(V^{-(n-1)}_{a^k_i},V^{+(n-1)}_{a^k_i})\to (V^{[+](k-1)}_{\ta^k_i})^*\ot (V^{-(k-1)}_{\tta^k_i})^*$ defined via \eqref{chi^pm-map}.

We define the nested vacuum vector $\eta$ and the Bethe vectors $\BV{k}$ with $1\le k \le n$ in the same way as before, that is, by \eqref{vac}--\eqref{BVk}, except that $\eta_{a^k_i}$ with $2< k \le n$ are now given by \eqref{fused-eta-D} and $\eta_{a^2_i} = e^{(+)}_{-1} \ot e^{(-)}_{-1}$.
We set $\mf{S}_{\bm m_{1\dots k}} := \mf{S}_{m_+}\times \mf{S}_{m_-}\times \mf{S}_{m_2}\times \cdots \times \mf{S}_{m_{k}}$ and define its action on $\BV{k}$ in the same way as we did before.
The proof of the Lemma below is analogous to that of Lemma \ref{L:BV-symm-B}.

\begin{lemma} \label{L:BV-symm-D}
The Bethe vector $\BV{k}$ is invariant under the action of $\mf{S}_{\bm m_{1\dots k}}$.
\end{lemma}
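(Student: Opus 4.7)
The plan is to mirror the proof of Lemma \ref{L:BV-symm-B}, adapted to account for the two types of level-$1$ creation operators $\mr{b}^\pm$ and the chirality labels appearing in the spinor-spinor $R$-matrices. I would treat each factor of $\mf{S}_{\bm m_{1\dots k}} = \mf{S}_{m_+}\times\mf{S}_{m_-}\times \mf{S}_{m_2}\times\cdots\times \mf{S}_{m_k}$ separately. For the level-$1$ factors $\mf{S}_{m_\pm}$, the scalar creation operators $\mr{b}^\pm$ commute among themselves, as follows from the $n=0$ instance of \eqref{BB-D}: the level-$0$ auxiliary spaces are one-dimensional and all $R^{\eps_1\eps_2(0,0)}(v,u)$ reduce to $1$.

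For a level-$l$ factor with $2\le l\le k$, I would start from \eqref{BB-D} with $\eps_1=\eps_2=-$,
\[
R^{++(l-1,l-1)}_{ab}(v,u)\,B^{-(l-1)}_a(v)\,B^{-(l-1)}_b(u) = B^{-(l-1)}_b(u)\,B^{-(l-1)}_a(v)\,R^{--(l-1,l-1)}_{ab}(v,u),
\]
and translate it into a relation for the creation operators \eqref{beta-D} by applying $\chi^{-(l-1)}$ in each auxiliary space and invoking the intertwining identity \eqref{XYZ-pm}. The outcome, analogous to the formula displayed in the proof of Lemma \ref{L:BV-symm-B}, is that swapping two adjacent operators $\mr{b}^{[+]-(l-1,l-1)}_{\ta^l_i\tta^l_i}$ produces a trailing factor of the form $\hat R^{[+][+](l-1,l-1)}_{\ta^l_i\ta^l_{i+1}}(v,u)\,\check R^{--(l-1,l-1)}_{\tta^l_i\tta^l_{i+1}}(u,v)$ acting on the auxiliary spaces attached to the $i$-th and $(i{+}1)$-th excitations, with $\hat R := R\,P$ and $\check R := P\,R$.

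The crux of the argument, which I expect to be the main obstacle, is verifying that this combined $R$-factor stabilises the nested vacuum vector $\eta$. By \eqref{fused-eta-D}, the factors $\eta_{a^l_j}$ of $\eta$ are the fused highest vectors of the fundamental $U_q(\mfso_{2l})$-modules $W_{a^l_j}\cong\C^{2l}$ embedded in $V^{[+](l-1)}_{\ta^l_j}\ot V^{-(l-1)}_{\tta^l_j}$. Using the recurrences of Proposition \ref{P:RD} together with the projector data of Lemma \ref{L:fusion-D}, one checks that $\hat R^{[+][+]}_{\ta^l_i\ta^l_{i+1}}(v,u)\,\check R^{--}_{\tta^l_i\tta^l_{i+1}}(u,v)\cdot\eta=\eta$; since both spinor-spinor $R$-matrices are $U_q(\mfso_{2l})$-equivariant and each restricts to an operator on $W_{a^l_i}\ot W_{a^l_{i+1}}\cong\C^{2l}\ot\C^{2l}$, this ultimately reduces to the analogous identity at the level of the vector-vector $R$-matrix, exactly as in the proof of Lemma \ref{L:BV-symm-B}. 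Iterating over the factors of $\mf{S}_{\bm m_{1\dots k}}$ yields the claim.
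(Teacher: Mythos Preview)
Your overall strategy coincides with the paper's: the paper simply states that the proof is analogous to that of Lemma~\ref{L:BV-symm-B}, and your outline---use the ``BB'' relation \eqref{BB-D}, apply $\chi^{-(l-1)}$ via \eqref{XYZ-pm} to pass to the creation operators $\mr{b}^{[+]-}$, obtain a trailing factor $\hat R^{[+][+]}_{\ta^l_i\ta^l_{i+1}}\,\check R^{--}_{\tta^l_i\tta^l_{i+1}}$, and check it fixes $\eta$---is exactly that analogy. A small imprecision: for the level-$1$ commutativity of $\mr b^\pm$ you should invoke \eqref{RTTk-D} at $k=1$ (six-vertex $R^{\pm\pm(1,1)}$ for same-chirality and $R^{\pm\mp(1,1)}=I^{\pm\mp(1,1)}$ for mixed-chirality) rather than an ``$n=0$ instance'' of \eqref{BB-D}, which is not defined.

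The one substantive problem is your final heuristic. You claim that $\hat R^{[+][+]}_{\ta^l_i\ta^l_{i+1}}$ and $\check R^{--}_{\tta^l_i\tta^l_{i+1}}$ ``each restricts to an operator on $W_{a^l_i}\ot W_{a^l_{i+1}}$'' and that the verification ``reduces to the analogous identity at the level of the vector-vector $R$-matrix, exactly as in the proof of Lemma~\ref{L:BV-symm-B}''. Neither statement is correct. The operator $\hat R^{[+][+]}_{\ta^l_i\ta^l_{i+1}}$ acts on $V^{[+]}_{\ta^l_i}\ot V^{[+]}_{\ta^l_{i+1}}$ and $\check R^{--}_{\tta^l_i\tta^l_{i+1}}$ on $V^{-}_{\tta^l_i}\ot V^{-}_{\tta^l_{i+1}}$, whereas $W_{a^l_j}\subset V^{[+]}_{\ta^l_j}\ot V^{-}_{\tta^l_j}$ sits diagonally in the four-fold product; neither factor, nor their product by equivariance alone, restricts to $W_{a^l_i}\ot W_{a^l_{i+1}}$. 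Moreover, the proof of Lemma~\ref{L:BV-symm-B} does not pass through the vector-vector $R$-matrix at all: it simply asserts that one verifies $\hat R\,\check R\cdot\eta=\eta$ directly, using the explicit form of $\eta$ and of the spinor-spinor $R$-matrices. The same direct check (on the highest vectors $\eta_{a^l_i}\ot\eta_{a^l_{i+1}}$ of \eqref{fused-eta-D}, via the recursions in Proposition~\ref{P:RD}) is what is needed here; drop the fusion/equivariance rationale.
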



\subsection{Transfer matrices, their eigenvalues, and Bethe equations} \label{sec:BA-D}

We begin with the first non-trivial case, the $U_q(\mf{so}_4)$-symmetric spin chain. In~this case the monodromy matrices $T^{+(1)}_a(v)$ and $T^{-(1)}_a(w)$ commute for any values of $v$ and~$w$. 
Thus the spin chain effectively factorises into two XXZ spin chains with the even and odd transfer matrices given by
\[
\tau^{\pm(1)}(v) := \tr_a T^{\pm(1)}_a(v) .
\]
When $L^{(1)}=L^V$, the vacuum vector is $\eta = e_{-2} \ot \cdots \ot e_{-2}$. 
It is a unique joined highest vector of both $T^{+(1)}_a(v)$ and $T^{-(1)}_a(v)$. 
The operator $\mr{B}^{(0)}(\bm u^{(1)})$ acting on $\eta$ creates $m_+$ even and $m_-$ odd excitations. 
When~$L^{(1)}=L^{+S}$, the vacuum vector is $\eta = e^{(+)}_{-1} \ot \cdots \ot e^{(+)}_{-1}$. 
It is now a highest vector of $T^{+(1)}_a(v)$ and a singular vector of $T^{-(1)}_a(v)$, i.e.\ $\eta$ is annihilated by the off-diagonal matrix entries of $T^{-(1)}_a(v)$. 
Thus the operator $\mr{B}^{(0)}(\bm u^{(1)})$ now creates $m_+$ even excitations only. 
Lastly, when $L^{(1)}=L^{-S}$, the vacuum vector is $\eta = e^{(-)}_{-1} \ot \cdots \ot e^{(-)}_{-1}$. 
It is a highest vector of $T^{-(1)}_a(v)$ and a singular vector of $T^{+(1)}_a(v)$. 
Thus the operator $\mr{B}^{(0)}(\bm u^{(1)})$ creates $m_-$ odd excitations only. 

The Theorem below follows by the same arguments as Theorem \ref{T:B1}.

\begin{thrm} \label{T:D1} 
The Bethe vector $\Phi^{(1)}(\bm u^{(1)})$ is an eigenvector of $\tau^{\pm(1)}(v)$ with the eigenvalue 
\equ{ 
\La^{\pm(1)}(v;\bm u^{\pm}) := \veps^{(1)}_{-1}\, f_q(v;\bm u^{\pm}) + \veps^{(1)}_{+1}\, f_{q^{-1}}(v;\bm u^{\pm}) \, f_{q^{-1}}(v;\bm \rho) \label{EV-D1} 
}
provided 
\equ{
\Res{v\to u^{\pm}_j} \La^{\pm(1)}(v;\bm u^{\pm}) = 0 \qu\text{for}\qu 1 \le j \le m_\pm. \label{BE-D1}
}
\end{thrm}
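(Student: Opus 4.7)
The plan is to mirror the proof of Theorem \ref{T:B1} sector-by-sector, exploiting the structural simplification specific to $n=1$: since $R^{+-(1,1)}(u,v) = I^{+-(1,1)}$, the RTT relation \eqref{RTT-D} with $\eps_1 = +,\,\eps_2 = -$ collapses to $T^{+(1)}_1(u)\,T^{-(1)}_2(v) = T^{-(1)}_2(v)\,T^{+(1)}_1(u)$. Hence every matrix entry of $T^{+(1)}$ commutes with every matrix entry of $T^{-(1)}$, and the $U_q(\mfso_4)$ chain factorises into two independent XXZ-type subsystems that can be treated separately.

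I would decompose $T^{\pm(1)}_a(v)$ into matrix entries via \eqref{TD}, so that $\tau^{\pm(1)}(v) = \mr{a}^\pm(v) + \mr{d}^\mp(v)$, where $\mr{a}^\pm = A^{\pm(0)}$, $\mr{b}^\mp = B^{\mp(0)}$, $\mr{d}^\mp = D^{\mp(0)}$. Specialising Lemma \ref{L:rels-D} to $n=0$, all $R^{\eps_1\eps_2(0,0)}$ collapse to trivial scalars and the ``AB-D'', ``DB-D'' and ``BB-D'' relations reduce to the same XXZ-type exchange relations used in the proof of Theorem \ref{T:B1}, namely
\[
\mr{a}^\pm(v)\,\mr{b}^\mp(u) = f_q(v,u)\,\mr{b}^\mp(u)\,\mr{a}^\pm(v) - \tfrac{v/u}{v-u}\,\Res{w\to u}\big(f_q(w,u)\,\mr{b}^\mp(v)\,\mr{a}^\pm(w)\big),
\]
an analogous relation for $\mr{d}^\mp(v)\,\mr{b}^\mp(u)$ with $f_q$ replaced by $f_{q^{-1}}$, and $\mr{b}^\mp(v)\,\mr{b}^\mp(u) = \mr{b}^\mp(u)\,\mr{b}^\mp(v)$. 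The cross-sector ``AB-Dx'' and ``DB-Dx'' relations reduce to pure commutativity.

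Next, I would act with $\tau^{\pm(1)}(v)$ on $\Phi^{(1)}(\bm u^{(1)}) = \mr{B}^{(0)}(\bm u^{(1)})\cdot\eta$. The creation operators of the opposite chirality sector commute through $\tau^{\pm(1)}$ trivially by the decoupling; those of the same sector are processed using the XXZ-type AB and DB relations above. Combined with Lemma \ref{L:BV-symm-D} (permutation invariance of the Bethe vector), used to symmetrise the ``unwanted'' contributions, the standard ABA bookkeeping from the proof of Theorem \ref{T:B1} produces $\La^{\pm(1)}(v;\bm u^\pm)\,\Phi^{(1)}(\bm u^{(1)})$ plus unwanted terms each proportional to $\Res{w\to u^\pm_j}\La^{\pm(1)}(w;\bm u^\pm)$, and these vanish precisely under the Bethe equations \eqref{BE-D1}. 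The scalar $\La^{\pm(1)}(v;\bm u^\pm)$ assembles from the vacuum eigenvalues of $\mr{a}^\pm(v)$ and $\mr{d}^\mp(v)$ on $\eta$, which are read off directly from the diagonal matrix entries of $R^{\pm(1)}$ (Lemma \ref{L:R-SV-D2}) in the case $L^{(1)} = L^V$, or of $R^{\pm\pm(1,1)}$ and $R^{\pm\mp(1,1)}$ from Section \ref{sec:RSS} in the cases $L^{(1)} = L^{\pm S}$.

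The only real checkpoint is verifying that all three choices $L^V$, $L^{+S}$, $L^{-S}$ assemble into the single closed form \eqref{EV-D1}: in the two $L^{\pm S}$ cases one of $\mr{b}^+,\mr{b}^-$ annihilates $\eta$ (as recorded in the paragraph just before the theorem), so the corresponding magnon count effectively drops out and the surviving sector delivers the XXZ-type eigenvalue with the appropriate vacuum normalisation. There is no substantive obstacle: the identity $R^{+-(1,1)}=I$ reduces the argument to a verbatim re-run of the proof of Theorem \ref{T:B1} applied to each of the two chiral sectors.
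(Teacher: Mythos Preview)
Your proposal is correct and matches the paper's approach: the paper does not give a separate proof but simply states that the theorem ``follows by the same arguments as Theorem \ref{T:B1}'', and you have accurately reconstructed what those arguments are in the $\mfso_4$ setting, including the key decoupling observation $R^{\pm\mp(1,1)}=I^{\pm\mp(1,1)}$ already recorded in the paragraph preceding the theorem.
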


The explicit form of the Bethe equations \eqref{BE-D1} is 
\[
\prod_{i=1}^{m_\pm} \frac{q u^{\pm}_j - q^{-1} u^{\pm}_i}{q^{-1} u^{\pm}_j - q u^{\pm}_i} = - \veps^{(1)} \prod_{i=1}^\ell  \frac{q u^\pm_j - q^{-1} \rho_i}{u^\pm_j - \rho_i} \,.
\]
We note that these are two independent sets of Bethe equations, for $\bm u^{+}$ and for $\bm u^{-}$, and the excitation numbers $m_+$ and $m_-$ depend on the choice of $L^{(1)}$.


We now turn our focus to the $U_q(\mf{so}_6)$-symmetric spin chain. This chain can be viewed as a generalised ($U_q(\mfgl_4)$-symmetric) XXZ spin chain.
We begin by addressing the corresponding nested $U_q(\mf{so}_4)$-symmetric spin chain. 
The nested level-1 quantum space is given by \eqref{nested-l1qs}. The nested vacuum vector takes the form
\[
\eta = \eta_1 \ot \cdots \ot \eta_\ell \ot e^{(+)}_{-1} \ot  e^{(-)}_{-1} \ot \cdots  \ot  e^{(+)}_{-1} \ot  e^{(-)}_{-1} .
\]
The nested level-1 monodromy matrices that we will need are (cf.\ \eqref{nmono-A-D} and \eqref{nmono-D-D}):
\ali{
T^{+(1)}_a(v; \bm u^{(2)}) & = A^{+(1)}_a(v) \prod_{i=1}^{m_2} R^{++(1,1)}_{a\ta^2_{i}}(q^2v,u^{(2)}_{i}), \label{nmono-A2-1}
\\
T^{-(1)}_a(v; \bm u^{(2)}) & = A^{-(1)}_a(v) \prod_{i=1}^{m_2} R^{--(1,1)}_{a\tta^2_{i}}(q^2v,u^{(2)}_{i}), \label{nmono-D2-2}
\\
\wt T^{+(1)}_a(v; \bm u^{(2)}) & = D^{+(1)}_a(v) \prod_{i=1}^{m_2} R^{++(1,1)}_{a\ta^2_{i}}(v,u^{(2)}_{i}) ,  \label{nmono-A2-3}
\\
\wt T^{-(1)}_a(v; \bm u^{(2)}) & = D^{-(1)}_a(v) \prod_{i=1}^{m_2} R^{--(1,1)}_{a\tta^2_{i}}(v,u^{(2)}_{i}), \label{nmono-D2-4}
}
where $A^{\pm(1)}_a(v) = [T^{\pm(2)}_a(v)]_{-1,-1}$ and $D^{\mp(1)}_a(v) = [T^{\pm(2)}_a(v)]_{+1,+1}$.
The corresponding nested transfer matrices are
\[
\tau^{\pm(1)}(v; \bm u^{(2)}) = \tr_a T^{\pm(1)}_a(v; \bm u^{(2)}) ,\qq
\wt\tau^{\,\pm(1)}(v; \bm u^{(2)}) = \tr_a \wt T^{\pm(1)}_a(v; \bm u^{(2)}) .
\]

Let $\equiv$ denote equality of operators in the nested space $L^{(1)}$. Then
\equ{
\wt \tau^{\,\pm(1)}(v; \bm u^{(2)}) \equiv \veps^{(1)}\,\mu^{\pm(1)}(v)\, \tau^{\pm(1)}(q^{-2} v; \bm u^{(2)}) . \label{t=t-D2}
}
We also have that
\[
\mr{a}^\pm(v;\bm u^{(2)})\cdot \eta = \eta , \qq
\mr{d}^\pm(v;\bm u^{(2)})\cdot \eta = f_q(v;\bm u^{(2)})\,\la_\pm(v)\,\eta .
\]
Here $\mu^{\pm(1)}(v)$ and $\la_{\pm}(v)$ are given by
\[
\arraycolsep=8pt\def\arraystretch{1.5}
\begin{array}{llll}
& L^{V} & L^{+S} & L^{-S} \\ \hline
\mu^{+(1)}(v) & f_q(v;\bm\rho) & 1 & f_q(v;\bm\rho) \\
\mu^{-(1)}(v) & f_q(v;\bm\rho) & f_q(v;\bm\rho) & 1 \\
\la_{+}(v) & 1 & f_q(v;\bm\rho) & 1 \\
\la_{-}(v) & 1 & 1 & f_q(v;\bm\rho) 
\end{array}
\]

The Proposition below follows by the standard arguments.

\begin{prop} \label{P:D2} 
The nested Bethe vector $\Phi^{(1)}(\bm u^{\pm};\bm u^{(2)})$ is an eigenvector of $\tau^{\pm(1)}(v;\bm u^{(2)})$ 
with the eigenvalue
\equ{
\La^{\pm(1)}(v;\bm u^{\pm}; \bm u^{(2)}) := \veps^{(1)}_{-1} \, f_q(v; \bm u^{\pm}) + \veps^{(1)}_{+1} \, f_{q^{-1}}(v; \bm u^{\pm}) \, f_q(v; \bm u^{(2)}) \, \la_\pm(v)  \label{EV-D2-L1}
}
provided 
\equ{
\Res{v\to u^{\pm}_j} \La^{\pm(1)}(v;\bm u^{\pm}; \bm u^{(2)}) = 0 \qu\text{for}\qu 1\le j\le m^\pm.
}
\end{prop}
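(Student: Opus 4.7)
The argument follows line-by-line the template of Theorem \ref{T:B1}. The crucial simplification at the nested $U_q(\mf{so}_4)$ level is that $R^{\pm\mp(1,1)}(u,v) = I^{\pm\mp(1,1)}$, so the RTT relation \eqref{RTTk-D} at $k=1$ with $\eps_1 = \eps_2$ is the only nontrivial piece and it has the XXZ-type $R^{\pm\pm(1,1)}$-form of \eqref{R1++}. Consequently, decomposing $T^{\pm(1)}(v;\bm u^{(2)})$ into its scalar entries $\mr{a}^\pm, \mr{b}^\mp, \mr{c}^\pm, \mr{d}^\pm$ as in \eqref{TD}, the $n=1$ specialisations of Lemma \ref{L:rels-D}'s ``BB'', ``AB'' and ``DB'' relations reduce to the standard scalar XXZ-type relations of exactly the form used in the proof of Theorem \ref{T:B1}, the $q$-functions being unaltered by the chirality labels since the non-trivial chirality sector has $R$-matrix $R^{\pm\pm(1,1)}$ identical in structure to the $U_{q^2}(\mf{so}_3)$ matrix \eqref{R1}.

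With these exchange relations in hand, I would apply $\tau^{\pm(1)}(v;\bm u^{(2)}) = \mr{a}^\pm(v;\bm u^{(2)}) + \mr{d}^\pm(v;\bm u^{(2)})$ to the Bethe vector $\Phi^{(1)}(\bm u^\pm;\bm u^{(2)}) = \prod_{i=m_\pm}^{1} \mr{b}^\pm(u^\pm_i;\bm u^{(2)}) \cdot \eta$ and commute the diagonal entries through the string of creation operators. Each commutation contributes a ``wanted'' scalar factor ($f_q(v,u^\pm_i)$ from the $\mr{a}$-branch and $f_{q^{-1}}(v,u^\pm_i)$ from the $\mr{d}$-branch) plus a residue-type ``unwanted'' term in which the spectral parameter $v$ and a Bethe root $u^\pm_i$ are swapped. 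Using the vacuum action $\mr{a}^\pm \cdot \eta = \eta$ and $\mr{d}^\pm \cdot \eta = f_q(v;\bm u^{(2)})\,\la_\pm(v)\,\eta$ quoted just above the Proposition, together with the twist eigenvalues $\veps^{(1)}_{\mp1}$ from $\mc{E}^{\pm(1)}$, the wanted contributions combine into exactly $\La^{\pm(1)}(v;\bm u^\pm;\bm u^{(2)})$ of \eqref{EV-D2-L1}.

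The unwanted terms are then collected and symmetrised via Lemma \ref{L:BV-symm-D}, exactly as in the proof of Theorem \ref{T:B1}: they assemble into $-\sum_{j=1}^{m_\pm} \frac{v/u^\pm_j}{v-u^\pm_j}\,\mr{B}^{(0)}(\bm u^\pm_{\si^{(\pm)}_j,\,u^\pm_j\to v};\bm u^{(2)}) \cdot \eta$ multiplied by $\Res{w\to u^\pm_j} \La^{\pm(1)}(w;\bm u^\pm;\bm u^{(2)})$, and these vanish precisely under the Bethe equations. The only point requiring care beyond direct mimicry of Theorem \ref{T:B1} is the vacuum action of $\mr{d}^\pm$: the factor $f_q(v;\bm u^{(2)})$ originates from the product of $R^{\pm\pm(1,1)}$-matrices appearing in \eqref{nmono-A2-3}--\eqref{nmono-D2-4}, while $\la_\pm(v)$ records the action of the $D$-entry of the level-$2$ monodromy on the physical vacuum $\eta_1 \ot \cdots \ot \eta_\ell$, tabulated in the excerpt just above the Proposition.
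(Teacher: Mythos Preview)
Your approach matches the paper's, which simply states that the Proposition ``follows by the standard arguments'' (i.e., those of Theorem \ref{T:B1}); the reduction to a scalar XXZ computation via $R^{\pm\mp(1,1)}=I^{\pm\mp(1,1)}$ is exactly the point.

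There is, however, a bookkeeping slip in your write-up. The level-$1$ Bethe vector carries \emph{both} creation strings, $\mr{B}^{(0)}(\bm u^{(1)};\bm u^{(2)}) = \prod_i \mr{b}^{+}(u^+_i)\prod_i \mr{b}^{-}(u^-_i)$, not just one. More to the point, by your own decomposition $\mr{b}^{\mp}=[T^{\pm(1)}]_{-1,+1}$, so the $B$-operator lying in the \emph{same} RTT algebra as $\mr{a}^\pm,\mr{d}^\pm$ is $\mr{b}^\mp$, not $\mr{b}^\pm$. The string $\prod_i \mr{b}^\pm(u^\pm_i)$ that you propose to commute $\mr{a}^\pm,\mr{d}^\pm$ through actually commutes \emph{trivially} with them (this is precisely the $R^{\pm\mp(1,1)}=I$ fact you cite), so no $f_{q^{\pm1}}$ factors would be produced. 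The nontrivial XXZ exchange is between $\mr{a}^\pm,\mr{d}^\pm$ and the same-chirality $\mr{b}^\mp$; the opposite-chirality string is a spectator. Once this labelling is straightened out, your argument goes through verbatim.
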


We are now ready to address the full $U_q(\mf{so}_6)$-symmetric spin chain. We define its transfer matrices by
\[
\tau^{\pm(2)}(v) := \tr_a T^{\pm(2)}_a(v) .
\]
The Theorem below is the first main result of this section.

\begin{thrm} \label{T:D2}
The Bethe vector $\Phi^{(2)}(\bm u^{(1,2)})$ is an eigenvector of $\tau^{\pm(2)}(v)$ with the eigenvalue 
\ali{
\La^{\pm(2)}(v;\bm u^{(1,2)}) & := \veps^{(2)}_{-1} \Big( \veps^{(1)}_{-1}\, f_q(v; \bm u^{\pm})  + \veps^{(1)}_{+1} f_q(v; \bm u^{(2)}) \, f_{q^{-1}}(v; \bm u^{\pm}) \, \la_\pm(v) \Big) \el 
& \qu\; + \veps^{(2)}_{+1} \, \mu^{\mp(1)}(v) \, \Big(  \veps^{(1)}_{-1}\,  f_{q^{-1}}(v;\bm u^{(2)})\, f_q(q^{-2}v; \bm u^{\mp}) \el & \hspace{4.75cm} + \veps^{(1)}_{+1} \, f_{q^{-1}}(q^{-2}v; \bm u^{\mp}) \, \la_\mp(q^{-2}v) \Big) \label{EV-D2}
}
provided
\equ{
\Res{v\to u^{(k)}_j} \La^{\pm(2)}(v;\bm u^{(1,2)}) = 0 \qu\text{for}\qu 1\le j\le m_k, \; k=1,2 . \label{BE-D2}
}
\end{thrm}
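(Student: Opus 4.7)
The plan is to adapt the proof of Theorem~\ref{T:B} to the D-type setting, carrying chirality labels through every step. Writing $T^{\pm(2)}(v)$ in its nested form \eqref{TD} and taking the auxiliary trace gives
\[
\tau^{\pm(2)}(v) = \tr_a A^{\pm(1)}_a(v) + \tr_a D^{\mp(1)}_a(v),
\]
so the action of $\tau^{\pm(2)}(v)$ on $\Phi^{(2)}(\bm u^{(1,2)}) = \mr{B}^{(1)}(\bm u^{(2)})\,\Phi^{(1)}(\bm u^{(1)};\bm u^{(2)})$ reduces to commuting $A^{\pm(1)}_a(v)$ and $D^{\mp(1)}_a(v)$ through the $B^{-(1)}$-type factors \eqref{beta-D} assembled in $\mr{B}^{(1)}(\bm u^{(2)})$, and then evaluating a nested transfer matrix on the nested level-$1$ Bethe vector.

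I would first treat $\tau^{+(2)}$, where the needed exchanges $A^{+(1)}B^{-(1)}$ and $D^{-(1)}B^{-(1)}$ are governed by the \emph{clean} relations \eqref{AB-D} and \eqref{DB-D}. Following the B-type template, one applies the crossing identities \eqref{RD-cross-1}--\eqref{RD-cross-2} to the right-hand-side $R$-matrices and then uses the $\chi^{-}$-map identity \eqref{XYZ-pm} with $Y \to B^{-(1)}$ to convert $B^{-(1)}$ into the creation operator $\mr{b}^{+-(1,1)}_{\ta^2_i\tta^2_i}$; the pair of $R$-matrices flanking $A^{+(1)}$ (resp.\ $D^{-(1)}$) is then exactly what is needed to reassemble a nested monodromy matrix via \eqref{nmono-A-D}--\eqref{nmono-D-D}. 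Introducing a barred nested transfer matrix and an ordering-correcting operator $X^{\pm(1)}$ as in the proof of Theorem~\ref{T:B}, iterating past all $m_2$ factors in $\mr{B}^{(1)}(\bm u^{(2)})$, and collapsing the unwanted terms into a single residue at $v \to u^{(2)}_j$ by Lemma~\ref{L:BV-symm-D}, the wanted part assembles into a combination of $\tau^{+(1)}(v;\bm u^{(2)})$ and $\wt\tau^{-(1)}(v;\bm u^{(2)})$ acting on $\Phi^{(1)}(\bm u^{(1)};\bm u^{(2)})$. Rewriting $\wt\tau^{-(1)}$ via \eqref{t=t-D2} and applying Proposition~\ref{P:D2} in each chirality sector (using the commutativity of $\tau^{\pm(1)}$ with $\mr{b}^{\mp}$ to restrict attention to the relevant $\bm u^{\pm}$ subset of $\bm u^{(1)}$) then yields \eqref{EV-D2} for the $+$ case. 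The level-$2$ Bethe equations in \eqref{BE-D2} come from the vanishing of the residue, and those at level~$1$ are inherited from Proposition~\ref{P:D2}.

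The $\tau^{-(2)}$ case requires the crossed exchange relations \eqref{AB-Dx} and \eqref{DB-Dx}, since $A^{-(1)}B^{-(1)}$ and $D^{+(1)}B^{-(1)}$ are not covered by \eqref{AB-D}--\eqref{DB-D}. These relations carry additional terms proportional to $B^{+(1)}A^{+(1)}$ and $B^{+(1)}D^{-(1)}$ dressed by $U^{\pm\pm(1,1)}$-factors. The new task here is to show that, after iterating through all $m_2$ factors in $\mr{B}^{(1)}(\bm u^{(2)})$ and symmetrising via Lemma~\ref{L:BV-symm-D}, these extra contributions combine into the same residue-at-$v \to u^{(2)}_j$ structure and so impose no constraints beyond \eqref{BE-D2}. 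The wanted part then reassembles into $\tau^{-(1)}$ and $\wt\tau^{+(1)}$ contributions on $\Phi^{(1)}(\bm u^{(1)};\bm u^{(2)})$, and Proposition~\ref{P:D2} delivers the $-$ case of \eqref{EV-D2}.

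The hard part is precisely this last point: tracking the extra $U^{\pm\pm}$-contributions from \eqref{AB-Dx}--\eqref{DB-Dx} through the iteration and verifying that, after $\mf{S}_{m_2}$-symmetrisation, they fold neatly into the same residue that is cancelled by the level-$2$ Bethe equations as in the $+$ case. In parallel, the bookkeeping of chirality labels must be handled with care: the diagonal blocks of $T^{\pm(2)}$ carry \emph{opposite} chirality, which is the origin of the chirality-mixed two-line structure of \eqref{EV-D2} and of the relative $v \to q^{-2}v$ shift generated by \eqref{t=t-D2}.
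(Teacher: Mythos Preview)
Your treatment of $\tau^{+(2)}(v)$ matches the paper's approach closely (though the barred transfer matrix and ordering operator $X^{\pm(1)}$ are unnecessary at this rank: at level $k=1$ the $R^{++(1,1)}$ factors act on the $\ta^2_i$ spaces and the $R^{--(1,1)}$ factors on the $\tta^2_i$ spaces, so no reordering arises).

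For $\tau^{-(2)}(v)$, however, you are proposing to do much more work than the paper. You correctly identify that $A^{-(1)}B^{-(1)}$ and $D^{+(1)}B^{-(1)}$ are governed by \eqref{AB-Dx}--\eqref{DB-Dx}, and that these carry extra $U^{\pm\pm}$-type terms whose chirality-flipping $B^{+(1)}$ factors do not have the standard residue structure. You then set yourself the task of tracking these through the iteration and showing they collapse to the same residues cancelled by \eqref{BE-D2}. The paper bypasses this entirely: once the $\tau^{+(2)}$ analysis establishes that $\Phi^{(2)}(\bm u^{(1,2)})$ is an eigenvector of $\tau^{+(2)}(v)$ under the Bethe equations, the commutativity $[\tau^{+(2)}(v),\tau^{-(2)}(w)]=0$ (a consequence of \eqref{RTT-D}) guarantees that $\Phi^{(2)}(\bm u^{(1,2)})$ is automatically an eigenvector of $\tau^{-(2)}(v)$ as well. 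Hence for $\tau^{-(2)}$ one only needs the \emph{wanted} parts of \eqref{AB-Dx}--\eqref{DB-Dx} (which, since $R^{\pm\mp(1,1)}=I^{\pm\mp(1,1)}$, are particularly clean here) to read off the eigenvalue; the unwanted terms may be discarded without analysis. Your direct route is not wrong in principle, but what you flag as ``the hard part'' is precisely what the commutativity argument renders unnecessary.
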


The explicit form of the Bethe equations \eqref{BE-D2} is
\gat{
\prod_{i=1}^{m_\pm}\frac{q u^{\pm}_j - q^{-1} u^\pm_i}{q^{-1} u^{\pm}_j - q u^\pm_i}
\prod_{i=1}^{m_2}\frac{u^{\pm}_j - u^{(2)}_i}{q u^{\pm}_j - q^{-1} u^{(2)}_i}
=
- \veps^{(1)} \la_\pm(u^{\pm}_j) \,, \label{BE-D2-a}
\\
\prod_{i=1}^{m_+}\frac{q^{-1} u^{(2)}_j - q\,u^+_i}{u^{(2)}_j - u^+_i}
\prod_{i=1}^{m_-}\frac{q^{-1} u^{(2)}_j - q\,u^-_i}{u^{(2)}_j - u^-_i}
\prod_{i=1}^{m_2}\frac{q\,u^{(2)}_j - q^{-1}u^{(2)}_i}{q^{-1}u^{(2)}_j - q\,u^{(2)}_i} 
=
- \frac{\veps^{(2)}}{\veps^{(1)}} \, \la_2(u^{(2)}_j) \,, \label{BE-D2-b}
}
where $\la_2$ is given by $\la_2(v) = f_q(v;\bm\rho)$ or $1$ when $L^{(2)} = L^V$ or $L^{\pm S}$, respectively.

\begin{proof}[Proof of Theorem \ref{T:D2}]
We start by rewriting the ``AB'' and ``DB'' exchange relations, \eqref{AB-D} and \eqref{DB-D}, in a more convenient form. 
First, using Lemma \ref{L:RD-cross}, we deduce that
\[
R^{\pm\pm(1,1)}_{21}(u,v) = \frac{(R^{\pm\pm(1,1)}_{12}(q^2v,u))^{\uu_2}}{f_q(v,u)} .
\]
Then, repeating the same arguments as in the Proof of Theorem \ref{T:B}, we find the wanted exchange relations for $A^{+(1)}_a(v)$ and $D^{-(1)}_a(v)$ to be
\aln{
& A^{+(1)}_a(v)\,\mr{b}^{+-(1,1)}_{\ta^2_i \tta^2_i}(u^{(2)}_i) \\ & \qq = \mr{b}^{+-(1,1)}_{\ta^2_i \tta^2_i}(u^{(2)}_i) \,\Big(  R^{++(1,1)}_{a\ta^2_i}(q^{2}v,u^{(2)}_i)\, A^{+(1)}_a(v) \Big) \\ & \qq\qu - \frac{v/u^{(2)}_i}{v-u^{(2)}_i}\,\mr{b}^{+-(1,1)}_{\ta^2_i \tta^2_i}(v) \Res{w\to u^{(2)}_i} \Big( R^{++(1,1)}_{a\ta^2_i}(q^{2}w,u^{(2)}_i)\, A^{+(1)}_a(w) \Big) ,
\\[0.5em] 
& D^{-(1)}_a(v)\,\mr{b}^{+-(1,1)}_{\ta^2_i \tta^2_i}(u^{(2)}_i) \\ & \qq = \mr{b}^{+-(1,1)}_{\ta^2_i\tta^2_i}(u^{(2)}_i) \,\Big( f_{q^{-1}}(v,u^{(2)}_i)\, D^{-(1)}_a(v)\,R^{--(1,1)}_{a\tta^2_i}(v,u^{(2)}_i) \Big)  \\
& \qq\qu - \frac{v/u^{(2)}_i}{v-u^{(2)}_i}\, \mr{b}^{+-(1,1)}_{\ta^2_i \tta^2_i}(v) \Res{w\to u^{(2)}_i} \Big( f_{q^{-1}}(w,u^{(2)}_i)\, D^{-(1)}_a(w) \,R^{--(1,1)}_{a\tta^2_i}(w,u^{(2)}_i) \Big) .
}
Consequently, using Lemma \ref{L:BV-symm-D}, relation \eqref{t=t-D2}, and the standard symmetry arguments, we find
\aln{
\tau^{+(2)}(v)\,\Phi^{(2)}(\bm u^{(1,2)})& = \Big( \tr_a A^{+(1)}_a(v) + \tr_a D^{-(1)}_a(v) \Big)\,\mr{B}^{(1)}(\bm u^{(2)})\, \Phi^{(1)}(\bm u^{(1)})
\\
& = \mr{B}^{(1)}(\bm u^{(2)}) \,\Big( \tau^{+(1)}(v;\bm u^{(2)}) \\ & \hspace{2.4cm} + f_{q^{-1}}(v;\bm u^{(2)})\,\mu^{-(1)}(v)\,\tau^{-(1)}(q^{-2}v;\bm u^{(2)}) \Big)\, \Phi^{(1)}(\bm u^{(1)}) 
\\
& \qu - \sum_{j=1}^{m_2} \frac{v/u^{(2)}_j}{v-u^{(2)}_j} \,\mr{B}^{(1)}(\bm u^{(2)}_{\si^{(2)}_j,u^{(2)}_j\to v}) \, \Res{w\to u^{(2)}_j} \Big( \tau^{+(1)}(w;\bm u^{(2)}_{\si^{(2)}_j}) \\ & \hspace{2.4cm} + f_{q^{-1}}(w;\bm u^{(2)})\, \mu^{-(1)}(w)\, \tau^{-(1)}(q^{-2}w;\bm u^{(2)}_{\si^{(2)}_j}) \Big)\, \Phi^{(1)}(\bm u^{(1)}) 
}
which, combined with Proposition \ref{P:D2}, implies the claim for $\tau^{+(2)}(v)$.

We now repeat the same analysis for $\tau^{-(2)}(v)$. 
This time we focus on the ``wanted'' terms only. 
The exchange relations for $A^{-(1)}_a(v)$ and $D^{+(1)}_a(v)$ take the form
\aln{
A^{-(1)}_a(v)\,\mr{b}^{+-(1,1)}_{\ta^2_i \tta^2_i}(u^{(2)}_{i}) &= \mr{b}^{+-(1,1)}_{\ta^2_i \tta^2_i}(u^{(2)}_i) \,\Big( A^{-(1)}_a(v)\,R^{--(1,1)}_{a\tta^2_i}(q^{2}v,u^{(2)}_i) \Big) + UWT , \\[0.5em] 
D^{+(1)}_a(v)\,\mr{b}^{+-(1,1)}_{\ta^2_i \tta^2_i}(u^{(2)}_i) &= \mr{b}^{+-(1,1)}_{\ta^2_i \tta^2_i}(u^{(2)}_i) \,\Big( f_{q^{-1}}(v,u^{(2)}_i)\, R^{++(1,1)}_{a\ta^2_i}(v,u^{(2)}_i)\,D^{+(1)}_a(v) \Big) + UWT 
}
where $UWT$ denote the remaining ``unwanted'' terms. 
Then, repeating the same steps as before, we find 
\aln{
\tau^{-(2)}(v)\,\Phi^{(2)}(\bm u^{(1,2)}) & = \Big( \tr_a A^{-(1)}_a(v) + \tr_a D^{+(1)}_a(v) \Big)\,\mr{B}^{(1)}(\bm u^{(2)})\, \Phi^{(1)}(\bm u^{(1)})
\\
& = \mr{B}^{(1)}(\bm u^{(2)}) \,\Big( \tau^{-(1)}(v;\bm u^{(2)}) \\ & \hspace{2.4cm} + f_{q^{-1}}(v;\bm u^{(2)})\, \mu^{+(1)}(v)\, \tau^{+(1)}(q^{-2}v;\bm u^{(2)}) \Big)\, \Phi^{(1)}(\bm u^{(1)}) \\ & \qu + UWT .
}
Since $\tau^{-(2)}(v)$ and $\tau^{+(2)}(w)$ commute for any values of $v$ and $w$, we do not need to consider the unwanted terms. 
Proposition \ref{P:D2} then yields the eigenvalue of $\tau^{-(2)}(v)$.
\end{proof}


We are finally ready to consider the $U_q(\mf{so}_{2n+2})$-symmetric spin chains with $n\ge3$.
We define the level-$n$ transfer matrices in the usual way,
\[
\tau^{\pm(n)}(v) := \tr_a T^{\pm(n)}_a(v) .
\]
Then for each $1\le k \le n-1$ we define the nested level-$k$ transfer matrices by
\aln{
\tau^{\pm(k)}(v;\bm u^{(k+1\ldots n)}) & := \tr_a T^{\pm(k)}_a(v;\bm u^{(k+1\ldots n)}) ,
\\
\wt\tau^{\,\mp(k)}(v;\bm u^{(k+1\ldots n)}) & := \tr_a \wt{T}^{\mp(k)}_a(v;\bm u^{(k+1\ldots n)}) .
}
Let $\equiv$ denote equality of operators in the nested space $L^{(k)}$. 
Then we have that
\[
\wt\tau^{\,\pm(k)}(v; \bm u^{(k+1\ldots n)}) \equiv \veps^{(k+1)} \mu^{\pm(k)}(v; \bm u^{(k+2)})\,\tau^{\pm(k)}(q^{-2}v;\bm u^{(k+1\ldots n)})
\]
where $\mu^{\pm(k)}(v; \bm u^{(k+2)})$ is given by
\[
\arraycolsep=8pt\def\arraystretch{1.5}
\begin{array}{llll}
& L^{V} & L^{+S} & L^{-S} \\ \hline
\mu^{+(n-1)}(v; \bm u^{(n+1)}) & f_q(v;\bm\rho) & 1 & f_q(v;\bm\rho) \\
\mu^{-(n-1)}(v;\bm u^{(n+1)}) & f_q(v;\bm\rho) & f_q(v;\bm\rho) & 1 \\
\mu^{+(k)}(v; \bm u^{(k+2)}) & f_q(v;\bm u^{(k+2)}) & f_q(v;\bm u^{(k+2)}) & f_q(v;\bm\rho)f_q(v;\bm u^{(k+2)}) \\
\mu^{-(k)}(v; \bm u^{(k+2)}) & f_q(v;\bm u^{(k+2)}) & f_q(v;\bm\rho)f_q(v;\bm u^{(k+2)}) & f_q(v;\bm u^{(k+2)}) 
\end{array}
\]


We extend the definition above to include the $k=0$ case. 
The Theorem below is the second main result of this section.

\begin{thrm} \label{T:D} 
The Bethe vector $\Phi^{(n)}(\bm u^{(1\ldots n)})$ with $n\ge3$ is an eigenvector of $\tau^{\pm(n)}(v)$ with the eigenvalue
\ali{
\La^{\pm(n)}(v;\bm u^{(1\ldots n)}) & := \sum_{\bm i} f_q(q^{\,p_0(\bm i)}v;\bm u^{(\mp s_0(\bm i))}) \el & \qu\; \times \prod_{j=1}^n \veps^{(j)}_{i_j} \Big( \mu^{\pm s_j(\bm i)\,(j-1)}(q^{\,p_j(\bm i)}v;\bm u^{(j+1)})\, f_{q^{-1}}(q^{\,p_j(\bm i)}v;\bm u^{(j)}) \Big)^{\frac12(1+i_j)} \label{EV-Dn}
}
where $p_j(\bm i) = -\sum_{k=j+1}^n(1+i_k)$ and $s_j(\bm i) = \sign\big( (-1)^{n-j-1} \prod_{k=j+1}^n i_k\big)$ provided
\equ{
\Res{v\to u^{(k)}_j} \La^{\pm(n)}(v;\bm u^{(1\ldots n)}) = 0 \qu\text{for}\qu 1\le k\le n, \; 1\le j \le m_k. \label{BE-Dn}
}
\end{thrm}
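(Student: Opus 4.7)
The proof will follow the same nested algebraic Bethe ansatz strategy as Theorem \ref{T:B} and extend Theorem \ref{T:D2} to arbitrary $n\ge3$ by induction on the nesting level. The key observation is equation \eqref{RTTk-D}: at each level $1\le k<n$ the entries of $T^{\pm(k)}_a(v;\bm u^{(k+1\ldots n)})$ acting on $L^{(k)}$ obey the exchange relations of Lemma \ref{L:rels-D}, so they form monodromy matrices for a nested $U_q(\mf{so}_{2k+2})$-symmetric spin chain. The action of the top-level transfer matrix on $\Phi^{(n)}$ can therefore be reduced inductively to the action of nested transfer matrices on $\Phi^{(n-1)}$, bottoming out at the base case supplied by Theorem \ref{T:D1} and Proposition \ref{P:D2}.

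First, I would rewrite the ``AB'' and ``DB'' exchange relations \eqref{AB-D} and \eqref{DB-D} in a nested form. Applying the crossing identities of Lemma \ref{L:RD-cross} converts the ``reversed'' factors $R^{\pm\pm(n-1,n-1)}_{21}(u,v)$ into $\uu_2$-transposes of $R^{\pm\pm(n-1,n-1)}_{12}$ at a shifted spectral parameter, divided by the scalar $h^{\pm((n-1)/2)}(v,u)$. Combined with the contraction property \eqref{XYZ-pm} and the definition \eqref{beta-D} of the creation operators as images under $\chi^{-(n-1)}_{\ta^n_i\tta^n_i}$, this rearranges the exchange relations so that the $R$-matrices on the right-hand side combine precisely with the spinor-spinor $R$-matrices appearing in the nested monodromy matrices \eqref{nmono-A-D} and \eqref{nmono-D-D}.

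Next, following the pattern of Theorems \ref{T:B} and \ref{T:D2}, I would introduce a barred transfer matrix that differs from $\tau^{\pm(n-1)}(v;\bm u^{(n)})$ only by the ordering of its $R$-matrix factors, conjugated by a scalar operator $X^{\pm(n-1)}$ built from spinor-spinor $R$-matrices evaluated at Bethe roots; by Lemma \ref{L:BV-symm-D} this operator acts as a scalar on the nested Bethe vector. Using \eqref{R++} (the structural source of the chirality flip) together with the standard symmetrisation argument, the ``wanted'' part of $\tau^{\pm(n)}(v)\cdot \Phi^{(n)}$ assembles into $\mr{B}^{(n-1)}(\bm u^{(n)})$ acting on
\[
\big(\tau^{\pm(n-1)}(v;\bm u^{(n)}) + \veps^{(n)}\,f_{q^{-1}}(v;\bm u^{(n)})\,\mu^{\mp(n-1)}(v)\,\tau^{\mp(n-1)}(q^{-2}v;\bm u^{(n)})\big)\,\Phi^{(n-1)},
\]
while the ``unwanted'' terms cancel precisely when the Bethe equations \eqref{BE-Dn} are imposed. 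Iterating this argument down the nesting tree, and using \eqref{RTTk-D} at every level, produces the two-term recurrence
\begin{align*}
\La^{\pm(k)}(v;\bm u^{(1\ldots k)};\bm u^{(k+1\ldots n)}) &= \veps^{(k)}_{-1}\,\La^{\pm(k-1)}(v;\bm u^{(1\ldots k-1)};\bm u^{(k\ldots n)}) \\
&\quad + \veps^{(k)}_{+1}\,f_{q^{-1}}(v;\bm u^{(k)})\,\mu^{\mp(k-1)}(v;\bm u^{(k+1)})\,\La^{\mp(k-1)}(q^{-2}v;\bm u^{(1\ldots k-1)};\bm u^{(k\ldots n)}),
\end{align*}
with initial data at $k=1$ given by Proposition \ref{P:D2}. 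Solving it produces the sum over $\bm i\in\{\pm1\}^n$ in \eqref{EV-Dn}: at each level $j$ the choice $i_j=-1$ selects the ``$A$''-summand and $i_j=+1$ the ``$D$''-summand, the cumulative spectral shift equals $q^{p_j(\bm i)}v$, and the alternation of chirality labels is recorded by $s_j(\bm i)=\sign\!\big((-1)^{n-j-1}\prod_{k=j+1}^n i_k\big)$.

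The main obstacle will be the chirality bookkeeping through the nesting tree. Every $i_k=+1$ selection of a ``$D$''-branch triggers a flip of chirality dictated by the $\pm/\mp$ crossover structure of \eqref{nmono-A-D}, \eqref{nmono-D-D} and the spinor $R$-matrix decomposition \eqref{R++}; I would need to verify that the sign $s_j(\bm i)$ faithfully reproduces this alternation from level $n$ down to level $1$, and that the labels on $\mu^{\pm s_j(\bm i)\,(j-1)}$ and $\bm u^{(\mp s_0(\bm i))}$ align with the correct chirality branch at each step. Beyond this chirality accounting, the argument is a direct iteration of the $n=2$ case of Theorem \ref{T:D2}.
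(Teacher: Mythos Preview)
Your proposal is correct and follows essentially the same approach as the paper: rewriting \eqref{AB-D}--\eqref{DB-D} via the crossing identities of Lemma~\ref{L:RD-cross} and the contraction \eqref{XYZ-pm}, introducing the barred transfer matrix conjugated by an $X^{(n-1)}$ built from spinor-spinor $R$-matrices, and iterating to obtain exactly the recurrence you wrote. The only minor difference is that the paper treats $\tau^{-(n)}(v)$ by extracting only the wanted terms and invoking commutativity of $\tau^{-(n)}(v)$ with $\tau^{+(n)}(w)$ to bypass the unwanted-term analysis for the second chirality, rather than checking cancellation directly.
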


The explicit form of the Bethe equations of \eqref{BE-Dn} with $n\ge3$ is
\gat{
\prod_{i=1}^{m_\pm}\frac{q u^{\pm}_j - q^{-1} u^\pm_i}{q^{-1} u^{\pm}_j - q u^\pm_i}
\prod_{i=1}^{m_2}\frac{u^{\pm}_j - u^{(2)}_i}{q u^{\pm}_j - q^{-1} u^{(2)}_i}
=
- \veps^{(1)} \la_\pm(u^{\pm}_j) \,, \label{BE-D-a}
\\[0.25em]
\prod_{i=1}^{m_+}\frac{q^{-1} u^{(2)}_j - q u^+_i}{u^{(2)}_j - u^+_i}
\prod_{i=1}^{m_-}\frac{q^{-1} u^{(2)}_j - q u^-_i}{u^{(2)}_j - u^-_i}
\prod_{i=1}^{m_2}\frac{q u^{(2)}_j - q^{-1}u^{(2)}_i}{q^{-1}u^{(2)}_j - q u^{(2)}_i} 
\prod_{i=1}^{m_3}\frac{u^{(2)}_j - u^{(3)}_i}{q u^{(2)}_j - q^{-1}u^{(3)}_i} 
=
- \frac{\veps^{(2)}}{\veps^{(1)}} \,,
\\[0.25em]
\prod_{i=1}^{m_{k-1}}\frac{q^{-1}u^{(k)}_j - q\,u^{(k-1)}_i}{u^{(k)}_j - u^{(k-1)}_i} 
\prod_{i=1}^{m_k}\frac{q u^{(k)}_j - q^{-1}u^{(k)}_i}{q^{-1}u^{(k)}_j - q u^{(k)}_i} 
\prod_{i=1}^{m_{k+1}}\frac{u^{(k)}_j - u^{(k+1)}_i}{q u^{(k)}_j - q^{-1}u^{(k+1)}_i} 
=
- \frac{\veps^{(k)}}{\veps^{(k-1)}} \,,
\\[0.25em]
\prod_{i=1}^{m_{n-1}}\frac{q^{-1}u^{(n)}_j - q u^{(n-1)}_i}{u^{(n)}_j - u^{(n-1)}_i} 
\prod_{i=1}^{m_n}\frac{q u^{(n)}_j - q^{-1}u^{(n)}_i}{q^{-1}u^{(n)}_j - q u^{(n)}_i} 
=
- \frac{\veps^{(n)}}{\veps^{(n-1)}} \, \la_n(u^{(n)}_j) \,, \label{BE-D-d}
}
where $\la_n$ is given by $\la_n(v) = f_q(v;\bm\rho)$ or $1$ when $L^{(n)}=L^V$ or $L^{\pm S}$, respectively.

\begin{proof}[Proof of Theorem \ref{T:D}]  
The proof is very similar to that of Theorem \ref{T:B}. 
We begin by focusing on $\tau^{+(n)}(v)$ and rewriting the corresponding ``AB'' and ``DB'' exchange relations in a more convenient form. 
From Lemma \ref{L:RD-cross} we deduce that
\[
R^{\pm+(n-1,n-1)}_{21}(u,v) = \frac{(R^{\pm[+](n-1,n-1)}_{12}(q^{2n-2}v,u))^{\uu_2}}{h^{\pm((n-1)/2)}(v,u)} 
\]
where $[+]=+/-$ if $n\!-\!1$ is odd/even.
Combining these identities with \eqref{XYZ-pm}, \eqref{AB-D}, \eqref{DB-D} and \eqref{beta-D} yields the wanted ``AB'' and ``DB'' exchange relations:
\ali{
& \hspace{-1em}  A^{+(n-1)}_a(v)\,\mr{b}^{[+]-(n-1,n-1)}_{\ta^{n}_i \tta^{n}_i}(u^{(n)}_i) \el
& \qu = \mr{b}^{[+]-(n-1,n-1)}_{\ta^{n}_i \tta^{n}_i}(u^{(n)}_i)\, \Bigg( \frac{f_q(v,u^{(n)}_i)}{h^{+((n-1)/2)}(v,u^{(n)}_i)} \el & \qq\qq \times R^{+[+](n-1,n-1)}_{a\ta^{n}_i}(q^{2n-2}v,u^{(n)}_i)\, A^{+(n-1)}_a(v)\,R^{+-(n-1,n-1)}_{a\tta^{n}_i}(q^2v,u^{(n)}_i) \Bigg) \el
& \qq - \frac{v/u^{(n)}_i}{v-u^{(n)}_i} \; \mr{b}^{[+]-(n-1,n-1)}_{\ta^{n}_i \tta^{n}_i}(v) \Res{w\to u^{(n)}_i} \Bigg( \frac{f_q(w,u^{(n)}_i)}{h^{+((n-1)/2)}(w,u^{(n)}_i)} 
 \el & \qq\qq \times R^{+[+](n-1,n-1)}_{a\ta^{n}_i}(q^{2n-2}w,u^{(n)}_i)\, A^{+(n-1)}_a(w) \,R^{+-(n-1,n-1)}_{a\tta^{n}_i}(q^2w,u^{(n)}_i) \Bigg) , \label{Ab-D}
}
\ali{
& \hspace{-1em} D^{-(n-1)}_a(v)\,\mr{b}^{[+]-(n-1)}_{\ta^{n}_i \tta^{n}_i}(u^{(n)}_i) \el
& \qu = \mr{b}^{[+]-(n-1,n-1)}_{\ta^{n}_i \tta^{n}_i}(u^{(n)}_i) \, \Bigg( \frac{f_{q^{-1}}(v,u^{(n)}_i)}{h^{-((n-1)/2)}(q^{-2}v,u^{(n)}_i)} \el & \qq\qq \times R^{-[+](n-1,n-1)}_{a\ta^{n}_i}(q^{2n-4}v,u^{(n)}_i)\, D^{-(n-1)}_a(v)\,R^{--(n-1,n-1)}_{a\tta^{n}_i}(v,u^{(n)}_i) \Bigg)  \el
& \qq - \frac{v/u^{(n)}_i}{v-u^{(n)}_i} \; \mr{b}^{[+]-(n-1,n-1)}_{\ta^{n}_i \tta^{n}_i}(v) \Res{w\to u^{(n)}_i} \Bigg( \frac{f_{q^{-1}}(w,u^{(n)}_i)}{h^{-((n-1)/2)}(q^{-2}w,u^{(n)}_i)}  \el & \qq\qq \times R^{-[+](n-1,n-1)}_{a\ta^{n}_i}(q^{2n-4}w,u^{(n)}_i) \, D^{-(n-1)}_a(w) \,R^{--(n-1,n-1)}_{a\tta^{n}_i}(w,u^{(n)}_i) \Bigg) . \label{Db-D}
}
Inspired by the exchange relations above we define barred transfer matrices
\aln{
\ol\tau^{\, + (n-1)}(v;\bm u^{(n)}) & := \frac{f_q(v; \bm u^{(n)})}{h^{+((n-1)/2)}(v;\bm u^{(n)})}  \tr_a \Bigg( A^{+(n-1)}_a(v) \prod_{i=1}^{m_{n}} R^{+-(n-1,n-1)}_{a\tta^{n}_i}(q^2 v,u^{(n)}_i) \\ & \hspace{5.6cm} \times \prod_{i=m_{n}}^1 R^{+[+](n-1,n-1)}_{a\ta^{n}_i}(q^{2n-2} v,u^{(n)}_i) \Bigg) ,
\\[0.5em]
\ol\tau^{\, - (n-1)}(v;\bm u^{(n)}) & := \frac{f_{q^{-1}}(v; \bm u^{(n)})}{h^{-((n-1)/2)}(q^{-2}v;\bm u^{(n)})} \tr_a \Bigg( D^{-(n-1)}_a(v) \prod_{i=1}^{m_{n}} R^{--(n-1,n-1)}_{a\tta^{n}_i}(v,u^{(n)}_i) \\ & \hspace{6.24cm} \times \prod_{i=m_{n}}^1 R^{-[+](n-1,n-1)}_{a\ta^{n}_i}(q^{2n-4} v,u^{(n)}_i) \Bigg) ,
}
which differ from $\tau^{\pm(n-1)}(v;\bm u^{(n)})$ in \eqref{nmono-A-D} and \eqref{nmono-D-D} by the ordering of $R$-matrices. The ordering can be amended with the help of operator $X^{(n-1)} := \prod_{i-1}^{m_n-1} X^{(n-1)}_i$ where
\[
X^{(n-1)}_i := \prod_{j=i+1}^{m_n} R^{[+,+](n-1,n-1)}_{\ta^n_j\ta^n_i}(u^{(n)}_j,u^{(n)}_i) \prod_{j=m_n}^{i+1} R^{-[+](n-1,n-1)}_{\tta^n_j\ta^n_i}(q^{2n-4}u^{(n)}_j,u^{(n)}_i) .
\]
In particular, $\ol\tau^{\,\pm(n-1)}(v;\bm u^{(n)}) = X^{(n-1)} \tau^{\pm(n-1)}(v;\bm u^{(n)})\,(X^{(n-1)})^{-1}$ and each $X^{(n-1)}_i$ acts as a scalar operator on $\Phi^{(n-1)}(\bm u^{(1\ldots n-1)};\bm u^{(n)})$. Therefore 
\aln{
\tau^{+(n)}(v)\,\Phi^{(n)}(\bm u^{(1\ldots n)}) & = \mr{B}^{(n-1)}(\bm u^{(n)}) \,X^{(n-1)}\, \tau^{+(n)}(v;\bm u^{(n)}) \,(X^{(n-1)})^{-1}\, \Phi^{(n-1)}(\bm u^{(1\ldots n-1)};\bm u^{(n)})
\\
& \qu - \sum_{j=1}^{m_n} \frac{v/u^{(n)}_j}{v-u^{(n)}_j} \, \mr{B}^{(n-1)}(\bm u^{(n)}_{\si^{(n)}_j,u^{(n)}_j\to v}) \, X^{(n-1)} \\ & \qq\qq\times\Res{w\to u^{(n)}_j} \tau^{+(n)}(w;\bm u^{(n)}_{\si^{(n)}_j}) \,(X^{(n-1)})^{-1}\, \Phi^{(n-1)}(\bm u^{(1\ldots n-1)};\bm u^{(n)}_{\si^{(n)}_j})
& 
}
where 
\aln{
\tau^{+(n)}(v;\bm u^{(n)}) & := \tau^{+(n-1)}(v; \bm u^{(n)}) + \veps^{(n)} f_{q^{-1}}(v,\bm u^{(n)})\,\mu^{-(n-1)}(v; \bm \rho)\,\tau^{-(n-1)}(q^{-2}v; \bm u^{(n)}).
}
We now repeat the same analysis for $\tau^{-(n)}(v)$. 
This time we focus on the ``wanted'' terms only. The relevant exchange relations are now
\aln{
& A^{-(n-1)}_a(v)\,\mr{b}^{[+]-(n-1)}_{\ta^{n}_i \tta^{n}_i}(u^{(n)}_i) \\
& \qq = \mr{b}^{[+]-(n-1)}_{\ta^{n}_i \tta^{n}_i}(u^{(n)}_i) \,\Bigg( \frac{1}{h^{-((n-1)/2)}(v,u^{(n)}_i)}  \\ & \qq\qq \times R^{-[+](n-1,n-1)}_{a\ta^{n}_i}(q^{2n-2}v,u^{(n)}_i)\, A^{-(n-1)}_a(v)\,R^{--(n-1,n-1)}_{a\tta^{n}_i}(q^2v,u^{(n)}_i) \Bigg) + UWT ,
\\[0.5em]
& D^{+(n-1)}_a(v)\,\mr{b}^{[+]-(n-1)}_{\ta^{n}_i \tta^{n}_i}(u^{(n)}_i)  \\ & \qq = \mr{b}^{[+]-(n-1)}_{\ta^{n}_i \tta^{n}_i}(u^{(n)}_i) \,\Bigg( \frac{1}{h^{+((n-1)/2)}(q^{-2}v,u^{(n)}_i)}  \\ & \qq\qq \times R^{+[+](n-1,n-1)}_{a\ta^{n}_i}(q^{2n-4}v,u^{(n)}_i)\, D^{+(n-1)}_a(v)\,R^{+-(n-1,n-1)}_{a\tta^{n}_i}(v,u^{(n)}_i) \Bigg) + UWT .
}
Repeating the same steps as above we obtain 
\ali{
\tau^{-(n)}(v)\,\Phi^{(n)}(\bm u^{(1\ldots n)}) &= \mr{B}^{(n-1)}(\bm u^{(n)}) \,X^{(n-1)}\, \tau^{-(n)}(v;\bm u^{(n)}) \,(X^{(n-1)})^{-1}\el & \hspace{4cm}\times \Phi^{(n-1)}(\bm u^{(1\ldots n-1)};\bm u^{(n)}) + UWT \label{tau-UWT-D}
}
where 
\aln{
\tau^{-(n)}(v;\bm u^{(n)}) & := \tau^{-(n-1)}(v; \bm u^{(n)}) + \veps^{(n)} f_{q^{-1}}(v,\bm u^{(n)})\,\mu^{+(n-1)}(v; \bm \rho)\,\tau^{+(n-1)}(q^{-2}v; \bm u^{(n)}).
}
Since $\tau^{-(n)}(v)$ and $\tau^{+(n)}(w)$ commute for any values of $v$ and $w$, we do not need to consider the unwanted terms in \eqref{tau-UWT-D}. 
It remains to repeat the same analysis down the nesting by taking into account \eqref{RTTk-D} together with the fact that $\Phi^{(k)}(\bm u^{(1\ldots k)};\bm u^{(k+1\ldots n)}) \in L^{(k)}$, and use Proposition \ref{P:D2} (with slight amendments). 
This gives a recurrence relation, for $2\le k \le n$,
\aln{
\La^{\pm(k)}(v;\bm u^{(1\ldots k)}; \bm u^{(k+1\ldots n)}) & := \veps^{(k)}_{-1}\La^{\pm(k-1)}(v;\bm u^{(1\ldots k-1)}; \bm u^{(k\ldots n)}) \el & \qu\; + \veps^{(k)}_{+1} f_{q^{-1}}(v,\bm u^{(k)})\,\mu^{\mp(k-1)}(v;\bm u^{(k+1)}) \\ & \qu\;\qu \times \La^{\mp(k-1)}(q^{-2}v;\bm u^{(1\ldots k-1)}; \bm u^{(k\ldots n)}) 
}
with $\La^{\pm(1)}$ given by \eqref{EV-D2-L1}. 
Upon solving this recurrence relation we recover the claim of the Theorem.
\end{proof}

\begin{rmk} \label{R:Bethe-Dynkin} 
Let $a_{ij}$ denote matrix entries of a connected Dynkin diagram of type $B_n$ or $D_n$ and let $I$ denote the set of its nodes. 
Then put $d_\pm = d_2 = \ldots = d_{n} = 1$ for $D_{n+1}$ and  $2d_1 = d_2 = \ldots = d_{n} = 2$~for~$B_n$.
Upon substituting $u^{(k)}_j \to q^{\wt d_k} z^{(k)}_j$, where $\wt d_k = \sum_{i=1}^k d_i$ with $d_1 = d_{\pm}$ for $D_{n+1}$, Bethe equations \eqref{BE-B-a}--\eqref{BE-B-c} and \eqref{BE-D-a}--\eqref{BE-D-d} can be written as
\[
\prod_{l \in I} \prod_{i=1}^{m_l} \frac{q^{d_k a_{kl}} z^{(k)}_j- z^{(l)}_i}{z^{(k)}_j - q^{d_k a_{kl}} z^{(l)}_i} = - \frac{\veps^{(k)}}{\veps^{(k-1)}} \, \la_k(q^{\wt d_k} z^{(k)}_j)
\]
for all $k\in I$ and all allowed $j$. Here $\veps^{(0)}=1$ and $\la_k(q^{\wt d_k} z^{(k)}_j)=1$ when $k\notin\{ \pm, 1, n \}$.
\end{rmk}


\section{Conclusions and Outlook}

The results of this paper are two-fold. First, we proposed a new construction of $q$-deformed $\mfso_{2n+1}$- and $\mfso_{2n}$-invariant spinor-vector and spinor-spinor $R$-matrices in terms of supermatrices and found explicit recurrence relations. We believe these results will be of interest on their own right. For instance, this opens a door to study spectral properties of open spin chains with spinor-type transfer matrices thus complementing the results obtained by Artz, Mezincescu and Nepomechie in \cite{AMN95}.
Second, we solved the long-standing problem of diagonalizing transfer matrices that obey quadratic relations defined by the aforementioned $q$-deformed spinor-spinor $R$-matrices. The corresponding Bethe ansatz equations were already known since they can be determined from the Cartan datum only. The constructed Bethe vectors and the corresponding eigenvalues are new results. A natural next step is to find recursion relations for these Bethe vectors and investigate scalar products in the spirit of the approach put forward by Hutsalyuk et.\ al.\ in \cite{HLPRS18}. Moreover, it would be interesting to construct $q$-deformed spinor-oscillator $R$-matrices and investigate the spinor-type QQ-system following the steps of Ferrando, Frassek and Kazakov in \cite{FFK20}.
Lastly, we believe this work might help to better undertstand the Bethe ansazt for fishnets and fishchains emerging in the AdS/CFT integrability framework, see \cite{GuK16, BCFGT17, BFKZ20, EV21} and references therein.

\section*{Acknowledgements} 

The author thanks Rouven Frassek, Allan Gerrard and Eric Ragoucy for useful discussions and comments. 

\paragraph{Funding information.}

This research was supported by the European Social Fund under Grant No.\ 09.3.3-LMT-K-712-01-0051.


\appendix

\nc{\qlim}{\underset{\hbar\to 0}{\longrightarrow}}

\section{The semi-classical limit} \label{app:A}


\subsection{$U_{q^2}(\mfso_{2n+1})$-symmetric spin chains}

The semi-classical limit is obtained by setting $v = \exp(2 y \hbar)$, $u^{(k)}_j = \exp(2 x^{(k)}_j \hbar)$, $q=\exp(\hbar/2)$, and carefully taking the $\hbar \to 0$ limit.
Introduce a rational function
\[
f_k(y,x) = \frac{y-x+k}{y-x} \,.
\]
The eigenvalue \eqref{EV-B} then becomes  
\aln{
\La^{(n)}(y;\bm x^{(1\ldots n)}) & := \sum_{\bm i} f_{1/2}(y+p_0(\bm i);\bm x^{(1)}) \\ & \qu\; \times \prod_{j=1}^n \veps^{(j)}_{i_j} \Big(\mu^{(j-1)}(y+p_j(\bm i); \bm x^{(j+1)})\, f_{-1}(y+p_j(\bm i);\bm x^{(j)}) \Big)^{\frac12(1+i_j)} \label{EV-B}
}
where $p_{j}(\bm i) = -\sum_{k=j+1}^n(1+i_k)$ and $\mu^{(k)}(y; \bm x^{(k+2)})$ are given by
\[
\arraycolsep=8pt\def\arraystretch{1.5}
\begin{array}{lll}
 & L^{V} & L^{S} \\ \hline
\mu^{(n-1)}(y; \bm x^{(n+1)}) & f_{1}(y;\bm\rho) & f_{1/2}(y;\bm\rho) \\
\mu^{(k)}(y; \bm x^{(k+2)}) & f_{1}(y;\bm x^{(k+2)}) & f_{1/2}(y;\bm\rho)\,f_{1}(y;\bm x^{(k+2)}) 
\end{array}
\]
The Bethe equations \eqref{BE-B-a}--\eqref{BE-B-c} become
\gan{
\prod_{i=1}^{m_1} \frac{x^{(1)}_j - x^{(1)}_i+\tfrac12}{x^{(1)}_j - x^{(1)}_i-\tfrac12}
\prod_{i=1}^{m_2} \frac{x^{(1)}_j - x^{(2)}_i}{x^{(1)}_j - x^{(2)}_i + 1} = - \veps^{(1)}\, \la_1(x^{(1)}_j) \,,
\\[.25em]
\prod_{i=1}^{m_{k-1}} \frac{x^{(k)}_j - x^{(k-1)}_i-1}{x^{(k)}_j - x^{(k-1)}_i} \prod_{i=1}^{m_k} \frac{x^{(k)}_j - x^{(k)}_i+1}{x^{(k)}_j - x^{(k)}_i-1}
\prod_{i=1}^{m_{k+1}} \frac{x^{(k)}_j - x^{(k+1)}_i}{x^{(k)}_j - x^{(k+1)}_i+1} = - \frac{\veps^{(k)}}{\veps^{(k-1)}} \,, 
\\[.25em]
\prod_{i=1}^{m_{n-1}} \frac{x^{(n)}_j - x^{(n-1)}_i-1}{x^{(n)}_j - x^{(n-1)}_i} \prod_{i=1}^{m_n} \frac{x^{(n)}_j - x^{(n)}_i + 1}{x^{(n)}_j - x^{(n)}_i - 1} = - \frac{\veps^{(n)}}{\veps^{(n-1)}} \, \la_n(x^{(n)}_j) ,
}
where $\la_1(y)=1$ or $f_{1/2}(y;\bm\rho)$ and $\la_n(y)=f_{1}(y;\bm\rho)$ or $1$ when $L^{(n)}=L^V$ or $L^S$, respectively.


\subsection{$U_{q}(\mfso_{6})$-symmetric spin chain} 

The semi-classical limit is obtained in the same way as before, except that we set $q=\exp(\hbar)$. The eigenvalue \eqref{EV-D2} becomes
\aln{
\La^{\pm(2)}(y;\bm x^{(1,2)}) & := \veps^{(2)}_{-1} \Big( \veps^{(1)}_{-1}\, f_1(y; \bm x^{\pm})  + \veps^{(1)}_{+1} f_1(y; \bm x^{(2)}) \, f_{-1}(y; \bm x^{\pm}) \, \la_\pm(y) \Big) \el 
& \qu\; + \veps^{(2)}_{+1} \, \mu^{\mp(1)}(y) \, \Big(  \veps^{(1)}_{-1}\,  f_{-1}(y;\bm x^{(2)})\, f_1(y\!-\!1; \bm x^{\mp}) + \veps^{(1)}_{+1} \, f_{-1}(y\!-\!1; \bm x^{\mp}) \, \la_\mp(y\!-\!1) \Big) 
}
where $\mu^{\pm(1)}(y)$ and $\la_{\pm}(y)$ are given by
\[
\arraycolsep=8pt\def\arraystretch{1.5}
\begin{array}{llll}
& L^{V} & L^{+S} & L^{-S} \\ \hline
\mu^{+(1)}(y) & f_1(y;\bm\rho) & 1 & f_1(y;\bm\rho) \\
\mu^{-(1)}(y) & f_1(y;\bm\rho) & f_1(y;\bm\rho) & 1 \\
\la_{+}(y) & 1 & f_1(y;\bm\rho) & 1 \\
\la_{-}(y) & 1 & 1 & f_1(y;\bm\rho)
\end{array}
\]
The Bethe equations \eqref{BE-D2-a}--\eqref{BE-D2-b} become
\gan{
\prod_{i=1}^{m_\pm}\frac{x^{\pm}_j - x^\pm_i + 1}{x^{\pm}_j - x^\pm_i - 1}
\prod_{i=1}^{m_2}\frac{x^{\pm}_j - x^{(2)}_i}{x^{\pm}_j - x^{(2)}_i + 1}
=
- \veps^{(1)} \la_\pm(x^{\pm}_j) \,,
\\[.25em]
\prod_{i=1}^{m_+}\frac{x^{(2)}_j - x^+_i - 1}{x^{(2)}_j - x^+_i}
\prod_{i=1}^{m_-}\frac{x^{(2)}_j - x^-_i - 1}{x^{(2)}_j - x^-_i}
\prod_{i=1}^{m_2}\frac{x^{(2)}_j - x^{(2)}_i + 1}{x^{(2)}_j - x^{(2)}_i - 1} 
=
- \frac{\veps^{(2)}}{\veps^{(1)}} \, \la_2(x^{(2)}_j) \,,
}
where $\la_2$ is given by $\la_2(y) = f_1(y;\bm\rho)$ or $1$ when $L^{(2)} = L^V$ or $L^{\pm S}$, respectively.


\subsection{$U_{q}(\mfso_{2n+2})$-symmetric spin chains}

By the same arguments as above, the eigenvalue \eqref{EV-Dn} becomes
\aln{
\La^{\pm(n)}(y;\bm x^{(1\ldots n)}) & := \sum_{\bm i} f_1(y+p_0(\bm i);\bm u^{(\mp s_0(\bm i))}) \el & \qu\; \times \prod_{j=1}^n \veps^{(j)}_{i_j} \Big( \mu^{\pm s_j(\bm i)\,(j-1)}(y+p_j(\bm i);\bm u^{(j+1)})\, f_{{-1}}(y+p_j(\bm i);\bm u^{(j)}) \Big)^{\frac12(1+i_j)} 
}
where $p_j(\bm i) = -\frac12\sum_{k=j+1}^n(1+i_k)$, $s_j(\bm i) = \sign\big( (-1)^{n-j-1} \prod_{k=j+1}^n i_k\big)$ and $\mu^{\pm(k)}(y; \bm x^{(k+2)})$ are given by
\[
\arraycolsep=8pt\def\arraystretch{1.5}
\begin{array}{llll}
& L^{V} & L^{+S} & L^{-S} \\ \hline
\mu^{+(n-1)}(y; \bm\rho) & f_1(y;\bm\rho) & 1 & f_1(y;\bm\rho) \\
\mu^{-(n-1)}(y;\bm\rho) & f_1(y;\bm\rho) & f_1(y;\bm\rho) & 1 \\
\mu^{+(k)}(y; \bm x^{(k+2)}) & f_1(y;\bm x^{(k+2)}) & f_1(y;\bm x^{(k+2)}) & f_1(y;\bm\rho)f_1(y;\bm x^{(k+2)}) \\
\mu^{-(k)}(y; \bm x^{(k+2)}) & f_1(y;\bm u^{(k+2)}) & f_1(y;\bm\rho)f_1(y;\bm x^{(k+2)}) & f_1(y;\bm x^{(k+2)}) 
\end{array}
\]
The Bethe equations \eqref{BE-D-a}--\eqref{BE-D-d} become
\gan{
\prod_{i=1}^{m_\pm}\frac{x^{\pm}_j - x^\pm_i+1}{x^{\pm}_j - x^\pm_i - 1}
\prod_{i=1}^{m_2}\frac{x^{\pm}_j - x^{(2)}_i}{x^{\pm}_j - x^{(2)}_i + 1}
=
- \veps^{(1)} \la_\pm(x^{\pm}_j) \,, 
\\[.25em]
\prod_{i=1}^{m_+}\frac{x^{(2)}_j - x^+_i - 1}{x^{(2)}_j - x^+_i}
\prod_{i=1}^{m_-}\frac{x^{(2)}_j - x^-_i - 1}{x^{(2)}_j - x^-_i}
\prod_{i=1}^{m_2}\frac{x^{(2)}_j - x^{(2)}_i + 1}{x^{(2)}_j - x^{(2)}_i - 1} 
\prod_{i=1}^{m_3}\frac{x^{(2)}_j - x^{(3)}_i}{x^{(2)}_j - x^{(3)}_i + 1} 
=
- \frac{\veps^{(2)}}{\veps^{(1)}} \,,
\\[.25em]
\prod_{i=1}^{m_{k-1}}\frac{x^{(k)}_j - x^{(k-1)}_i - 1}{x^{(k)}_j - x^{(k-1)}_i} 
\prod_{i=1}^{m_k}\frac{x^{(k)}_j - x^{(k)}_i + 1}{x^{(k)}_j - x^{(k)}_i - 1} 
\prod_{i=1}^{m_{k+1}}\frac{x^{(k)}_j - x^{(k+1)}_i}{x^{(k)}_j - x^{(k+1)}_i + 1} 
=
- \frac{\veps^{(k)}}{\veps^{(k-1)}} \,,
\\[.25em]
\prod_{i=1}^{m_{n-1}}\frac{x^{(n)}_j - x^{(n-1)}_i - 1}{x^{(n)}_j - x^{(n-1)}_i} 
\prod_{i=1}^{m_n}\frac{x^{(n)}_j - x^{(n)}_i + 1}{x^{(n)}_j - x^{(n)}_i - 1} 
=
- \frac{\veps^{(n)}}{\veps^{(n-1)}} \, \la_n(x^{(n)}_j) \,, 
}
where $\la_n$ is given by $\la_n(v) = f_1(y;\bm\rho)$ or $1$ when $L^{(n)}=L^V$ or $L^{\pm S}$, respectively.





\nolinenumbers

\end{document}